\theoremstyle{plain}
\newtheorem{theorem}{Theorem}[section]
\newtheorem{proposition}[theorem]{Proposition}
\newtheorem{lemma}[theorem]{Lemma}
\theoremstyle{definition}
\newtheorem{assumption}[theorem]{Assumption}
\theoremstyle{remark}
\newtheorem{remark}[theorem]{Remark}
\def\a{{\bf a}}
\def\b{{\bf b}}
\def\c{{\bf c}}
\def\d{{\bf d}}
\def\e{{\bf e}}
\def\h{{\bf h}}
\def\I{{\bf I}}
\def\m{{\bf m}}
\def\N{{\bf N}}
\def\R{{\bf R}}
\def\s{{\bf s}}
\def\u{{\bf u}}
\def\V{{\bf V}}
\def\v{{\bf v}}
\def\w{{\bf w}}
\def\X{{\bf X}}
\def\x{{\bf x}}
\def\Y{{\bf Y}}
\def\Z{{\bf Z}}
\def\ba{{\boldsymbol\alpha}}
\def\bb{{\boldsymbol\beta}}
\def\bx{{\boldsymbol\xi}}
\def\bt{{\boldsymbol\theta}}
\def\bphi{{\boldsymbol\phi}}
\def\bta{{\boldsymbol\eta}}
\def\bSigma{{\boldsymbol\Sigma}}
\def\bOmega{{\boldsymbol\Omega}}
\def\bLambda{{\boldsymbol\Lambda}}
\def\bvarphi{{\boldsymbol\varphi}}
\def\bGamma{{\boldsymbol\Gamma}}
\def\calT{{\cal T}}
\def\calL{{\cal L}}
\def\calU{{\cal U}}
\def\calH{{\cal H}}
\def\pr{\hbox{pr}}
\def\E{\hbox{E}}
\def\var{\hbox{var}}
\def\0{{\bf 0}}
\def\1{{\bf 1}}
\def\trans{^{\rm T}}
\def\argmin{\mbox{argmin}}
\def\wh{\widehat}
\def\wt{\widetilde}
\def\eff{_{\mbox{\tiny EIF}}}
\def\log{{\rm log}}
\def\bse{\begin{eqnarray*}}
\def\ese{\end{eqnarray*}}
\def\be{\begin{eqnarray}}
\def\ee{\end{eqnarray}}
\def\bsq{\begin{equation*}}
\def\esq{\end{equation*}}
\def\bq{\begin{equation}}
\def\eq{\end{equation}}
\def\boxit#1{\vbox{\hrule\hbox{\vrule\kern6pt\vbox{\kern6pt#1\kern6pt}\kern6pt\vrule}\hrule}}
\newcommand\Xiudi[1]{{\color{ForestGreen}Xiudi: #1}}
\icmltitlerunning{Efficient Inference by Incorporating ACPs}
\begin{document}

\twocolumn[
\icmltitle{
Towards the Efficient Inference by Incorporating Automated Computational Phenotypes under Covariate Shift
}





\begin{icmlauthorlist}
\icmlauthor{Chao Ying}{wiscstat,wiscbmi}
\icmlauthor{Jun Jin}{henryford}
\icmlauthor{Yi Guo}{ufldata}
\icmlauthor{Xiudi Li}{berkeley}
\icmlauthor{Muxuan Liang}{ufl}
\icmlauthor{Jiwei Zhao}{wiscstat,wiscbmi}
\end{icmlauthorlist}

\icmlaffiliation{wiscstat}{Department of Statistics, University of Wisconsin-Madison, Madison, Wisconsin, USA}
\icmlaffiliation{wiscbmi}{Department of Biostatistics \& Medical Informatics, University of Wisconsin-Madison, Madison, Wisconsin, USA}
\icmlaffiliation{henryford}{Henry Ford Health, Detroit, Michigan, USA}
\icmlaffiliation{ufl}{Department of Biostatistics, University of Florida, Gainesville, Florida, USA}
\icmlaffiliation{berkeley}{Biostatistics Division, University of California, Berkeley, California, USA}
\icmlaffiliation{ufldata}{Department of Health Outcomes \& Biomedical Informatics, University of Florida, Gainesville, Florida, USA}

\icmlcorrespondingauthor{Jiwei Zhao}{jiwei.zhao@wisc.edu}

\icmlkeywords{Machine Learning, ICML}

\vskip 0.3in
]



\printAffiliationsAndNotice{}  

\begin{abstract}
Collecting gold-standard phenotype data via manual extraction is typically labor-intensive and slow, whereas automated computational phenotypes (ACPs) offer a systematic and much faster alternative.
However, simply replacing the gold-standard with ACPs, without acknowledging their differences, could lead to biased results and misleading conclusions.
Motivated by the complexity of incorporating ACPs while maintaining the validity of downstream analyses, in this paper, we consider a semi-supervised learning setting that consists of both labeled data (with gold-standard) and unlabeled data (without gold-standard), under the covariate shift framework.
We develop doubly robust and semiparametrically efficient estimators that leverage ACPs for general target parameters in the unlabeled and combined populations. In addition,
we carefully analyze the efficiency gains achieved by incorporating ACPs, comparing scenarios with and without their inclusion.
Notably, we identify that ACPs for the unlabeled data, instead of for the labeled data, drive the enhanced efficiency gains. To validate our theoretical findings, we conduct comprehensive synthetic experiments and apply our method to multiple real-world datasets, confirming the practical advantages of our approach.
\hfill{\texttt{Code}: \href{https://github.com/brucejunjin/ICML2025-ACPCS}{\faGithub}}
\end{abstract}

\section{Introduction}\label{sec:intro}

Automated computational phenotype (ACP) refers to the process of using advanced algorithms and models, including but not limited to, machine learning, natural language processing (NLP), large language model (LLM) and generative AI, to automatically compute and define phenotypes from complex data such as electronic health records (EHRs).
Though obtaining gold-standard phenotype data via manual extraction is labor-intensive and slow, ACP allows researchers to obtain phenotypes systemically and quickly, scaling their efforts in precision medicine and supporting the broader goal of turning raw data into actionable insights for research and healthcare improvements.



Despite these benefits, simply replacing gold-standard data with ACPs in downstream analyses introduces new challenges. As computational phenotyping algorithms are often trained to minimize prediction error such as the mean squared error (MSE), ACPs may contain a non-negligible bias for important downstream tasks such as statistical inference. Therefore, the indiscriminate use of ACPs, without acknowledging their distinction from gold-standard phenotype data, can lead to biased results and misleading conclusions.
In this work, instead of replacing gold-standard phenotype data, we investigate the benefits of appropriately incorporating these ACPs, in order to enhance the estimation efficiency and to achieve the more powerful statistical inference.

In many applications, gold-standard phenotypes (labels) are only available for a small subset of individuals due to the high cost of manual extraction, while unlabeled data are more readily accessible. 
Meanwhile, the selection of labeled individuals is often based on baseline characteristics rather than being purely random.
This creates a semi-supervised learning setting, where the covariate distribution differs between labeled and unlabeled data.
In this paper, we aim to integrate ACPs into classical semi-supervised learning frameworks to enable efficient estimation and inference.

Our motivating study is an analysis on diabetes among children and adolescents from the UF (University of Florida) health system \citep{li2025developing}, where diabetes status is adjudicated by medical experts through manual chart reviews for a subset of patients selected via stratified random sampling, while ACPs are available for all patients in the EHR through a previously validated decision-tree-based algorithm. The stratified sampling introduces potential covariate shift between the chart-review population and the general EHR population. 
Our work focuses on how to incorporate these ACPs with chart review data to improve parameter estimation and inference in a prediction model under possible covariate shifts.

\subsection{Related Work}

\textbf{Semi-supervised learning and inference}.
Semi-supervised learning has been popular in both machine learning and statistics in the past several decades \citep{zhu2005semi,chapelle2009semi}.
Based on different assumptions, computational algorithms and estimation methods have been proposed in classification \cite{cluster, wang2022usb}, nonparametric regression \cite{wasserman}, and semi-supervised regression \cite{kostopoulos2018semi}, to understand the benefits of the abundant unlabeled data.

In particular, there are considerable progresses in the past several years on how to make use of the unlabeled data
for parameter estimation, either with a faster convergence rate or with a smaller asymptotic variance.
For example, \citet{chakrabortty2018} and \citet{azriel} proposed estimators that are more efficient than the ordinary least square (OLS) that uses labeled data only, under the assumption-lean regression framework \citep{Buja,Assumption}.
Similar conclusions could also be found under the general M-estimation framework \citep{song2023general}.
The idea was also extended to the high-dimensional setting where the number of features is allowed to be greater than the sample size \cite{zhang2019, 2019jelena, caiexplained, deng2024optimal}, as well as in the situation without the sparsity assumption \citep{livne2022improved, hou2023surrogate}.

\textbf{Distribution shift}.
Distribution shift \cite{quinonero2008dataset} refers to the phenomenon that the joint distributions between the training and testing data are different, or, in our context, between the labeled and unlabeled data.
Therefore, the knowledge learnt from the labeled data may no longer be appropriate to be directly used in the unlabeled data or in the combined data.
This also motivates the study of unsupervised domain adaptation \cite{kouw2021}, aiming to address the distributional shift between the labeled and unlabeled data domains.

Two dominating types of distribution shifts are covariate shift (the marginal distribution of the feature changes) and label shift (the marginal distribution of the label changes; see, e.g., \citet{storkey2009training, zhang2013domain, du2014semi, iyer2014maximum, nguyen2016continuous, tasche2017fisher, lipton2018detecting, garg2020unified, tian2023elsa, kimretasa, lee2024doubly}).
They are suitable assumptions under different contexts.
Covariate shift, the focus in this paper, aligns with the causal learning setting \citep{scholkopf2012causal} and has been studied comprehensively in the literature \citep{shimodaira2000improving, huang2006correcting,
  sugiyama2008direct, gretton2009covariate, sugiyama2012machine,
  kpotufe2021marginal, aminian2022information, rodemann2023all}.

\textbf{Prediction-powered inference}.
To our best knowledge, inference with predicted data derived from black-box AI/ML models started from \citet{wang2020methods}, followed by \citet{motwani2023revisiting}.
More recently, \citet{angelopoulos2023prediction} proposed the approach
termed \emph{prediction-powered inference} (PPI), which yields
valid inference even when the predictions were of a low quality.
However, PPI might be worse, in terms of estimation efficiency, than the benchmark method that uses labeled data only.
This motivates a growing literature of research that investigates the statistical efficiency gains from different
perspectives, such as \citet{angelopoulos2023ppi++, miao2023assumption, gronsbell2024another, miao2024valid,  ji2025predictions}.

\textbf{Surrogacy in biostatistics and causal inference}.
In the literature of biostatistics especially clinical trails, there is a line of research on identifying, evaluating and validating surrogate variables, motivated by the fact that primary endpoint may be invasive, costly, or take a long time to measure.
This dates back to \citet{prentice1989surrogate} who first proposed the definition and the criterion for surrogacy.
Readers of interest could refer to \citet{elliott2023surrogate} for a comprehensive review of surrogate evaluation.
This type of research usually seeks the replacement of primary endpoint with surrogate variable, enabling traditional or provisional approval of treatment in clinical trials.
Not surprisingly, the replacement process often requires stringent, unverifiable assumptions, such as the strong surrogate assumption in \citet{prentice1989surrogate}.
More recently, surrogate variables were also analyzed in the causal inference framework to enhance the efficiency of treatment effect estimation on long-term or primary outcomes; see, e.g., \citet{athey2019surrogate, gupta2019top, athey2020combining, cheng2021robust, tran2023inferring, zhang2023evaluating, kallus2024role, imbens2024long, gao2024role}.

Additionally, surrogate variable is relevant to the misclassification problem or the label noise issue in machine learning (e.g., \citealt{lawrence2001estimating, pmlr-v38-scott15, li2021provably, pmlr-v202-liu23g, guo2024label}), also the measurement error issue in  statistical literature (e.g., \citealt{carroll2006measurement, buonaccorsi2010measurement, yi2017statistical, grace2021likelihood}).

\subsection{Our Contributions}

Firstly, we consider a semi-supervised learning setting under the covariate shift framework; that is, the labeling mechanism may depend on features so might not necessarily be purely random.
As discussed in our motivating study, the labeling mechanism is often unknown in applications, making it more flexible to assume dependence on features (covariate shift) rather than purely random.
In PPI, 
\citet{angelopoulos2023prediction} did consider scenarios involving covariate shift; however,
their method assumed that the difference of the two distributions is completely known.
Certainly this is not feasible in practice.

Secondly, our assumption on how the ACPs are generated is flexible.
The ACP, denoted as $\wh Y$ in the paper, does not have to be an accurate prediction of the ground truth $Y$.
Indeed, $\wh Y$ does not have to be in the same magnitude as $Y$, or even the same format as $Y$.
For example, $\wh Y$ could be a continuous variable even if $Y$ is binary.
In addition, we assume the generating algorithm of $\wh Y$ is somewhat black-box, which, beyond the available feature $\X$ in the labeled and unlabeled data, might also depend on some additional unseen feature, say, $\Z$.

Under this general framework, we demonstrate that the proposed estimator in this paper ensures that incorporating ACPs never reduces estimation efficiency. The only scenario with no efficiency gain occurs when the generation of ACPs depends solely on the available feature $\X$.
This aligns with our intuition, simply because in such an extreme situation the ACP $\wh Y$ does not bring in any \emph{new} information.
Furthermore, we highlight that the efficiency gain arises specifically from the ACPs in the unlabeled data. In contrast, ACPs in the labeled data do not contribute to the efficiency gain. This is intuitive since the labeled data already contain the ground truth $Y$, so a less precise ACP $\widehat{Y}$ provides no additional value. 
We rigorously support these efficiency comparison results by carefully analyzing and comparing the asymptotic variances of the estimators using closed-form formulas.

Our contributions also include the development of semiparametrically efficient and doubly robust estimators for the target parameter in both the unlabeled data population and the combined data population. By utilizing the cross-fitting technique, the proposed estimator achievesdouble robustness and, more importantly, attains the best possible estimation efficiency among all regular and asymptotically linear estimators.

\section{Setup}

\textbf{Semi-supervised learning setting under covariate shift.}
We adopt the standard semi-supervised learning setting where
we have labeled data $\calL$ that contains independent and identically distributed (i.i.d.) $(y_i,\x_i)$, $i=1,\dots,n$, as well as unlabeled data $\calU$ that contains i.i.d. $\x_j$, $j=n+1,\cdots,n+N\equiv M$.
We use a binary indicator $R$ to denote whether one particular subject is from $\calL$ ($R=1$) or from $\calU$ ($R=0$).
Please refer to Scenario I (w.o. ACP) in Table~\ref{tb:data1} for the data structure.

\begin{table}[tbp]
\caption{Data structure considered in this paper: Scenario I (w.o. ACP) vs Scenario II (w. ACP).}
\label{tb:data1}
\vskip 0.1in
\begin{center}
\begin{sc}
\begin{tabular}{ccccccccc}
\toprule
  & & \multicolumn{3}{c}{Scenario I} &   \multicolumn{4}{c}{Scenario II}\\
\midrule
  & & R & Y & $\X$ &   R & Y & $\X$ & $\wh{Y}$\\
\midrule
 &  1   & 1& $\surd$& $\surd$&1& $\surd$& $\surd$&$\surd$ \\
$\mathcal{L}$ &  $\vdots$   & $\vdots$& $\surd$& $\surd$ & $\vdots$& $\surd$& $\surd$&$\surd$ \\
 &  $n$   & 1& $\surd$& $\surd$ & 1& $\surd$& $\surd$ &$\surd$ \\
 \midrule
 &  $n+1$   & 0& $$& $\surd$ & 0& $$& $\surd$ &$\surd$ \\
$\mathcal{U}$ &  $\vdots$   & $\vdots$& $$& $\surd$ & $\vdots$& $$& $\surd$&$\surd$ \\
 &  $n+N\equiv M$   & 0& $$& $\surd$ & 0& $$& $\surd$ &$\surd$ \\
\bottomrule
\end{tabular}
\end{sc}
\end{center}
\vskip -0.1in
\end{table}

We use $p(\cdot)$ and $p(\cdot\mid \cdot)$ to denote the generic marginal and conditional distributions for the labeled data $\calL$, and $q(\cdot)$ and $q(\cdot\mid \cdot)$ for the unlabeled data $\calU$.
Instead of assuming both sources of data follow exactly the same distribution, we only assume the conditional distribution of outcome $Y$ given feature $\X$ remains the same, i.e., $p(y\mid \x) = q(y\mid \x)$,
while allowing the marginal distributions of $\X$ to differ between $\calL$ and $\calU$, i.e.,
\bse
p(\x) \neq q(\x).
\ese
This covariate shift assumption implies the independence between $Y$ and $R$ conditional on $\X$. That is,
\be\label{eq:mar}
\pr(R=1\mid Y,\X)=\pr(R=1\mid \X),
\ee
which is less stringent than the assumption that the sampling of the labeled data is purely random; i.e., $\pr(R=1\mid Y,\X)=\pr(R=1)$.

\textbf{Objectives and metrics.}
Under the covariate shift assumption, the distributions of the labeled data $\calL$, of the unlabeled data $\calU$ and of the combined data $\calL \cup \calU$ are all different.
In the main paper, we focus on the $d$-dimensional parameter $\bb$ for some characteristic in the unlabeled data population $\calU$, defined as
\bse
\bb = \argmin\ \E_q\{\ell(y,\x;\bb)\},
\ese
where $\E_q$ represents the expectation with respect to the distribution of $\calU$, and $\ell(\cdot)$ denotes a generic loss function.
Equivalently, with a smooth loss function, we write $\bb$ as the solution of the estimating equation
\be\label{eq:generalbb}
\E_q\{\s(Y,\X;\bb)\}=\0.
\ee
For example, when the outcome mean $\E_q(Y)$ is of interest, one can define $\ell(y,\x;\bb)$ as $(y-\beta)^2/2$ and the corresponding $\s(Y,\X;\bb)$ is $Y-\beta$.
As another example, if the linear regression coefficient is of interest, one can define $\ell(y,\x;\bb)$ as $(y-\x\trans\bb)^2/2$ and, correspondingly, $\s(Y,\X;\bb)=(Y-\X\trans\bb)\X$.

The overarching goal of this paper is to understand how to \emph{best} estimate the parameter $\bb$ without incorporating ACPs, with incorporating ACPs (below) and their comparison on estimation efficiency.

As an alternative, if some characteristic of the combined data population $\calL\cup\calU$ is of interest, one can define the parameter $\bt$, similarly as above.
Indeed, all the results presented in this paper for $\bb$ can be similarly developed for $\bt$.
For the interest of space, we only present some brief results for $\bt$ in Appendix~\ref{sec:resultsfortheta}.

\textbf{Incorporating ACPs.}
We assume that there exists a black-box prediction model, and
we have automated computational phenotypes (ACPs) $\wh y_i$ for every subject $i\in\{1,\cdots,M\}$.
Here, this black-box model is usually trained from machine learning, natural language processing, or large language model (LLM) such as ChatGPT, where the input feature might contain not only $\X$ but also some other variable $\Z$.
For the data structure, please refer to Scenario II (w. ACP) in Table~\ref{tb:data1}.

To understand the statistical benefits of these ACPs in the presence of covariate shift, we further assume
\bse
p(\wh y\mid \x,y) &=& q(\wh y\mid \x,y).
\ese
Together with the original covariate shift assumption in that $p(y\mid \x)=q(y\mid \x)$, it implies
\be\label{eq:assume}
\pr(R=1\mid Y,\wh Y,\X) = \pr(R=1\mid \X).
\ee
Again, this assumption is still less stringent than the assumption that the sampling of the labeled data is purely random.

\begin{remark}
	The assumption~\eqref{eq:assume} is equivalent to stating that, $(Y,\wh Y)$ and $R$ are conditional independent, given $\X$.
	It can be tested when $Y$ is available in the unlabeled data but cannot if otherwise. 
	When it cannot be tested, one can decompose the assumption as 
$p(\wh{y}|\x)=q(\wh{y}|\x)$ and $p(y|\x,\wh{y})=q(y|\x,\wh{y})$. It is clear that, the untestable part 
$p(y|\x,\wh{y})=q(y|\x,\wh{y})$ is something similar to the covariate shift assumption 
$p(y|\x)=q(y|\x)$, which is, albeit untestable, popularly adopted in semi-supervised learning and distribution shift.
\end{remark}

\textbf{Notation.}
We denote $\pi\equiv n/M$, $w(\x)\equiv \frac{q(\x)}{p(\x)}$ and
$\pi(\x)\equiv \pr(R=1\mid \x)$.
Clearly, $w(\x) = \frac{\pi}{1-\pi}\frac{1-\pi(\x)}{\pi(\x)}$ and $\frac{1-\pi(\x)}{1-\pi} = \frac{w(\x)}{\pi+(1-\pi)w(\x)}$.
We denote $\m(\x)\equiv \E\{\s(Y,\x;\bb)|\x\}$ and $\wt\m(\x,\wh{y})\equiv \E\{\s(Y,\x;\bb)|\x,\wh{y}\}$.
We also simply write $\rho=(w(\x),\m(\x),\wt\m(\x,\wh y))$.
We use subscript $_0$ to denote the ground truth of nuisance functions and superscript $^*$ to denote the best approximation of the truth within a possibly misspecified model,
e.g., $\m_0(\x)$ and $\m^*(\x)$. We assume the $d\times d$ symmetric matrix $\E_q\{\partial\s(Y,\X;\bb)/\partial\bb\trans\}$ evaluated at the true $\bb_0$ is invertible and denote this inverse as $\bOmega$.
Clearly, $\bOmega=\bOmega\trans$.
For random vector $\bphi$, we write $\E(\bphi\bphi\trans)$ as $\E(\bphi^{\otimes 2})$.

\section{Efficiency Gain Analysis}\label{sec:method2}

In this section, we mainly analyze the statistical benefits, in the sense of estimation efficiency, of the ACP $\wh y_i$'s.
Our first step is to understand the optimal estimation efficiency with and without incorporating these $\wh y_i$'s, for estimating $\bb$.
Here, the optimal estimation efficiency is the so-called semiparametric efficiency bound, characterized by the efficient influence function, also referred to as canonical gradient, or, efficient influence curve
\citep{bickel1993efficient, tsiatis2006semiparametric, fisher2021visually, hines2022demystifying, ichimura2022influence}.

\subsection{Optimal Efficiency Bound with and without ACPs}

We first briefly introduce the asymptotic representation of a regular asymptotically linear (RAL) estimator.
In general, given i.i.d. copies of the random sample $\{\d_1,\ldots,\d_n\}$ with sample size $n$, an estimator for the parameter of interest $\bb$, $\wh\bb$, is a RAL estimator if
\bse
\sqrt{n}(\wh\bb-\bb) = n^{-1/2}\sum_{i=1}^n \bphi(\d_i) + o_p(1),
\ese
where the zero-mean function $\bphi(\cdot)$ is called the influence function of $\wh\bb$, and the corresponding asymptotic variance is $\E(\bphi\bphi\trans)$, provided that it is finite and nonsingular.
Among all RAL estimators for $\bb$, the influence function of the one with the smallest asymptotic variance is called the efficient influence function (EIF), $\bphi\eff$, and the optimal efficiency bound is $\E(\bphi\eff\bphi\eff\trans)$.

To proceed, we first consider the situation with the ACPs, Scenario II in Table~\ref{tb:data1}.
The result below, Proposition~\ref{pro:eifw}, presents the EIF for estimating $\bb$ under this situation, as well as the efficiency bound.
The proof of this result is contained in Appendix~\ref{sec:proofs}.
The main idea is to first derive the semiparametric tangent space $\calT$ \citep{bickel1993efficient, tsiatis2006semiparametric} under this situation,  that is defined as the mean squared closure of the tangent spaces of all parametric submodels spanned by the score vectors, and then derive the EIF using the orthogonality.
\begin{proposition}\label{pro:eifw}
With the ACP $\wh y_i$'s, the EIF for estimating $\bb$ defined in \eqref{eq:generalbb} is $\bOmega\bphi_w$ with $\bphi_w$ equals
\be
&&
\frac{r}{\pi}w_0(\x)\{\s(y,\x;\bb_0)-\wt\m_0(\x,\wh{y})\}
+
\frac{1-r}{1-\pi}\m_0(\x)
\nonumber\\
&& +
\frac{w_0(\x)}{\pi+(1-\pi)w_0(\x)}\{\wt{\m}_0(\x,\wh{y}) - \m_0(\x)\},
\label{eq:eifw}
\ee
and the efficiency bound equals $\bOmega \V_w \bOmega$, where $\V_w$ is
\bse
&&\frac1{\pi}\E_p\left[w_0(\X)^2\{\s(Y,\X;\bb_0)-\wt\m_0(\X,\wh{Y})\}^{\otimes2}\right]\\
&&+ \frac1{(1-\pi)^2}\E\left[\{1-\pi_0(\X)\}^2\{\wt\m_0(\X,\wh{Y}) - \m_0(\X)\}^{\otimes2}\right]\\
&&+
\frac1{1-\pi} \E_q\{\m_0(\X)^{\otimes2}\}.
\ese
\end{proposition}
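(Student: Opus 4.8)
The plan is to follow the two-step route indicated after the statement: first identify the observed-data semiparametric tangent space $\calT$, then pin down the EIF via its gradient characterization together with membership in $\calT$, and finally read off the efficiency bound. Throughout I treat the data as $M$ i.i.d.\ copies of $\d=(R,\ RY,\ \X,\ \wh Y)$ from a super-population in which $R\mid\X\sim\text{Bernoulli}(\pi(\X))$ with $\pi=\E\{\pi(\X)\}$, so $Y$ is seen exactly on the stratum $\{R=1\}$. Under the covariate-shift assumption $p(y\mid\x)=q(y\mid\x)$ and the ACP assumption $p(\wh y\mid\x,y)=q(\wh y\mid\x,y)$, the law of $\d$ is parametrized by three unrestricted nuisance objects: the marginal $f_\X$ of $\X$; the propensity $\pi(\cdot)$ (equivalently $w(\cdot)$, through the Bayes identity $q(\x)=(1-\pi(\x))f_\X(\x)/(1-\pi)$); and the \emph{shared} conditional density $h(y,\wh y\mid\x)=p(y\mid\x)p(\wh y\mid\x,y)$. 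The observed-data log-density is $\log f_\X(\x)+r\log\pi(\x)+(1-r)\log(1-\pi(\x))+r\log h(y,\wh y\mid\x)+(1-r)\log h_{\wh Y}(\wh y\mid\x)$ with $h_{\wh Y}(\wh y\mid\x)=\int h(y,\wh y\mid\x)\,dy$; the last two terms encode the masking of $Y$ when $r=0$.

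Differentiating along one-dimensional perturbations of $f_\X$, $\pi$, and $h$ produces three mutually orthogonal pieces: $\calT_1=\{a(\X):\E\,a(\X)=\0\}$; $\calT_2=\{(R-\pi(\X))\,b(\X)\}$; and $\calT_{34}=\{R\,k(Y,\wh Y,\X)+(1-R)\,\E[k(Y,\wh Y,\X)\mid\X,\wh Y]:\ \E[k\mid\X]=\0\}$. The conditional-expectation term in $\calT_{34}$ appears precisely because the $\{R=0\}$ stratum observes $\wh Y$ but not $Y$, so the full-data score for perturbing $h$ is replaced by its conditional expectation given $(\X,\wh Y)$. Pairwise orthogonality is checked by conditioning on $(Y,\wh Y,\X)$, using $R(1-R)=0$ and $\E[R\mid Y,\wh Y,\X]=\pi(\X)$ — which is exactly \eqref{eq:assume} — together with the tower identity $\E[k\mid\X]=\0\Rightarrow\E\{\E[k\mid\X,\wh Y]\}=\0$. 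Getting $\calT_{34}$, and in particular its constraint set $\{\E[k\mid\X]=\0\}$ rather than the naive $\{\E[k\mid\X,\wh Y]=\0\}$, is the delicate point of the whole argument; once it is right, the rest is bookkeeping with iterated expectations.

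With $\calT$ in hand I would verify that $\bphi_w$ in \eqref{eq:eifw} is the canonical gradient of $T(P)=\E_{q_P}\{\s(Y,\X;\bb_0)\}$. Membership $\bphi_w\in\calT$ follows from an explicit decomposition: rewriting with $\frac{1-\pi_0(\x)}{1-\pi}=\frac{w_0(\x)}{\pi+(1-\pi)w_0(\x)}$ and $\frac{r}{\pi}w_0(\x)=\frac{r}{1-\pi}\frac{1-\pi_0(\x)}{\pi_0(\x)}$, one finds that $(1-\pi)\bphi_w$ has $\calT_1$-component $(1-\pi_0(\X))\m_0(\X)$ — mean zero since $\E_q\,\m_0=\0$ — $\calT_2$-component $-(R-\pi_0(\X))\m_0(\X)$, and $\calT_{34}$-component generated by $k=\frac{1-\pi_0(\X)}{\pi_0(\X)}\{\s(Y,\X;\bb_0)-\wt\m_0(\X,\wh Y)\}+(1-\pi_0(\X))\{\wt\m_0(\X,\wh Y)-\m_0(\X)\}$, which obeys $\E[k\mid\X]=\0$ because $\E[\s-\wt\m_0\mid\X,\wh Y]=\0$ and $\E[\wt\m_0\mid\X]=\m_0$. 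The gradient identity $\frac{d}{d\epsilon}T(P_\epsilon)\big|_0=\E\{\bphi_w S\}$ is then confirmed for $S$ in each of the three score spaces: write $T(P_\epsilon)=N_\epsilon/D_\epsilon$ via the Bayes formula above, note $N_0=\E\{(1-\pi_0(\X))\m_0(\X)\}=\0$ so $\frac{d}{d\epsilon}T\big|_0=N_0'/D_0$, and match term by term — the case $S\in\calT_{34}$ additionally uses $\E[(1-\pi_0(\X))\m_0(\X)\,k]=0$. Since $\bb$ solves $\E_q\{\s(Y,\X;\bb)\}=\0$, implicit differentiation upgrades the gradient of $T$ to the EIF $\bOmega\bphi_w$ of $\bb$.

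Finally, for the bound I would write $\bphi_w=A+B+C$ with $A=\frac{r}{\pi}w_0(\X)\{\s-\wt\m_0\}$, $B=\frac{1-r}{1-\pi}\m_0(\X)$, $C=\frac{w_0(\X)}{\pi+(1-\pi)w_0(\X)}\{\wt\m_0-\m_0\}$, show these are mutually orthogonal — the $A$–$B$ cross term vanishes from $r(1-r)=0$, and the $A$–$C$ and $B$–$C$ ones from conditioning on $(\X,\wh Y)$ and on $\X$ respectively, using $\E[\s-\wt\m_0\mid\X,\wh Y]=\0$ and $\E[\wt\m_0-\m_0\mid\X]=\0$ — and evaluate $\E\{A^{\otimes2}\}$, $\E\{B^{\otimes2}\}$, $\E\{C^{\otimes2}\}$ with $\E[R\mid Y,\wh Y,\X]=\pi_0(\X)$, $r^2=r$, and the definitions of $\E_p$, $\E_q$, $w_0$. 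This yields $\E\{\bphi_w^{\otimes2}\}=\V_w$, and the efficiency bound equals $\E\{(\bOmega\bphi_w)^{\otimes2}\}=\bOmega\V_w\bOmega$ since $\bOmega=\bOmega\trans$. The main obstacle, to reiterate, is the first half — deriving $\calT_{34}$ correctly; everything afterward is careful but routine conditioning.
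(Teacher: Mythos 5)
Your proposal is correct, and it reaches the result by a genuinely different route than the paper's own proof. The paper factorizes the likelihood by the two-sample structure, so its tangent space is the five-block sum $\Lambda_\pi\oplus\Lambda_p\oplus\Lambda_q\oplus\Lambda_{\wh y}\oplus\Lambda_y$; it then \emph{posits} that the EIF has the form $r\e_1(y,\x,\wh y)+\d_1(\wh y,\x)+(1-r)\c_1(\x)$ and constructs $\e_1,\d_1,\c_1$ by solving the pathwise-derivative/orthogonality equations (Theorems 4.2--4.3 of Tsiatis). You instead use the missing-data parametrization $(f_\X,\pi(\cdot),h(y,\wh y\mid\x))$, whose three score blocks span the same tangent space: your $\calT_{34}$ equals $\Lambda_{\wh y}\oplus\Lambda_y$ after splitting $k$ into $k-\E[k\mid\X,\wh Y]$ and $\E[k\mid\X,\wh Y]$, and $\calT_1\oplus\calT_2$ equals $\Lambda_\pi\oplus\Lambda_p\oplus\Lambda_q$; and you \emph{verify} rather than derive, checking that the stated $\bphi_w$ lies in $\calT$ (your explicit decomposition of $(1-\pi)\bphi_w$ into the three blocks is arithmetically correct, with the mean-zero and conditional-mean-zero side conditions holding exactly as you say) and that it is a gradient of $T(P)=\E_{q_P}\{\s(Y,\X;\bb_0)\}$ along each score direction, the canonical-gradient characterization then yielding efficiency. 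What your route buys: once $\calT_{34}$ is identified correctly, everything reduces to iterated-expectation bookkeeping driven by $\E(R\mid Y,\wh Y,\X)=\pi_0(\X)$, and your orthogonal split $\bphi_w=A+B+C$ gives $\E\{\bphi_w^{\otimes2}\}=\V_w$ explicitly, a computation the paper's proof of this proposition leaves implicit (it resurfaces, rearranged, only inside the proof of Proposition~\ref{pro:comparison}). What the paper's route buys: it is constructive, producing the formula for $\bphi_w$ from the orthogonality equations rather than requiring the candidate in advance.
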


Next we derive the efficiency bound for Scenario I in Table~\ref{tb:data1}, without the ACPs.
This can be regarded as a special case of Scenario II in Table~\ref{tb:data1}, with the ACPs, so the proof is omitted.
\begin{proposition}\label{pro:eifwo}
Without $\wh y_i$'s, the EIF for estimating $\bb$ defined in \eqref{eq:generalbb} is $\bOmega\bphi_{wo}$ where
\bse
\bphi_{wo}=
\frac{r}{\pi}w_0(\x)\{\s(y,\x;\bb_0)- \m_0(\x)\}
+\frac{1-r}{1-\pi}\m_0(\x),
\ese
and the efficiency bound equals $\bOmega \V_{wo} \bOmega$, where
\bse
\V_{wo} &=& \frac1{\pi}\E_p\left[w_0^2(\X)\{\s(Y,\X;\bb_0)-\m_0(\X)\}^{\otimes2}\right] \\
&&
+
\frac1{1-\pi} \E_q\left\{\m_0(\X)^{\otimes2}\right\}.
\ese
\end{proposition}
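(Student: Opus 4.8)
The plan is to obtain Proposition~\ref{pro:eifwo} by \emph{restriction} from Proposition~\ref{pro:eifw} rather than by redoing the tangent-space computation from scratch. Scenario~I is nothing but Scenario~II with the extra coordinate $\wh Y$ suppressed, so I would embed the Scenario~I model into the Scenario~II model by adjoining a deterministic $\wh Y=\psi(\X)$ for an arbitrary fixed measurable $\psi$: the resulting family is a submodel of the Scenario~II model, its observed data $(\X,\wh Y,R,RY)$ is informationally equivalent to the Scenario~I data $(\X,R,RY)$, and its tangent space is exactly $\calT=\calT_f\oplus\calT_\pi\oplus\calT_{y\mid\x}$ (perturbing $\psi$ adds no new score direction, since $\wh Y$ is then a function of $\X$ and contributes no extra likelihood factor). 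Along this submodel $\wt\m_0(\x,\wh y)=\E\{\s(Y,\x;\bb_0)\mid\X=\x,\wh Y=\psi(\x)\}=\E\{\s(Y,\x;\bb_0)\mid\X=\x\}=\m_0(\x)$, so the third summand of $\bphi_w$ in \eqref{eq:eifw} vanishes and the first collapses to $\tfrac{r}{\pi}w_0(\x)\{\s(y,\x;\bb_0)-\m_0(\x)\}$; hence $\bOmega\bphi_w$ restricted to the submodel equals $\bOmega\bphi_{wo}$ with $\bphi_{wo}$ as stated, and it lies in $\calT$. Because the EIF of a model, restricted to a submodel whose tangent space it still belongs to, is the EIF of that submodel, $\bOmega\bphi_{wo}$ is the EIF for Scenario~I. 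Note $\bOmega=[\E_q\{\partial\s(Y,\X;\bb)/\partial\bb\trans\}|_{\bb_0}]^{-1}$ is unchanged, as it depends only on $q(y,\x)=q(\x)p(y\mid\x)$.

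The efficiency bound then follows from $\bOmega\V_{wo}\bOmega=\E\{(\bOmega\bphi_{wo})^{\otimes 2}\}=\bOmega\,\E\{\bphi_{wo}^{\otimes 2}\}\,\bOmega\trans$ using $\bOmega=\bOmega\trans$, together with the identity $\E\{\bphi_{wo}^{\otimes 2}\}=\V_{wo}$. The latter is immediate once one observes that the two summands of $\bphi_{wo}$ are supported on the disjoint events $\{r=1\}$ and $\{r=0\}$, so the cross term vanishes; then $\E\{\tfrac{r}{\pi^2}w_0(\X)^2\{\s(Y,\X;\bb_0)-\m_0(\X)\}^{\otimes 2}\}=\tfrac1{\pi}\E_p[w_0(\X)^2\{\s(Y,\X;\bb_0)-\m_0(\X)\}^{\otimes 2}]$ and $\E\{\tfrac{1-r}{(1-\pi)^2}\m_0(\X)^{\otimes 2}\}=\tfrac1{1-\pi}\E_q\{\m_0(\X)^{\otimes 2}\}$, using $\pr(R=1)=\pi$, $\pr(R=0)=1-\pi$, and the reweighting $\E_p\{w_0(\X)g(\X)\}=\E_q\{g(\X)\}$.

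For a self-contained derivation one would instead mimic the proof of Proposition~\ref{pro:eifw}: factor the Scenario~I likelihood of $(R,\X,RY)$ as $f(\x)\,\pi_0(\x)^{r}\{1-\pi_0(\x)\}^{1-r}\,p(y\mid\x)^{r}$, where $f$ is the $\X$-marginal of the combined population, $\pi=\E\{\pi_0(\X)\}$ and $q(\x)=f(\x)\{1-\pi_0(\x)\}/(1-\pi)$; read off the orthogonal tangent decomposition $\calT=\calT_f\oplus\calT_\pi\oplus\calT_{y\mid\x}$ with $\calT_f=\{a(\X):\E\{a(\X)\}=0\}$, $\calT_\pi=\{\{R-\pi_0(\X)\}b(\X)\}$, $\calT_{y\mid\x}=\{R\,c(Y,\X):\E(c(Y,\X)\mid\X)=0\}$; compute the pathwise derivative of the functional $\bb$ defined by $\E_q\{\s(Y,\X;\bb)\}=\0$ along a generic submodel, using invertibility of $\E_q\{\partial\s(Y,\X;\bb)/\partial\bb\trans\}$; and verify that $\bOmega\bphi_{wo}$ is the unique element of $\calT$ representing it. In this route the main obstacle is purely bookkeeping: the normalizers $\pi$ and $1-\pi$ are themselves functionals of $\pi_0(\cdot)$, so perturbing the selection factor simultaneously moves $\pi$, $q(\x)$, and hence the target $\bb$, and one must track these induced contributions when matching $\E\{\bphi_{wo}S\}$ against each tangent block; the relation $w_0(\x)=\tfrac{\pi}{1-\pi}\tfrac{1-\pi_0(\x)}{\pi_0(\x)}$ noted earlier is exactly what recombines the pieces into the stated $\bphi_{wo}$. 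In the reduction route there is essentially no obstacle beyond confirming the submodel's tangent-space claim, since Proposition~\ref{pro:eifw} does all the real work.
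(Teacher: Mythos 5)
Your proposal is correct and matches the paper's own treatment: the paper proves Proposition~\ref{pro:eifwo} exactly by regarding Scenario I as a special case of Scenario II (Proposition~\ref{pro:eifw}), where $\wh Y$ carries no information beyond $\X$ so that $\wt\m_0(\x,\wh y)=\m_0(\x)$ and $\bphi_w$ collapses to $\bphi_{wo}$, with the variance formula following from the disjoint $\{r=1\}/\{r=0\}$ supports just as you compute. Your appended self-contained tangent-space sketch is simply the direct analogue of the paper's proof of Proposition~\ref{pro:eifw}, so nothing essentially new is added.
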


\subsection{Efficiency Gain and its Source}

To further understand the efficiency gain of the ACPs, which is the contrast between Scenario I and Scenario II in Table~\ref{tb:data1}, we have
\begin{proposition}\label{pro:comparison}
The difference of the two asymptotic variances, $\bOmega \V_{wo}\bOmega - \bOmega \V_{w}\bOmega$, equals
\bse
\frac{1}{(1-\pi)^2}
\bOmega
\E\left[\frac{\{1-\pi_0(\X)\}^3}{\pi_0(\X)}\{\wt\m_0(\X,\wh{Y}) -\m_0(\X)\}^{\otimes2}\right]\bOmega,
\ese
which is always positive semidefinite.
\end{proposition}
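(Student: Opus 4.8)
The plan is to subtract the two closed-form variances from Propositions~\ref{pro:eifw} and \ref{pro:eifwo} and simplify. First, the last summand of $\V_w$ and the second summand of $\V_{wo}$ both equal $\frac1{1-\pi}\E_q\{\m_0(\X)^{\otimes2}\}$ and cancel, leaving
\bse
\V_{wo}-\V_w &=& \frac1\pi\E_p\!\left[w_0^2(\X)\{\s(Y,\X;\bb_0)-\m_0(\X)\}^{\otimes2}\right]\\
&&{}-\frac1\pi\E_p\!\left[w_0^2(\X)\{\s(Y,\X;\bb_0)-\wt\m_0(\X,\wh Y)\}^{\otimes2}\right]\\
&&{}-\frac1{(1-\pi)^2}\E\!\left[\{1-\pi_0(\X)\}^2\{\wt\m_0(\X,\wh Y)-\m_0(\X)\}^{\otimes2}\right],
\ese
so it suffices to rewrite the first two (labeled-data) terms. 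I would decompose $\s-\m_0=(\s-\wt\m_0)+(\wt\m_0-\m_0)$ and expand the outer product; since $w_0^2(\X)$ and $\wt\m_0(\X,\wh Y)-\m_0(\X)$ are measurable with respect to $(\X,\wh Y)$ while $\E_p\{\s(Y,\X;\bb_0)-\wt\m_0(\X,\wh Y)\mid\X,\wh Y\}=\0$ by the definition of $\wt\m_0$, conditioning on $(\X,\wh Y)$ annihilates both cross terms, so the difference of the first two terms collapses to $\frac1\pi\E_p[w_0^2(\X)\{\wt\m_0(\X,\wh Y)-\m_0(\X)\}^{\otimes2}]$.

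The second move is a change of measure from the labeled marginal $p$ to the combined-population law. Writing $f=\pi p+(1-\pi)q$ for the marginal of $\X$ in $\calL\cup\calU$, one has $p(\x)=\pi_0(\x)f(\x)/\pi$, and the conditional law of $\wh Y$ given $\X$ is common to $p$ and $q$ (by the ACP assumption $p(\wh y\mid\x,y)=q(\wh y\mid\x,y)$ together with $p(y\mid\x)=q(y\mid\x)$), hence $\E_p[h(\X,\wh Y)]=\pi^{-1}\E[\pi_0(\X)h(\X,\wh Y)]$ for integrable $h$. Applying this with $h(\X,\wh Y)=w_0^2(\X)\{\wt\m_0(\X,\wh Y)-\m_0(\X)\}^{\otimes2}$ and using $w_0(\x)=\frac{\pi}{1-\pi}\frac{1-\pi_0(\x)}{\pi_0(\x)}$, so that $\pi^{-2}\pi_0(\x)w_0^2(\x)=\frac1{(1-\pi)^2}\frac{\{1-\pi_0(\x)\}^2}{\pi_0(\x)}$, turns $\frac1\pi\E_p[w_0^2(\X)\{\wt\m_0-\m_0\}^{\otimes2}]$ into $\frac1{(1-\pi)^2}\E\big[\frac{\{1-\pi_0(\X)\}^2}{\pi_0(\X)}\{\wt\m_0-\m_0\}^{\otimes2}\big]$. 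Subtracting the remaining term and using $\frac1{\pi_0(\x)}-1=\frac{1-\pi_0(\x)}{\pi_0(\x)}$ gives $\V_{wo}-\V_w=\frac1{(1-\pi)^2}\E\big[\frac{\{1-\pi_0(\X)\}^3}{\pi_0(\X)}\{\wt\m_0(\X,\wh Y)-\m_0(\X)\}^{\otimes2}\big]$; pre- and post-multiplying by $\bOmega=\bOmega\trans$ yields the claimed identity. Positive semidefiniteness is then immediate: for any $\v\in\R^d$, $\v\trans\bOmega\,\E[\cdots]\,\bOmega\v=\E\big[\frac{\{1-\pi_0(\X)\}^3}{\pi_0(\X)}\,(\v\trans\bOmega\{\wt\m_0(\X,\wh Y)-\m_0(\X)\})^2\big]\ge0$, because $\pi_0(\x)\in(0,1)$ forces $\{1-\pi_0(\x)\}^3/\pi_0(\x)\ge0$.

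I do not expect a genuine obstacle here; the one place that needs care is the measure bookkeeping — keeping straight which law ($p$, $q$, or the combined $\E$) each expectation is taken under, and invoking the covariate-shift and ACP assumptions at exactly the points where they are needed, namely to make $\m_0$ and $\wt\m_0$ law-independent, to obtain $\E_p\{\s-\wt\m_0\mid\X,\wh Y\}=\0$, and to transport the conditional distribution of $\wh Y$ given $\X$ between $p$ and $q$. Everything else is the routine outer-product algebra indicated above.
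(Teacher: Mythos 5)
Your proposal is correct and follows essentially the same route as the paper's proof: decompose $\s-\m_0=(\s-\wt\m_0)+(\wt\m_0-\m_0)$ so the cross terms vanish by conditioning on $(\X,\wh Y)$, then use $w_0(\x)=\frac{\pi}{1-\pi}\frac{1-\pi_0(\x)}{\pi_0(\x)}$ and the change of measure to the combined-population expectation to put both middle terms on a common scale before subtracting. Your reorganization (cancelling the shared $\frac1{1-\pi}\E_q\{\m_0(\X)^{\otimes2}\}$ term first and working only with the difference) and the explicit quadratic-form argument for positive semidefiniteness are only cosmetic differences from the paper's parallel computation of the two variances.
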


\begin{remark}
	The proof of Proposition~\ref{pro:comparison} is contained in Appendix~\ref{sec:proofs}.
  From Proposition~\ref{pro:comparison}, it is clear that incorporating ACPs will not attenuate the estimation efficiency.
  The extreme special case is that, the generation of $\wh Y$ only depends on the available feature $\X$; that is, $\wh Y$ is a function of $\X$.
  In this case, one can derive that $\wt \m_0(\x,\wh y) = \m_0(\x)$ and there is no efficiency gain.
  The intuition is that, in this case, $\wh Y$ does not bring in any \emph{new} information beyond $\X$ and thus does not bring in efficiency gain.
  If the generation of $\wh Y$ depends on $\X$ as well as some other different variable $\Z$ predictive of $Y$, the estimator with ACPs does bring positive efficiency gain, compared to the estimator without ACPs.
\end{remark}

So, where exactly does the efficiency gain come from? The ACPs for the labeled data, or the ACPs for the unlabeled data, or both?
To answer this question, we consider an intermediate scenario as shown in Table~\ref{tb:data2} below.
For this scenario, we have the following
\begin{proposition}\label{pro:where}
  For the intermediate scenario in Table~\ref{tb:data2}, the estimation efficiency bound equals to the one in Proposition~\ref{pro:eifwo}, corresponding to Scenario I in Table~\ref{tb:data1}.
\end{proposition}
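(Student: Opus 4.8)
The plan is to identify the efficient influence function (EIF) of the intermediate scenario in Table~\ref{tb:data2} and to show that it coincides with $\bphi_{wo}$ from Proposition~\ref{pro:eifwo}; the claimed equality of efficiency bounds is then immediate. In the intermediate scenario a labeled unit contributes $(R=1,\Y,\X,\wh Y)$ while an unlabeled unit contributes $(R=0,\X)$, so the observed-data model differs from that of Scenario~I \emph{only} through the extra coordinate $\wh Y$ on $\calL$, whose conditional law $g(\wh y\mid y,\x)$ is an unrestricted nuisance. I would first note that the modeling assumption $p(\wh y\mid\x,y)=q(\wh y\mid\x,y)$ carries no bite here, because $\wh Y$ is never observed on $\calU$, so $q(\wh y\mid\x,y)$ is not identified and this equality places no restriction on $g$. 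Hence the observed-data likelihood factorizes, unit by unit, into two variation-independent blocks: the Scenario~I density in $(R,\X,R\Y)$, and the factor $g(\wh y\mid y,\x)^{R}$. Accordingly, the tangent space of the intermediate model should be the orthogonal sum $\calT_{\rm int}=\calT_{wo}\oplus\calT_{\rm new}$, where $\calT_{wo}$ is the Scenario~I tangent space (realized inside the enlarged data) and $\calT_{\rm new}=\{R\cdot c(\wh Y,\Y,\X):\ \E\{c(\wh Y,\Y,\X)\mid\Y,\X\}=0,\ \E(c^2)<\infty\}$ is generated by perturbations of $g$.

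Two facts then suffice. First, $\calT_{\rm new}$ is orthogonal both to $\calT_{wo}$ and to $\bphi_{wo}$. Every element of $\calT_{wo}$, and $\bphi_{wo}$ itself, is a square-integrable function of the Scenario~I observed data only: on $\{R=1\}$, $\bphi_{wo}$ equals $\pi^{-1}w_0(\x)\{\s(y,\x;\bb_0)-\m_0(\x)\}$ and on $\{R=0\}$ it equals $(1-\pi)^{-1}\m_0(\x)$, neither of which involves $\wh Y$. Thus for any such $h$ and any $c$ as above, $\E\{h\cdot R\, c(\wh Y,\Y,\X)\}=\E\{R\,h\,\E[c(\wh Y,\Y,\X)\mid R=1,\Y,\X]\}=0$, since \eqref{eq:assume} yields $\wh Y\perp R\mid(\Y,\X)$ and therefore $\E[c(\wh Y,\Y,\X)\mid R=1,\Y,\X]=\E[c(\wh Y,\Y,\X)\mid\Y,\X]=0$. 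Second, $\bphi_{wo}$ remains a valid gradient (influence function) for $\bb$ in the intermediate model. Any intermediate-model score splits along the factorization as $S=S_{wo}+S_{\rm new}$ with $S_{wo}$ a Scenario~I score and $S_{\rm new}\in\calT_{\rm new}$; perturbing the second block leaves $\bb$, a functional of $q(y,\x)$ alone, unchanged, while perturbing the first block reproduces exactly the Scenario~I pathwise derivative, so $\partial_t\bb=\E\{\bphi_{wo}S_{wo}\}=\E\{\bphi_{wo}S_{wo}\}+\E\{\bphi_{wo}S_{\rm new}\}=\E\{\bphi_{wo}S\}$, the middle step using the orthogonality just established and integrating out the redundant $\wh Y$ wherever it appears.

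Combining these, $\bphi_{wo}$ is a gradient that already lies in $\calT_{wo}\subseteq\calT_{\rm int}$, so its projection onto $\calT_{\rm int}$ is itself; since the EIF is the projection of any gradient onto the tangent space, the EIF for the intermediate scenario is $\bOmega\bphi_{wo}$, identical to Proposition~\ref{pro:eifwo}, and the efficiency bound equals $\bOmega\,\E(\bphi_{wo}^{\otimes2})\,\bOmega=\bOmega\V_{wo}\bOmega$. The hard part will be making the orthogonal direct-sum decomposition $\calT_{\rm int}=\calT_{wo}\oplus\calT_{\rm new}$ fully rigorous — verifying that varying the new nuisance $g$ produces no score direction already contained in $\calT_{wo}$ and that the two blocks are genuinely variation-independent — together with the careful bookkeeping needed to invoke $\wh Y\perp R\mid(\Y,\X)$ from \eqref{eq:assume}; the remaining steps are the one-line orthogonality computation above. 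An alternative, more computational route would be to re-derive the EIF from scratch for the intermediate scenario exactly as in the proof of Proposition~\ref{pro:eifw} and check that every $\wh Y$-dependent term vanishes, but the projection argument is shorter and more transparent.
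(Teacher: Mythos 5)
Your argument is correct, and it reaches the same conclusion by a different route than the paper: the paper's (omitted) proof proceeds as for Proposition~\ref{pro:eifw}, i.e., it writes the Scenario~III likelihood, decomposes the semiparametric tangent space into the blocks generated by $\pi$, $p(\x)$, $q(\x)$, $p(y\mid\x)$ and $p(\wh y\mid y,\x)$, and re-solves the orthogonality equations, checking that every $\wh Y$-dependent term vanishes so the EIF reduces to $\bOmega\bphi_{wo}$. You instead factor the observed-data likelihood into the Scenario~I block times $g(\wh y\mid y,\x)^{r}$, note that $\bb$ does not depend on $g$, and show that $\bOmega\bphi_{wo}$ remains a gradient in the enlarged model and already lies in its tangent space, hence is the EIF by uniqueness. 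Your route is shorter and makes the ``why'' transparent (the new nuisance block contributes only score directions orthogonal to everything relevant, and the target functional is flat in that block), while the paper's route yields the explicit EIF computation in parallel with Proposition~\ref{pro:eifw}'s machinery. Two small remarks: the step you flag as the hard part is in fact routine, since $(\pi,p(\x),q(\x),p(y\mid\x),p(\wh y\mid y,\x))$ are variation independent, so the tangent space is the orthogonal sum you describe; and the appeal to \eqref{eq:assume} is not even needed for the orthogonality, because the scores $R\,c(\wh Y,Y,\X)$ have conditional mean zero given $(R=1,Y,\X)$ under the labeled-data law $g$ directly, which is all the inner-product computation uses.
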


\begin{remark}
The proof of Proposition~\ref{pro:where} is similar to Proposition~\ref{pro:eifw} so omitted.
  It implies, if the ACPs for the labeled data only were available, then this set of ACPs does not bring in any efficiency gain. Therefore, compared to Scenario I in Table~\ref{tb:data1}, the efficiency gain of Scenario II in Table~\ref{tb:data1} comes from the ACPs for the unlabeled data.
  This is intuitive. For the labeled data, the ground truth $Y$'s exist and thus the ACPs $\wh Y$'s do not contribute; while for the unlabeled data, the ground truth $Y$'s do not exist and their ACPs $\wh Y$'s do contribute.
\end{remark}

\begin{table}[tbp]
\caption{Data structure considered in this paper: Scenario III (intermediate scenario between Scenario I and Scenario II).}
\label{tb:data2}
\vskip 0.1in
\begin{center}
\begin{small}
\begin{sc}
\begin{tabular}{cccccc}
\toprule
  & & \multicolumn{4}{c}{Scenario III}\\
\midrule
  & & R & Y & $\X$ & $\wh{Y}$  \\
\midrule
 &  1   & 1& $\surd$& $\surd$&$\surd$ \\
$\mathcal{L}$ &  $\vdots$   & $\vdots$& $\surd$& $\surd$&$\surd$ \\
 &  $n$   & 1& $\surd$& $\surd$&$\surd$\\
 \midrule
 &  $n+1$   & 0& $$& $\surd$&$$ \\
$\mathcal{U}$ &  $\vdots$   & $\vdots$& $$& $\surd$&$$ \\
 &  $n+N\equiv M$   & 0& $$& $\surd$&$$ \\
\bottomrule
\end{tabular}
\end{sc}
\end{small}
\end{center}
\vskip -0.1in
\end{table}

\section{Estimation}

\begin{algorithm}[bp]
   \caption{Estimator and Confidence Interval}
   \label{alg}
\begin{algorithmic}
   \STATE {\bfseries Input:} Labeled dataset $D^l=\{(\x_i,y_i,\wh{y}_i):i=1,\dots,n\}$, unlabeled dataset $D^u=\{(\x_i,\wh{y}_i):i=n+1,\dots,n+N\}$.
   To use cross-fitting, define $D^l=D^l_1\cup\cdots\cup D^l_K$ and $D^u=D^u_1\cup\cdots\cup D^u_K$, with $|D^l_1|=\cdots=|D^l_K|=n/K$ and $|D^u_1|=\cdots=|D^u_K|=N/K$. Define $D_k=D_k^l\cup D_k^u, k=1,\cdots,K$.
   \STATE {\bfseries Output:} Obtain $\wh\bb$ and confidence interval of $\v\trans \bb$ for any vector $\v \in \mathbb{R}^d$.
   \FOR{$k=1$ {\bfseries to} $K$}
   \STATE Estimate $\rho_k$ using $D_k^c$, obtain $\wh\rho_k$.
   \ENDFOR
   \STATE Plug $\wh\rho$ into equation \eqref{equa} and solve the equation, obtain $\wh\bb$.
   \STATE Plug $\wh\rho$ and $\wh\bb$ into equation \eqref{eva} and obtain $\wh\bOmega\wh\V_w\wh\bOmega$.
   \STATE Obtain the confidence intervals of $\v\trans\bb$ by equation \eqref{CIS}.
\end{algorithmic}
\end{algorithm}

We mainly present the estimator $\wh\bb$ that incorporates the ACPs.
From the EIF presented in \eqref{eq:eifw}, to construct an efficient estimator, one needs to estimate the relevant nuisance functions $\rho=(w(\x), \wt\m(\x,\wh{y}), \m(\x))$.
To accommodate estimations of nuisance functions using nonparametric or machine learning methods, we adopt the cross-fitting technique \cite{chernozhukov2018double} in our estimation.

Take K-fold random partitions $\{D_{k}^l\}_{k=1}^K$ and $\{D_{k}^u\}_{k=1}^K$ of the labeled and unlabeled data sets $D^l$ and $D^u$, respectively. Then $D_k=D_{k}^l\cup D_{k}^u$ constitutes a K-fold random partition of the whole data set $\{(\x_i,r_iy_i,\wh{y}_i)_{i=1}^M\}$. For each $k=1,\dots,K$, we define $D_k^c=\{(\x_i,r_iy_i,\wh{y}_i)_{i=1}^M\}/D_k$ and let $\wh\rho_k=(\wh{w}_k(\x), \wh{\wt{\m}}_k(\x,\wh{y}), \wh{\m}_k(\x))$ denote the estimates of nuisance functions obtained using the data set $D_k^c$ only.
Let $\wh\rho=(\wh\rho_1,\dots, \wh\rho_K)$.
Then the estimator $\wh\bb (\wh \rho)$, shorthanded as $\wh\bb$, is the solution of the following equation:
\be\label{equa}
&&\frac{1}{K}\sum_{k=1}^K\wh{E}_k
\left[\frac{R}{\pi}\wh{w}_k(\X)\{\s(Y,\X;\wh\bb)-\wh{\wt{\m}}_k(\X,\wh{Y})\}\right.\nonumber \\
&&\left. +\frac{\wh{w}_k(\X)}{\pi+(1-\pi)\wh{w}_k(\X)}\{\wh{\wt{\m}}_k(\X,\wh{Y}) - \wh{\m}_k(\X)\}\right.\nonumber\\
&&\left.+ \frac{1-R}{1-\pi}\wh{\m}_k(\X)\right]=\0,
\ee
where $\wh{E}_k$ denotes the sample average over the k-th fold, and we denote the left-hand side of the above equation as $\N(\wh \rho,\wh\bb)$.
We briefly summarize the computational algorithm in Algorithm~\ref{alg}.

\begin{remark}
In our implementation, for the estimations of nuisance functions, we adopted super learner \cite{van2007super}, an ensemble learning approach that improves estimation by combining multiple machine learning models with data-adaptive weights.
\end{remark}

\section{Theoretical Investigations}

\comment{Firstly, regarding protection against model misspecification, the proposed estimator $\wh\bb$ enjoys the traditional double robustness in that $\wh\bb$ is consistent if either the working model for $ w(\x)$ is correctly specified or the working models for $\wt\m(\x,\wh{y})$ and $\m(\x)$ are correctly specified.
\begin{proposition}\label{tradi}
If the working model $ w^*(\x)$ is correctly specified or the working models $\wt\m^*(\x,\wh{y})$ and $\m^*(\x)$ are correctly specified, then $\wh\bb$ is consistent.
\end{proposition}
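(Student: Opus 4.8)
\emph{Proof plan.} The plan is to reduce the claim to a Fisher-consistency identity for the population version of the estimating equation \eqref{equa} and then close with a standard $Z$-estimation argument. Write $\psi(\d;\bb,\rho)$ for the integrand on the left-hand side of \eqref{equa}, so that $\N(\wh\rho,\bb)=K^{-1}\sum_{k=1}^K\wh{E}_k\{\psi(\cdot;\bb,\wh\rho_k)\}$, and let $\Psi(\bb;\rho)\equiv\E\{\psi(\d;\bb,\rho)\}$, the expectation being over the combined population $\calL\cup\calU$ (i.e.\ $R\sim\mathrm{Bernoulli}(\pi)$, $\X\mid R=1\sim p$, $\X\mid R=0\sim q$, and $(Y,\wh Y)\mid\X$ from the common conditional law). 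Let $\rho^*=(w^*,\wt\m^*,\m^*)$ denote the probability limit of the cross-fitted nuisance estimators, which equals the truth $\rho_0=(w_0,\wt\m_0,\m_0)$ in whichever component has a correctly specified working model. The heart of the proof is to show that $\Psi(\bb_0;\rho^*)=\0$ whenever $w^*=w_0$, or whenever $\wt\m^*=\wt\m_0$ and $\m^*=\m_0$.

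First I would record the identities that drive the computation: the covariate-shift relations $w_0(\x)=q(\x)/p(\x)=\tfrac{\pi}{1-\pi}\tfrac{1-\pi_0(\x)}{\pi_0(\x)}$ and $\tfrac{1-\pi_0(\x)}{1-\pi}=\tfrac{w_0(\x)}{\pi+(1-\pi)w_0(\x)}$; the identity $p(y,\wh y\mid\x)=q(y,\wh y\mid\x)$, which follows from $p(y\mid\x)=q(y\mid\x)$ together with $p(\wh y\mid\x,y)=q(\wh y\mid\x,y)$ and implies that the conditional laws of $Y\mid\X$, of $Y\mid(\X,\wh Y)$ and of $\wh Y\mid\X$ coincide in $\calL$ and $\calU$; the defining equation $\E_q\{\s(Y,\X;\bb_0)\}=\0$; the tower identities $\E\{\wt\m_0(\X,\wh Y)\mid\X\}=\m_0(\X)$ and $\E_q\{\m_0(\X)\}=\0$; and the factorization of the combined marginal $\pi p(\x)+(1-\pi)q(\x)=p(\x)\{\pi+(1-\pi)w_0(\x)\}$.

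Then I would split into the two branches. If $w^*=w_0$: the $R/\pi$-weighting turns the expectation of the first summand (carried only over the labeled stratum) into $\E_q\{\s(Y,\X;\bb_0)-\wt\m^*(\X,\wh Y)\}$; the third summand gives $\E_q\{\m^*(\X)\}$; and for the $R$-free middle summand, using the marginal factorization together with $p(\wh y\mid\x)=q(\wh y\mid\x)$ collapses the weight $w_0(\x)/\{\pi+(1-\pi)w_0(\x)\}$ against the combined density exactly down to $q(\x)$, giving $\E_q\{\wt\m^*(\X,\wh Y)-\m^*(\X)\}$; the three pieces telescope to $\E_q\{\s(Y,\X;\bb_0)\}=\0$, with $\wt\m^*,\m^*$ arbitrary. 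If instead $\wt\m^*=\wt\m_0$ and $\m^*=\m_0$: the first summand has conditional mean $\0$ given $(\X,\wh Y)$ since $\E\{\s(Y,\X;\bb_0)\mid\X,\wh Y\}=\wt\m_0(\X,\wh Y)$; the middle summand has conditional mean $\0$ given $\X$ by $\E\{\wt\m_0(\X,\wh Y)\mid\X\}=\m_0(\X)$; and the third summand equals $\E_q\{\m_0(\X)\}=\0$; so each summand vanishes separately, with $w^*$ arbitrary. The same bookkeeping in fact gives $\Psi(\bb;\rho^*)=\E_q\{\s(Y,\X;\bb)\}$ in the first branch and $\Psi(\bb;\rho^*)=\E_p[w^*(\X)\{\s(Y,\X;\bb)-\s(Y,\X;\bb_0)\}]$ in the second, so $\bb_0$ is the unique, well-separated zero of $\Psi(\cdot;\rho^*)$ under the maintained identification condition on the estimating equation (with $w^*$ bounded away from $0$ in the second branch).

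Finally I would invoke the routine $Z$-estimation wrap-up. By cross-fitting, conditionally on the auxiliary sample $D_k^c$ the quantity $\wh{E}_k\{\psi(\cdot;\bb,\wh\rho_k)\}$ is an average of i.i.d.\ terms; decomposing it into the empirical-process piece $(\wh{E}_k-\E)\psi(\cdot;\bb,\wh\rho_k)$, which is $o_p(1)$ uniformly over a compact neighborhood $\mathcal B$ of $\bb_0$ by a conditional maximal-inequality/Glivenko--Cantelli argument with an integrable envelope, and the bias piece $\E\{\psi(\cdot;\bb,\wh\rho_k)-\psi(\cdot;\bb,\rho^*)\}$, which is $o_p(1)$ by $\wh\rho_k\to\rho^*$, continuity of $\psi$ in $\rho$ and dominated convergence, yields $\sup_{\bb\in\mathcal B}\|\N(\wh\rho,\bb)-\Psi(\bb;\rho^*)\|=o_p(1)$ after averaging over $k$. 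Combined with $\N(\wh\rho,\wh\bb)=\0$ and the fact that $\bb_0$ is the well-separated unique zero of the continuous map $\Psi(\cdot;\rho^*)$, the standard consistency theorem for $Z$-estimators gives $\wh\bb\to\bb_0$ in probability. The main obstacle is the middle-summand calculation in the $w^*=w_0$ branch: it is the only term lacking an $R$ factor, so it must be integrated against the combined marginal of $(\X,\wh Y)$, and one has to combine the factorization $\pi p+(1-\pi)q=p\{\pi+(1-\pi)w_0\}$ with $p(\wh y\mid\x)=q(\wh y\mid\x)$ to see the weight cancel down to $q$; everything else is either the Fisher-consistency bookkeeping above or standard uniform-convergence and identification arguments.
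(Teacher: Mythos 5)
Your proof is correct and follows essentially the same route as the paper's: the same two-branch Fisher-consistency computation showing the population estimating equation is unbiased at $\bb_0$, using $w_0=q/p$ to collapse the $R/\pi$-weighted term to an $\E_q$ expectation, the identity $w_0(\x)/\{\pi+(1-\pi)w_0(\x)\}=\{1-\pi_0(\x)\}/(1-\pi)$ (equivalently your combined-marginal factorization) for the middle term, and iterated conditioning on $(\X,\wh Y)$ and $\X$ when the outcome-model components are correct. The only difference is that you spell out the standard $Z$-estimation uniform-convergence and well-separated-zero wrap-up, which the paper leaves implicit after establishing the population identity.
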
}

\comment{The following lemma provides an upper bound for any given working models.

\begin{lemma}\label{DR1}
There exists a universal constant $c_0$ such that for any
$\rho^*=(w^*(\x),\wt\m^*(\x,\wh{y}),\m^*(\x))$, we have
\bse
&&\|E\{\bphi_{w}(\bb,\rho^*)-\bphi_w(\bb,\rho_0)\}\|_2\\
&\leq& c_0(\|w^*(\x)-w_0(\x)\|_2\|\wt{\m}^*(\x,\wh{y})-\wt{\m}_0(\x,\wh{y})\|_2\\
&&+\|w^*(\x)-w_0(\x)\|_2\|\m^*(\x)-\m_0(\x)\|_2)
\ese
\end{lemma}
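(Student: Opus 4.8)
The plan is to treat this as a mixed-bias (Neyman-orthogonality) calculation: I would show that, after reducing the pooled expectation to integrals over $\X$, the difference $\E\{\bphi_w(\bb,\rho^*)-\bphi_w(\bb,\rho_0)\}$ can be written as an integral of $\{w^*-w_0\}$ \emph{against} the two regression errors — a product, not a sum of first-order terms — after which Cauchy--Schwarz delivers the stated bound. Throughout, fix $\bb$ and let $\m_0(\x)=\E\{\s(Y,\x;\bb)\mid\x\}$, $\wt\m_0(\x,\wh y)=\E\{\s(Y,\x;\bb)\mid\x,\wh y\}$ be the corresponding true nuisances, so that $\bar\m_0(\x):=\E\{\wt\m_0(\x,\wh Y)\mid\x\}=\m_0(\x)$ by the tower rule.

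\textbf{Step 1 (reduce to integrals over $\X$).} By assumption~\eqref{eq:assume} and $p(\wh y\mid\x,y)=q(\wh y\mid\x,y)$, the conditional law of $(Y,\wh Y)$ given $\X$ is common to $\calL$, $\calU$, and the pooled sample, so $\E\{R\,g(\X,Y,\wh Y)\}=\E[\pi_0(\X)\,\E\{g\mid\X\}]$ and $\E\{(1-R)\,g\}=\E[\{1-\pi_0(\X)\}\,\E\{g\mid\X\}]$. Writing $D(\x)=\pi p(\x)+(1-\pi)q(\x)$ for the pooled marginal and using $\pi_0(\x)/\pi=p(\x)/D(\x)$ and $\{1-\pi_0(\x)\}/(1-\pi)=q(\x)/D(\x)$, each of the three terms of $\E\{\bphi_w(\bb,\rho^*)\}$ collapses to an integral over $\x$: $\int p\,w^*(\m_0-\bar\m^*)$, $\int q\,\m^*$, and $\int \tfrac{w^*D}{\pi+(1-\pi)w^*}(\bar\m^*-\m^*)$, where $\bar\m^*(\x):=\E\{\wt\m^*(\x,\wh Y)\mid\x\}$. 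Applying the same reduction to $\rho_0$ gives $\E\{\bphi_w(\bb,\rho_0)\}=\int q\,\m_0$, which is exactly the nuisance-free piece of $\E\{\bphi_w(\bb,\rho^*)\}$ (expand $\m^*=\m_0+\Delta_\m$), so it cancels in the difference.

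\textbf{Step 2 (the crux).} With $\bar\Delta:=\bar\m^*-\m_0$ and $\Delta_\m:=\m^*-\m_0$, collecting coefficients shows $\E\{\bphi_w(\bb,\rho^*)-\bphi_w(\bb,\rho_0)\}=\int\{\bar\Delta\,A^*+\Delta_\m\,B^*\}$ with $A^*=\tfrac{w^*D}{\pi+(1-\pi)w^*}-p\,w^*$ and $B^*=q-\tfrac{w^*D}{\pi+(1-\pi)w^*}$. Using only $q=w_0 p$ and $D-\pi p=(1-\pi)q$, one verifies
\[
A^*=-\frac{(1-\pi)\,p\,w^*}{\pi+(1-\pi)w^*}\,(w^*-w_0),\qquad B^*=-\frac{\pi\,p}{\pi+(1-\pi)w^*}\,(w^*-w_0),
\]
hence
\[
\E\{\bphi_w(\bb,\rho^*)-\bphi_w(\bb,\rho_0)\}=-\int\frac{p\,(w^*-w_0)}{\pi+(1-\pi)w^*}\,\bigl\{(1-\pi)\,w^*\,\bar\Delta+\pi\,\Delta_\m\bigr\}.
\]
I expect this exact cancellation — the vanishing of every term linear in $\bar\Delta$ or $\Delta_\m$ alone — to be the main obstacle: it is precisely where the doubly robust form of $\bphi_w$ in~\eqref{eq:eifw} is needed, and the algebra for $A^*$ and $B^*$ must be carried out carefully to confirm that no first-order remainder survives.

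\textbf{Step 3 (bound).} With $\pi$ bounded away from $0$ and $1$, and the working density ratio $w^*$ taken nonnegative and uniformly bounded (consistent with~\eqref{equa}), one has $\tfrac{(1-\pi)w^*}{\pi+(1-\pi)w^*}\le 1$ and $\tfrac{\pi}{\pi+(1-\pi)w^*}\le 1$; since $p$ integrates to one, the triangle inequality for vector-valued integrals gives $\|\E\{\bphi_w(\bb,\rho^*)-\bphi_w(\bb,\rho_0)\}\|_2\le\E_p[\,|w^*(\X)-w_0(\X)|\,(\|\bar\Delta(\X)\|_2+\|\Delta_\m(\X)\|_2)\,]$. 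Cauchy--Schwarz bounds each summand by $\|w^*-w_0\|_2$ times $\|\bar\Delta\|_2$ or $\|\m^*-\m_0\|_2$, and conditional Jensen ($\|\E\{\cdot\mid\X\}\|_2\le\E\{\|\cdot\|_2\mid\X\}$, then squaring) gives $\|\bar\Delta\|_2\le\|\wt\m^*-\wt\m_0\|_2$. Absorbing the boundedness constants and the finite density-ratio constants needed to interchange the $L^2$ norms under $p$, $q$, and the pooled law into a single $c_0$ yields the claim; in particular the right-hand side vanishes when $w^*=w_0$, or when both $\wt\m^*=\wt\m_0$ and $\m^*=\m_0$, which is the double robustness of $\wh\bb$.
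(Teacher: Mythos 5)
Your proof is correct, and it takes a cleaner route than the paper's. The paper's argument (the commented-out proof of this lemma, and the analogous bias-term analysis, terms (I)--(III), in the proof of Theorem~\ref{asy}) decomposes $\E\{\bphi_w(\rho^*,\bb)-\bphi_w(\rho_0,\bb)\}$ component by component, uses iterated expectations (e.g.\ $\E\{\wt\m_0(\X,\wh Y)\mid\X\}=\m_0(\X)$) to kill the pieces linear in a single regression error, and then bounds the surviving cross terms involving $\frac{w^*}{(1-\pi)w^*+\pi}-\frac{w_0}{(1-\pi)w_0+\pi}$; crucially, the linear pieces only vanish after being combined \emph{across} the three components of $\bphi_w$, a cancellation the paper leaves largely implicit (and which its displayed first equality for this lemma does not fully account for). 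You instead reduce all three components to integrals against $p$, $q$ and the pooled density $D=\pi p+(1-\pi)q$, collect the coefficients of $\bar\Delta$ and $\Delta_\m$, and verify algebraically that each carries an explicit factor $(w^*-w_0)$, yielding the exact identity with weights $\frac{(1-\pi)w^*}{\pi+(1-\pi)w^*}$ and $\frac{\pi}{\pi+(1-\pi)w^*}$, both bounded by one whenever $w^*\ge 0$. This buys a fully transparent mixed-bias representation (no hidden cancellation, no upper bound on $w^*$ needed), makes the double-robustness consequence immediate, and the final step (triangle inequality, Cauchy--Schwarz, conditional Jensen to pass from $\bar\Delta$ to $\wt\m^*-\wt\m_0$) is standard; like the paper, you correctly note that a single constant $c_0$ requires the various $L_2$ norms to be taken under mutually bounded density ratios, which is implicit in the paper's statement as well.
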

\begin{remark}
Lemma \ref{DR1} provides the error bound for the estimator when the working model $\rho^*$ is know. Analyzing the error terms reveals that our estimator converges when either $w^*(\x)$ or $(\wt\m^*(\x,\wh{y}), \m^*(\x))$ converges \Xiudi{but there is no data in this lemma anywhere, how can we talk about convergence}, aligning with the behavior of a traditional doubly robust estimator.
\end{remark}

\begin{remark}
    Moreover, when all working models are close to the true model but converge at a slow rate, the product of the error terms also converges at a slow rate. If this product is faster than $O_p(M^{-1/2})$-for instance, if each rate is $o_p(M^{-1/4})$-then the estimator obtained using the nuisance parameter estimators will be asymptotically equivalent to the estimator with true nuisance parameters. (see Theorem \ref{asy} below for a formal statement).
\end{remark}

}

\comment{
This estimator can be also written as
\bse
\wh\beta(\wh\rho)&=&\frac{1}{2}\sum_{k=1}^2\wh{E}_k
\left[\frac{R}{\pi}\wh{w}_k(\X)\{\s(Y,\X)-\wh{\wt{\m}}_k(\X,\wh{Y})\}\right. \\
&&\left. + \frac{1-\pi(\X)}{1-\pi}\{\wh{\wt{\m}}_k(\x,\wh{y}) - \wh{\m}_k(\X)\} + \frac{1-R}{1-\pi}\wh{\m}_k(\X)\right]
\ese}

Now we develop the theoretical results for the proposed estimator $\wh\bb$, with all the proofs contained in Appendix~\ref{sec:proofs}.
Firstly, we need some assumptions on the nuisance functions and their estimates.
\begin{assumption}[Requirements on Nuisance Estimators]
\label{nuisance} 
For $k=1,\dots,K$, 
we assume positive density ratio models such that $\wh{w}_k(\x)$ and $w^*(\x)$ are bounded away from zero.
We assume
the nuisance estimators $\wh\rho_k=(\wh{w}_k(\x),\wh{\wt\m}_k(\x,\wh{y}),\wh{\m}_k(\x))$ converges to the limit $\rho^*=(w^*(\x),\wt\m^*(\x,\wh{y}),\m^*(\x))$ in MSE at the
following rates:
\bse
&&\|\wh{w}_k(\x)-w^*(\x)\|_2= a_{1M},\\
&&\|\wh{\m}_k(\x)-\m^*(\x)\|_2=a_{2M},\\
&&\|\wh{\wt\m}_k(\x,\wh{y})-\wt\m^*(\x,\wh{y})\|_2=a_{3M},
\ese
where $\max\{a_{1M}, a_{2M}, a_{3M}\}\to 0$ as $M\to+\infty$.
\end{assumption}

\begin{remark}
The regularity conditions on nuisance estimators are standard and widely advocated in double/debiased ML literature, e.g., \citet{van2011targeted}, \citet{chernozhukov2018double}, \citet{kennedy2024semiparametric}, \citet{chernozhukov2024applied}. The convergence rate is achievable for many ML methods such as in regression trees and random forests \citep{wager2015adaptive} and a class of neural nets \cite{chen1999improved}. One can refer to \citet{chernozhukov2018double} for more examples.
\end{remark}

\begin{theorem}[Double Robustness]\label{consist}
Assume Assumption \ref{nuisance} holds. If either $w^*(\x) = w_0(\x)$ or $(\wt\m^*(\x,\wh{y}), \m^*(\x)) = (\wt\m_0(\x,\wh{y}), \m_0(\x))$,
then 
\bse
\|\wh\bb-\bb_0\|_2=o_p(1).
\ese
\end{theorem}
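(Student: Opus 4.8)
The plan is to treat $\wh\bb$ as a $Z$-estimator with a cross-fitted plug-in nuisance, and to run the standard consistency argument: identify the idealized (population, limiting-nuisance) estimating function, show it vanishes at $\bb_0$ and is uniquely zeroed there under the stated either/or condition, prove that the cross-fitted empirical estimating function converges uniformly to it, and conclude by a Z-/M-estimator continuity argument. Let $G(\bb,\rho)$ denote the map obtained by replacing every $\wh E_k$ in \eqref{equa} with a population expectation and every $\wh\rho_k$ with a generic $\rho=(w,\wt\m,\m)$, and write $\N(\rho^*,\bb)$ for \eqref{equa} evaluated with the limit $\rho^*$ plugged in at each fold. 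The three things to establish are: (a) $G(\bb_0,\rho^*)=\0$ and $\bb_0$ is the unique, well-separated zero of $G(\cdot,\rho^*)$; (b) $\sup_{\bb}\|\N(\wh\rho,\bb)-G(\bb,\rho^*)\|=o_p(1)$ over a compact neighborhood of $\bb_0$; and (c) the passage from these to $\|\wh\bb-\bb_0\|_2=o_p(1)$.

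For (a) I would expand $G(\bb,\rho^*)$ into its three expectations and use three identities: \eqref{eq:assume}, i.e. $\pr(R=1\mid Y,\wh Y,\X)=\pi(\x)$; the covariate-shift consequence that the conditional law of $(Y,\wh Y)$ given $\X$ is the same on $\{R=1\}$ and $\{R=0\}$; and $\tfrac{w_0(\x)}{\pi+(1-\pi)w_0(\x)}=\tfrac{1-\pi(\x)}{1-\pi}$. In the case $w^*=w_0$, a change of measure turns the first term into $\E_q\{\s(Y,\X;\bb)-\wt\m^*(\X,\wh Y)\}$, the second into $\E_q\{\wt\m^*(\X,\wh Y)-\m^*(\X)\}$ and the third into $\E_q\{\m^*(\X)\}$; these telescope to $\E_q\{\s(Y,\X;\bb)\}$, which is $\0$ at $\bb_0$ by \eqref{eq:generalbb}. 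In the case $(\wt\m^*,\m^*)=(\wt\m_0,\m_0)$ with $w^*$ arbitrary (but bounded away from zero), conditioning on $\X$ shows the second term is identically $\0$ because $\E\{\wt\m_0(\X,\wh Y)-\m_0(\X)\mid\X\}=\0$; the first term reduces to $\E[\tfrac{\pi(\X)w^*(\X)}{\pi}\{\E(\s(Y,\X;\bb)\mid\X)-\m_0(\X)\}]$, which is $\0$ at $\bb_0$; and the third is $\E_q\{\m_0(\X)\}=\E_q\{\s(Y,\X;\bb_0)\}=\0$. In both cases $G(\cdot,\rho^*)$ is, up to an additive constant vector, the gradient of a population objective that is strictly convex in $\bb$ (a positively reweighted expectation of $\ell$, using the positivity/overlap built into Assumption~\ref{nuisance}), so $\bb_0$ is its unique zero and is well separated on compacta.

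For (b) I would split $\N(\wh\rho,\bb)-G(\bb,\rho^*)=\{\N(\wh\rho,\bb)-\N(\rho^*,\bb)\}+\{\N(\rho^*,\bb)-G(\bb,\rho^*)\}$. The second bracket is $o_p(1)$ uniformly over a compact $\Theta$ by a uniform law of large numbers: the summand, as a function of the data indexed by $\bb$, is a Glivenko--Cantelli class under continuity of $\s(y,\x;\cdot)$ and an integrable envelope, using that $w^*$ and $\tfrac{w^*}{\pi+(1-\pi)w^*}$ are bounded. The first bracket is where cross-fitting does the work: conditionally on $D^c_k$ the estimate $\wh\rho_k$ is fixed and $\wh E_k$ averages i.i.d. terms, so the fold-$k$ contribution equals a deterministic drift $\E[\text{summand}(\rho,\bb)]\big|_{\rho=\wh\rho_k}-\E[\text{summand}(\rho^*,\bb)]$ plus a mean-zero fluctuation whose conditional variance is $O(K/n)$ times a bounded increment moment, hence $o_p(1)$; by Cauchy--Schwarz and boundedness the drift is $O_p(a_{1M}+a_{2M}+a_{3M})$ uniformly in $\bb$ (the $a_{1M}$ term multiplied by an envelope of $\s(\cdot;\bb)$), which is $o_p(1)$ by Assumption~\ref{nuisance}. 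Combining, $\sup_\bb\|\N(\wh\rho,\bb)-G(\bb,\rho^*)\|=o_p(1)$. For (c), since $\N(\wh\rho,\wh\bb)=\0$ by construction and, for convex $\ell$, $\wh\bb$ minimizes a convex empirical criterion whose population limit is uniquely minimized at $\bb_0$, standard consistency of convex $Z$-/M-estimators (no a priori compactness needed in that route) yields $\|\wh\bb-\bb_0\|_2=o_p(1)$. I expect the main obstacle to be the disjunctive population identity in (a) — getting the telescoping right requires combining all three covariate-shift/conditional-independence identities together with the defining moment condition $\E_q\{\s(Y,\X;\bb_0)\}=\0$ — together with the uniform-in-$\bb$ control in (b), which needs either compactness of $\Theta$ plus continuity/envelope conditions on $\s$, or an appeal to convexity of $\ell$ to dispense with compactness.
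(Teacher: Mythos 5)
Your proposal is correct, and its core step — verifying that the population estimating function evaluated at $\bb_0$ with the limiting nuisances vanishes under either $w^*(\x)=w_0(\x)$ (via the change of measure $\E\{\frac{R}{\pi}w_0(\X)g\}=\E_q\{g\}$ together with $\frac{w_0(\x)}{\pi+(1-\pi)w_0(\x)}=\frac{1-\pi_0(\x)}{1-\pi}$, telescoping to $\E_q\{\s(Y,\X;\bb_0)\}=\0$) or $(\wt\m^*,\m^*)=(\wt\m_0,\m_0)$ (via conditioning on $(\X,\wh Y)$, the conditional-independence assumption, and the tower property) — is exactly the computation in the paper's proof of Theorem~\ref{consist}. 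The paper stops at this double-robustness moment identity, treating the passage to $\|\wh\bb-\bb_0\|_2=o_p(1)$ as standard, whereas you additionally spell out the uniform convergence of the cross-fitted estimating function and the identification/well-separation step; these additions are compatible with, and somewhat more complete than, the paper's own argument.
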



\begin{remark}
    Theorem \ref{consist} establishes the classic double robustness property, meaning that the proposed estimator $\wh\bb$ remains consistent if $w^*(\x) = w_0(\x)$ or $(\wt\m^*(\x,\wh{y}), \m^*(\x)) = (\wt\m_0(\x,\wh{y}), \m_0(\x))$. 
\end{remark}


Consequently, we can develop the asymptotic representation of the proposed estimator $\wh\bb$ and further show that it achieves the optimal efficiency bound derived in Proposition~\ref{pro:eifw}.

\begin{theorem}
[Asymptotic Normality and Semiparametric Efficiency]
\label{asy}
Assume Assumption \ref{nuisance} holds.
Further, if we assume $a_{1M}a_{2M}=o(M^{-1/2})$, $a_{1M}a_{3M}=o(M^{-1/2})$, and that all nuisance components are correctly specified, then as $M\rightarrow \infty$,
\bse
\sqrt{M}(\wh\bb-\bb_0)\overset{d}{\rightarrow}N(\0,\bOmega\V_w\bOmega),
\ese
where $\bOmega\V_w\bOmega=\bOmega\E(\bphi^{\otimes 2}_{w})\bOmega$ is the semiparametric efficiency bound derived in Proposition~\ref{pro:eifw}.
\end{theorem}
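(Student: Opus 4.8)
The plan is to perform the standard one-step expansion of the estimating equation $\N(\wh\rho,\wh\bb)=\0$ around the truth $(\rho_0,\bb_0)$, using cross-fitting to decouple the nuisance estimation error from the empirical process, and then argue that the leading term is an i.i.d.\ average of $\bOmega\bphi_w$ so that a CLT applies. Concretely, write $\N(\wh\rho,\wh\bb)-\N(\wh\rho,\bb_0) = \{\partial\N(\wh\rho,\bar\bb)/\partial\bb\trans\}(\wh\bb-\bb_0)$ by a mean-value argument (componentwise), where $\bar\bb$ lies on the segment between $\wh\bb$ and $\bb_0$; Theorem~\ref{consist} gives $\wh\bb\overset{p}{\to}\bb_0$, and under correct specification plus Assumption~\ref{nuisance} the Jacobian converges in probability to $\E_q\{\partial\s(Y,\X;\bb_0)/\partial\bb\trans\}=\bOmega^{-1}$. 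Hence it suffices to show $\sqrt{M}\,\N(\wh\rho,\bb_0) \overset{d}{\to} N(\0,\V_w)$, and then $\sqrt{M}(\wh\bb-\bb_0) = -\bOmega\cdot\sqrt{M}\,\N(\wh\rho,\bb_0)+o_p(1)$ delivers the claim with asymptotic variance $\bOmega\V_w\bOmega$.

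The core of the argument is the decomposition
\[
\sqrt{M}\,\N(\wh\rho,\bb_0) = \sqrt{M}\,\N(\rho_0,\bb_0) + \sqrt{M}\{\N(\wh\rho,\bb_0)-\E[\N(\wh\rho,\bb_0)\mid D^c]\} - \sqrt{M}\{\N(\rho_0,\bb_0)-\E[\N(\rho_0,\bb_0)]\} + \sqrt{M}\,\E[\N(\wh\rho,\bb_0)\mid D^c],
\]
i.e.\ leading term $+$ empirical-process remainder $+$ bias remainder. The first term is $M^{-1/2}\sum_{i=1}^M \bphi_w(\d_i;\rho_0,\bb_0)+o_p(1)$ up to the $\pi$-vs-$n/M$ bookkeeping, which is mean zero (by construction of $\bphi_w$, using \eqref{eq:assume} and the covariate-shift identities in the Notation paragraph) with variance $\V_w$ by the formula in Proposition~\ref{pro:eifw}; a Lindeberg CLT then applies. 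The empirical-process remainder is handled by cross-fitting: conditionally on $D_k^c$, the summands over $D_k$ are i.i.d.\ with the nuisances frozen, so the centered sum has conditional variance of order $\|\wh\rho_k-\rho_0\|_2^2 = o_p(1)$ by Assumption~\ref{nuisance}, whence the term is $o_p(1)$ by Chebyshev (this is the usual Lemma~6.1-type step of \citet{chernozhukov2018double}). The bias remainder is where the rate conditions enter: $\E[\N(\wh\rho_k,\bb_0)\mid D_k^c] = \E\{\bphi_w(\bb_0,\wh\rho_k)\}$, and by the Neyman-orthogonality / second-order structure of $\bphi_w$ — precisely the product-error bound of the (commented-out) Lemma~\ref{DR1}, $\|\E\{\bphi_w(\bb_0,\rho^*)-\bphi_w(\bb_0,\rho_0)\}\|_2 \lesssim \|w^*-w_0\|_2(\|\wt\m^*-\wt\m_0\|_2 + \|\m^*-\m_0\|_2)$ — this is $O_p(a_{1M}a_{2M}+a_{1M}a_{3M})=o(M^{-1/2})$, so its contribution after multiplying by $\sqrt{M}$ vanishes.

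The main obstacle is establishing the second-order (doubly robust) bias bound with enough care that the only surviving cross terms are the products $\|w^*-w_0\|_2\|\wt\m^*-\wt\m_0\|_2$ and $\|w^*-w_0\|_2\|\m^*-\m_0\|_2$ — in particular verifying that the middle term $\frac{w(\x)}{\pi+(1-\pi)w(\x)}\{\wt\m(\x,\wh y)-\m(\x)\}$ contributes no first-order bias in $\wt\m$ or $\m$ on its own. This requires expanding $\E\{\bphi_w(\bb_0,\rho^*)\}$ by adding and subtracting $\rho_0$ piece by piece, using the tower property with conditioning on $(\X,\wh Y)$ and on $\X$ so that $\E\{\s(Y,\X;\bb_0)\mid\X,\wh Y\}=\wt\m_0$ and $\E\{\wt\m_0(\X,\wh Y)\mid\X\}=\m_0$ kill the linear terms, and then bounding the residual products by Cauchy–Schwarz together with the boundedness of $\wh w_k$ and $w^*$ away from zero (and above) from Assumption~\ref{nuisance}. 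Everything else — the Jacobian convergence, the CLT, the cross-fitting variance bound — is routine once this structural computation is in place.
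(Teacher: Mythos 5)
Your proposal is correct and follows essentially the same route as the paper's proof: the identical three-part decomposition of the cross-fitted estimating equation at $\bb_0$ into (a) an i.i.d.\ CLT term based on $\bphi_w(\rho_0,\bb_0)$, (b) an empirical-process remainder handled by conditioning on $D_k^c$ and Chebyshev, and (c) a conditional bias term whose linear parts cancel by the tower property (conditioning on $(\X,\wh Y)$ and on $\X$), leaving only the products $a_{1M}a_{2M}$ and $a_{1M}a_{3M}=o(M^{-1/2})$. The only difference is presentational: you make the mean-value/Jacobian step $\sqrt{M}(\wh\bb-\bb_0)=-\bOmega\sqrt{M}\,\N(\wh\rho,\bb_0)+o_p(1)$ explicit, whereas the paper asserts this linearization directly.
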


\begin{remark}\label{asyre}
Theorem \ref{asy} further establishes that if all nuisance estimators converge to their true values at a sufficiently fast rate, the proposed estimator
$\wh\bb$ attains a convergence rate of
$O_p(M^{-1/2})$ and is asymptotically normal with the efficiency lower bound
$\bOmega\V_w\bOmega$ as its limiting variance. 
Further, it introduces the concept of rate double robustness.
    As one can see from Theorem \ref{asy}, what matters is the products $a_{1M}a_{2M}$ and $a_{1M}a_{3M}$, instead of the individual convergence rates $a_{1M}$, $a_{2M}$ and $a_{3M}$.
Notably, this rate requirement is relatively mild, ensuring consistency and efficiency even when all nuisance estimators converge at
$o_p(M^{-1/4})$, which is considerably slower than the parametric rate
$O_p(M^{-1/2})$.
In particular, we impose no restrictions requiring nuisance estimators to belong to Donsker or bounded entropy classes \cite{van1998asymptotic}, thereby permitting the use of flexible machine learning methods. Furthermore, the product rate condition allows faster-converging estimators to offset the effects of slower-converging ones, enhancing practical applicability.
\end{remark}

\begin{remark}
An interesting scenario is when we have a lot more unlabeled data in that $N\gg n$ so $\pi\to 0$.
Accordingly, one can derive that $\bOmega\V_w\bOmega=O(\pi^{-1})$ instead of $O(1)$.
Thus, the above asymptotic representation shall be more precisely written as
\bse
\sqrt{n}(\wh\bb-\bb_0)\overset{d}{\rightarrow}N(\0,\pi\bOmega\V_w\bOmega).
\ese
This indicates, even if we have sufficiently many unlabeled data, 
the \emph{real} sample size for deriving 
the efficient estimator is still $n$.
Intuitively we do not have 
the ground truth $Y$ in the unlabeled data, so they (with sample size $N$) cannot contribute to increasing the convergence rate.
\end{remark}


Finally we present how to construct the confidence interval for the linear combination $\v\trans\bb$ as well as its validity.
\begin{theorem}
[Construction of Confidence Interval]\label{CI}
Assume Assumption \ref{nuisance} holds. If we assume $a_{1M}a_{2M}=o(M^{-1/2})$, $a_{1M}a_{3M}=o(M^{-1/2})$, and that all nuisance components are correctly specified, then as $M\rightarrow \infty$,
\be\label{eva}
\wh\bOmega\wh{\V}_w\wh\bOmega=\wh\bOmega\frac{1}{K}\sum_{k=1}^K\wh{E}_k\{\bphi^{\otimes 2}_w(\wh{\rho};\wh\bb)\}\wh\bOmega\overset{p}{\rightarrow} \bOmega\V_w \bOmega.
\ee
Consequently, for any vector $\v\in \mathbb{R}^d$, we have
\be\label{est:cen}
\frac{\sqrt{M}\v\trans (\wh\bb-\bb_0)}{(\v\trans \wh\bOmega\wh{\V}_w\wh\bOmega\v)^{1/2}}\overset{d}{\rightarrow} N(0,1).
\ee
Thus, we can construct a $(1-\alpha)\times 100\%$ confidence interval for $\gamma = \v\trans \bb_0$,
\be\label{CIS}
{\rm{CI}}_{\alpha}=\{\gamma: |\gamma-\v\trans\wh\bb|\leq \Phi^{-1}(1-\alpha/2)M^{-1/2}\wh{\rm{sd}}\},
\ee
where $\wh{\rm{sd}}=(\v\trans \wh\bOmega\wh{\V}_w\wh\bOmega\v)^{1/2}$ and $\Phi$ is the cumulative distribution function of the standard normal distribution. Further, this CI satisfies
 \be\label{CIScoverage}
 \lim_{M\rightarrow \infty}\pr\left(\v\trans\bb_0\in {\rm{CI}}_{\alpha}\right)= 1-\alpha.
 \ee
\end{theorem}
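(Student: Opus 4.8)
The plan is to leverage Theorem~\ref{asy}, which already delivers $\sqrt{M}(\wh\bb-\bb_0)\overset{d}{\rightarrow}N(\0,\bOmega\V_w\bOmega)$, so that the only genuinely new work is establishing consistency of the sandwich estimator, $\wh\bOmega\wh\V_w\wh\bOmega\overset{p}{\rightarrow}\bOmega\V_w\bOmega$; the studentized limit \eqref{est:cen} and the coverage statement \eqref{CIScoverage} then drop out from Slutsky's theorem and continuity of $\Phi$. Throughout, ``all nuisance components correctly specified'' means $\rho^*=\rho_0$, so Assumption~\ref{nuisance} yields $\|\wh w_k-w_0\|_2,\|\wh\m_k-\m_0\|_2,\|\wh{\wt\m}_k-\wt\m_0\|_2\to0$ for each $k$, and Theorem~\ref{consist} gives $\wh\bb\overset{p}{\rightarrow}\bb_0$; I will also assume the standard moment/continuity conditions (finiteness of the relevant second-type moments of $\s,w_0,\m_0,\wt\m_0$, a local dominating function for $\partial\s/\partial\bb\trans$) that already underlie Theorem~\ref{asy}.

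First I would handle $\wh\bOmega$, the cross-fitted plug-in estimator of $\bOmega$ implied by \eqref{equa} (the inverse of the empirical $\bb$-Jacobian of $\N(\wh\rho,\cdot)$ at $\wh\bb$). Conditioning on $D_k^c$ so that $(\wh\rho_k,\wh\bb)$ are frozen and the points of $D_k$ are i.i.d., a conditional uniform law of large numbers over a neighborhood of $\bb_0$, together with $\wh\bb\overset{p}{\rightarrow}\bb_0$ and $\|\wh w_k-w_0\|_2\to0$, shows the Jacobian converges in probability to $\E_q\{\partial\s(Y,\X;\bb_0)/\partial\bb\trans\}$, which is invertible by hypothesis; continuity of matrix inversion then gives $\wh\bOmega\overset{p}{\rightarrow}\bOmega$.

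Next I would treat $\wh\V_w=\frac1K\sum_{k}\wh E_k\{\bphi_w^{\otimes2}(\wh\rho;\wh\bb)\}$ fold by fold. Conditional on $D_k^c$, a law of large numbers gives $\wh E_k\{\bphi_w^{\otimes2}(\wh\rho_k;\wh\bb)\}-\E\{\bphi_w^{\otimes2}(\wh\rho_k;\wh\bb)\mid D_k^c\}\overset{p}{\rightarrow}\0$, so it remains to show the conditional mean converges to $\V_w=\E\{\bphi_w^{\otimes2}(\rho_0;\bb_0)\}$. Since each entry of $\bphi_w$ is a polynomial in $\s(Y,\X;\bb)$, $w(\X)$, $\m(\X)$, $\wt\m(\X,\wh Y)$ and the bounded factors $R,\pi,\pi(\X)$ (with $\wh w_k$ bounded away from $0$ by Assumption~\ref{nuisance}), the $L^2$-convergence of the nuisance estimators, $\wh\bb\overset{p}{\rightarrow}\bb_0$, and continuity of $\s$ in $\bb$ give $\bphi_w(\wh\rho_k;\wh\bb)\to\bphi_w(\rho_0;\bb_0)$ in conditional $L^2$; squaring and applying Cauchy--Schwarz to the bilinear cross terms, using the moment bounds, yields convergence of the conditional second moments. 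Averaging over $k$ gives $\wh\V_w\overset{p}{\rightarrow}\V_w$, and combining with the previous step and the continuous-mapping theorem establishes \eqref{eva}. Then for fixed $\v$, $\v\trans\wh\bOmega\wh\V_w\wh\bOmega\v\overset{p}{\rightarrow}\v\trans\bOmega\V_w\bOmega\v>0$, so Theorem~\ref{asy} and Slutsky give \eqref{est:cen}; finally
\[
\pr(\v\trans\bb_0\in{\rm CI}_\alpha)
=\pr\!\left(\left|\frac{\sqrt{M}\,\v\trans(\wh\bb-\bb_0)}{(\v\trans\wh\bOmega\wh\V_w\wh\bOmega\v)^{1/2}}\right|\le\Phi^{-1}(1-\alpha/2)\right)
\longrightarrow 1-\alpha
\]
by \eqref{est:cen} and continuity of the standard normal CDF, which is \eqref{CIScoverage}.

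The hard part will be the conditional-second-moment step for $\wh\V_w$: because $\bphi_w^{\otimes2}$ is \emph{quadratic} in the nuisances and in $\s$, passing from mere $L^2$-consistency of the machine-learning estimators $\wh w_k,\wh\m_k,\wh{\wt\m}_k$ to consistency of this quadratic functional requires control of the bilinear cross terms, hence slightly stronger integrability than Theorem~\ref{asy} alone, and all such bounds must be invoked conditionally on $D_k^c$ so that the cross-fitting independence is available; the analogous but milder issue arises in Step~1 through the need for a local uniform LLN on $\partial\s/\partial\bb\trans$.
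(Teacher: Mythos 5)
Your proposal follows essentially the same route as the paper: Theorem~\ref{asy} plus Slutsky and continuity of the normal CDF handle \eqref{est:cen} and \eqref{CIScoverage}, and the real work is the consistency of the sandwich, which the paper also establishes by splitting $\wh\V_w-\V_w$ into a law-of-large-numbers piece at the true $(\rho_0,\bb_0)$ and a ``plug-in'' piece controlled via Cauchy--Schwarz on the bilinear cross terms, the $L^2$ rates of the nuisance estimators, and the consistency $\|\wh\bb-\bb_0\|_2=o_p(1)$ from Theorem~\ref{consist} (the paper, like you, treats $\wh\bOmega\overset{p}{\to}\bOmega$ lightly).

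The one place your argument as written would break is the repeated claim that conditioning on $D_k^c$ freezes $(\wh\rho_k,\wh\bb)$. It freezes $\wh\rho_k$, but $\wh\bb$ solves \eqref{equa}, which uses \emph{all} folds including $D_k$, so $\wh\bb$ is not $D_k^c$-measurable; conditional on $D_k^c$ the summands $\bphi_w^{\otimes2}(\wh\rho_k;\wh\bb)(\d_i)$, $i\in D_k$, are not i.i.d.\ and your conditional LLN (and the object $\E\{\bphi_w^{\otimes2}(\wh\rho_k;\wh\bb)\mid D_k^c\}$, which would integrate over the randomness of $\wh\bb$) does not mean what you intend. The standard repair, and the one the paper uses, is to decouple $\wh\bb$ \emph{before} conditioning: bound $\wh E_k\{\bphi_w(\wh\rho_k;\wh\bb)-\bphi_w(\rho_0;\bb_0)\}^{\otimes2}$ by a term of order $\|\wh\bb-\bb_0\|_2^2$ (local Lipschitzness of $\bphi_w$ in $\bb$) plus $\wh E_k\{\bphi_w(\wh\rho_k;\bb_0)-\bphi_w(\rho_0;\bb_0)\}^{\otimes2}$, and only then apply the conditional-on-$D_k^c$ argument (Markov/Chebyshev plus the nuisance $L^2$ rates) to the latter, where everything random in the summands besides $\d_i$ is genuinely fixed. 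Your treatment of $\wh\bOmega$ via a uniform LLN over a neighborhood of $\bb_0$ evaluated at $\wh\bb$ already sidesteps this issue correctly; the same device, or the replace-$\wh\bb$-first decomposition, should be used in the $\wh\V_w$ step. With that adjustment your argument matches the paper's proof.
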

\begin{remark}
Theorem \ref{CI} establishes that, under the same conditions, the efficiency lower bound can be consistently estimated by constructing the sample estimator of $\E\{\bphi^{\otimes 2}_w(\rho_0,\bb_0)\}$ using cross-fitted nuisance estimators and the proposed estimator $\wh\bb$. Consequently, the confidence interval in equation \eqref{CIS} asymptotically attains the correct coverage probability. Moreover, since the proposed estimator $\v\trans\wh\bb$ asymptotically achieves the smallest possible variance, the confidence interval in equation \eqref{CIS} tends to be shorter than those based on less efficient estimators.
\end{remark}

\comment{\begin{theorem}[Validity of prediction-powered inference].
Fix $\alpha\in (0,1)$ and let
\be\label{Coverage}
C_{\alpha}^{PP}\left\{\bb:|\m_\bb|\leq \w_\bb(\alpha)\right\},
\ee
 where $\w_\bb(\alpha)\in \R^d$ has $j$-th coordinate equal to $\w_{\bb,j}(\alpha)=z_{1-\alpha/(2d)}\sqrt{\wh\sigma^2_{\m_\bb,j}/M}$, and the inequality in \eqref{Coverage} is applied coordinatewise. Then,
 \bse
 \liminf_{M\rightarrow \infty}\pr\left(\bb^*\in C_{\alpha}^{PP}\right)\geq 1-\alpha.
 \ese
\end{theorem}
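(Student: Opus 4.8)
The plan is to recognize $C_{\alpha}^{PP}$ as a prediction-powered confidence region in the sense of \citet{angelopoulos2023prediction}, obtained by inverting the (cross-fitted, rectified) estimating equation, and to prove its coverage directly at the truth via a coordinatewise central limit theorem and a Bonferroni union bound. Write $\m_\bb=\N(\wh\rho,\bb)$ for the $d$-dimensional measure of fit, i.e. the cross-fitted empirical average $\frac1K\sum_{k=1}^K\wh E_k\{\bphi_w(\wh\rho_k;\bb)\}$ appearing on the left-hand side of \eqref{equa}. Since $\bb^*=\bb_0$ is defined by $\E_q\{\s(Y,\X;\bb_0)\}=\0$ and $\bphi_w$ is a mean-zero influence function at the truth, we have $\E\{\bphi_w(\rho_0;\bb_0)\}=\0$. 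The event $\{\bb_0\in C_{\alpha}^{PP}\}$ is exactly $\{|\m_{\bb_0,j}|\le \w_{\bb_0,j}(\alpha)\text{ for all }j=1,\dots,d\}$, so the entire argument reduces to controlling the fixed-$\bb_0$ random vector $\m_{\bb_0}$ together with its estimated coordinate variances $\wh\sigma^2_{\m_{\bb_0},j}$; crucially, no optimization or estimating-equation inversion over $\bb$ is needed.

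First I would establish $\sqrt{M}\,\m_{\bb_0}\overset{d}{\rightarrow}N(\0,\V_w)$. Decompose $\sqrt{M}\,\m_{\bb_0}$ into an oracle term $M^{-1/2}\sum_{i=1}^M\bphi_w(\rho_0;\bb_0)(\d_i)$, a centered empirical-process remainder, and a nuisance-drift term. The oracle term is an i.i.d. average of mean-zero vectors with covariance $\V_w$ (Proposition~\ref{pro:eifw}), so the multivariate CLT applies. The empirical-process remainder is $o_p(1)$ because cross-fitting makes each $\wh\rho_k$ independent of the fold on which it is evaluated, so conditioning plus Assumption~\ref{nuisance} kills it. The nuisance-drift term is where Neyman orthogonality enters: the Gateaux derivative of $\rho\mapsto\E\{\bphi_w(\rho;\bb_0)\}$ vanishes at $\rho_0$, so the residual bias is second order in the nuisance errors and is bounded by a product of the rates in Assumption~\ref{nuisance}; the conditions $a_{1M}a_{2M}=o(M^{-1/2})$ and $a_{1M}a_{3M}=o(M^{-1/2})$ then force it to be $o_p(M^{-1/2})$. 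This is precisely the argument used for Theorem~\ref{asy}, specialized to the fixed point $\bb_0$; in fact it is strictly lighter, since here I do not need the Taylor expansion in $\bb$ nor the factor $\bOmega$ that converts from the estimating-function scale to the parameter scale. I expect this step — verifying the plug-in nuisance error is asymptotically negligible — to be the main obstacle, though it is inherited almost verbatim from the proof of Theorem~\ref{asy}.

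Next I would show the coordinate variance estimators are consistent, $\wh\sigma^2_{\m_{\bb_0},j}\overset{p}{\rightarrow}\V_{w,jj}$ for each $j$. These are exactly the diagonal entries of the cross-fitted second-moment matrix $\frac1K\sum_k\wh E_k\{\bphi_w^{\otimes2}(\wh\rho;\bb_0)\}$, so their consistency follows from the same uniform law of large numbers and nuisance-consistency argument that yields \eqref{eva} in Theorem~\ref{CI}, again evaluated at the fixed $\bb_0$. Combining this with the previous step and Slutsky's theorem, each studentized coordinate satisfies $\sqrt{M}\,\m_{\bb_0,j}/\wh\sigma_{\m_{\bb_0},j}\overset{d}{\rightarrow}N(0,1)$.

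Finally I would close the argument with a Bonferroni union bound. Because $\w_{\bb_0,j}(\alpha)=z_{1-\alpha/(2d)}(\wh\sigma^2_{\m_{\bb_0},j}/M)^{1/2}$, the complementary event satisfies $\pr(\bb_0\notin C_{\alpha}^{PP})\le\sum_{j=1}^d\pr(|\sqrt{M}\,\m_{\bb_0,j}/\wh\sigma_{\m_{\bb_0},j}|>z_{1-\alpha/(2d)})$, and each summand converges to $2\{1-\Phi(z_{1-\alpha/(2d)})\}=\alpha/d$ by the coordinatewise standard-normal limit. Hence $\limsup_M\pr(\bb_0\notin C_{\alpha}^{PP})\le d\cdot(\alpha/d)=\alpha$, giving $\liminf_M\pr(\bb_0\in C_{\alpha}^{PP})\ge 1-\alpha$. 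The inequality rather than equality is expected and intrinsic: Bonferroni ignores the generally nonzero correlation across the $d$ coordinates of $\m_{\bb_0}$, so the region is conservative, which is exactly what the one-sided guarantee $\liminf\ge 1-\alpha$ records.
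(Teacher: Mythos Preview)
The statement you were asked to prove does not actually appear in the compiled paper: in the source it sits inside a \texttt{\textbackslash comment\{\ldots\}} block immediately after Theorem~\ref{CI}, and the paper supplies no proof for it. There is therefore nothing in the paper to compare your proposal against.

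On its own merits, your proposal is correct and is precisely the standard prediction-powered coverage argument of \citet{angelopoulos2023prediction}, adapted to the cross-fitted estimating function of this paper. Your three-step plan --- (i) asymptotic normality of $\sqrt{M}\,\m_{\bb_0}$ via the oracle/empirical-process/nuisance-drift decomposition already worked out in the proof of Theorem~\ref{asy}, (ii) consistency of the coordinate variance estimators via the same argument that gives \eqref{eva} in Theorem~\ref{CI}, and (iii) a Bonferroni union bound over the $d$ coordinates --- is exactly what one would do, and you correctly note that step (i) here is strictly easier than in Theorem~\ref{asy} since no Taylor expansion in $\bb$ and no $\bOmega$ factor are needed. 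Your explanation of why the conclusion is $\liminf\ge 1-\alpha$ rather than equality (Bonferroni conservatism from ignoring cross-coordinate correlation) is also the right one.
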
}

\section{Numerical Evaluations}

\subsection{Synthetic Data}

\begin{figure}[tbp]
\vskip 0.1in
\caption{Variation of ARE for the estimation of $\E_q(Y)$ under different sample sizes, signal strength, and correlation coefficient.}\label{fig:gaussian_mean}
\begin{center}
\centerline{\includegraphics[width=\columnwidth]{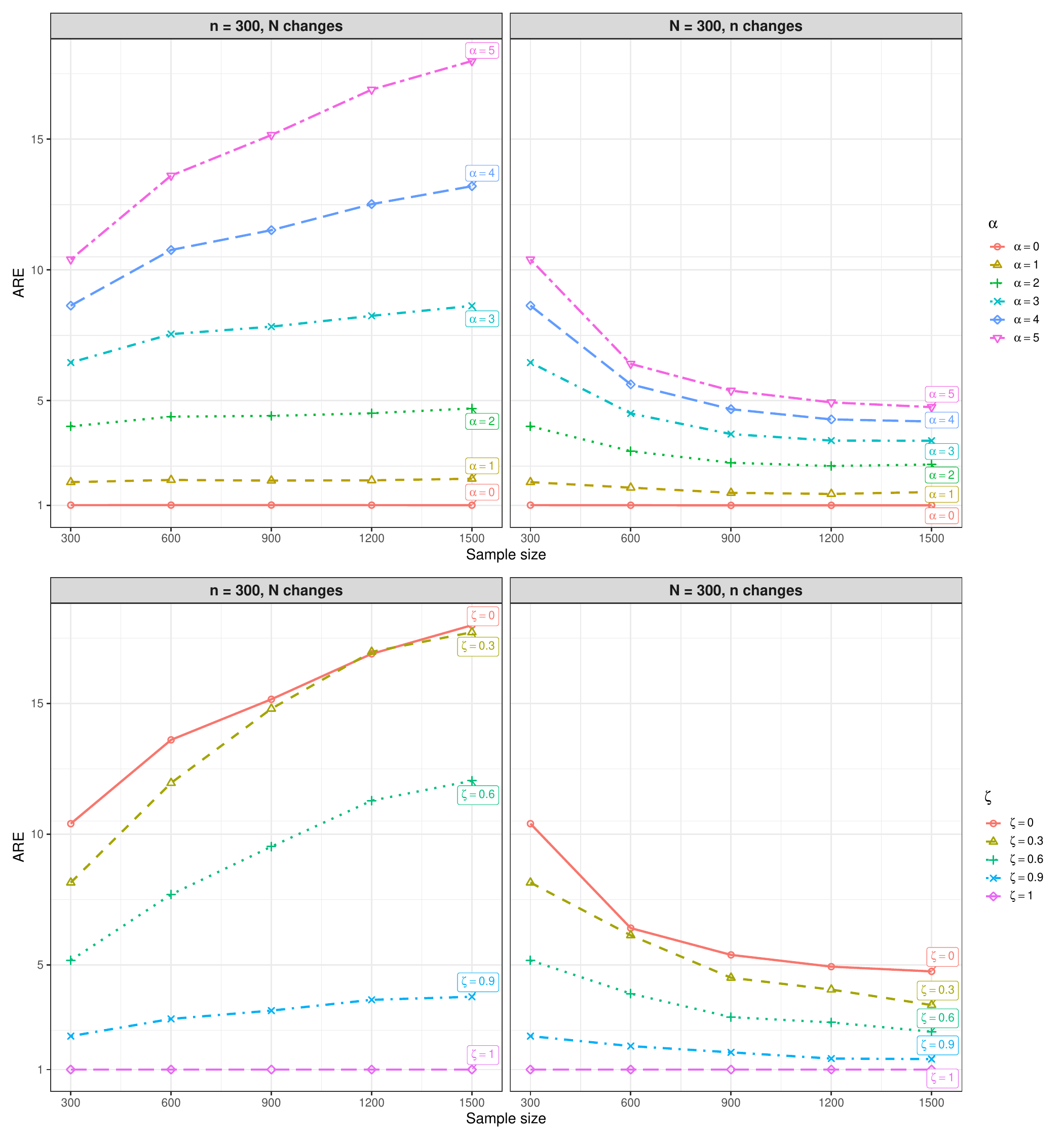}}
\end{center}
\vskip -0.3in
\end{figure}

In this section, we demonstrate the efficiency gains achieved by incorporating ACPs into the estimation process compared to the estimation without ACPs through simulation studies.

The simulation setup is as follows. Recall that the total sample size from the labeled and unlabeled data is $M=n+N$, and here we vary $n, N \in \{300, 600, 900, 1200, 1500\}$. 
The covariate vector $\X$ has a dimension of $p = 5$, and we generate an additional variable $Z$ which is used as the ACP $\wh Y$. The vector $(\X,Z)$ follows a multivariate normal distribution $ N({\0}_{p+1}, \bSigma)$.
The covariance matrix $\bSigma$ is given by $\bSigma = {\I_{p+1}} + \bLambda$, where $\bLambda_{1(p+1)} = \bLambda_{(p+1)1} = \zeta$, and all other elements are zero. 
We vary $\zeta$ in $\{0,0.3,0.6,0.9,1\}$ to characterize how much information in $\wh Y$ is overlapped with $\X$. Two regression models are considered: for continuous outcomes, we set $Y_i = 1 + \bx\trans\X_i + \alpha Z_i + \varepsilon_i$, where $\varepsilon_i \sim N(0,1)$, while for binary outcomes, we set $Y_i \sim \text{Bernoulli}\{\text{logit}^{-1}(1 + \bx\trans\X_i + \alpha Z_i)\}$ for all $i = 1, \dots, n+N$. In both cases, $\bx = (1, {\bf 0.5}_4)\trans$ and $\alpha$ is varied in $\{0, 1, \dots, 5\}$ to represent varying accuracy of the ACP. Additionally, we generate a binary variable $R_i \sim \text{Bernoulli}\{\text{logit}^{-1}(\bta\trans\X_i)\}$ and $\bta = {\bf 1}_5\trans$. $R_i=1$ indicates labeled data with $Y_i$ observed, whereas $R_i=0$ indicates unlabeled data with $Y_i$ unavailable.



For the parameters of interest, we consider the mean of $Y$ on the unlabeled dataset $E_q(Y)$ and the regression coefficients solving the estimating equation $E_q[\{Y-g(\X\trans\bb)\}\X]=\0$ where $g(\cdot)$ is the identify function for continuous outcomes and the expit function for binary outcomes.
To compare the performance of the estimators with and without $\wh Y$, we calculate their MSEs based on $500$ simulation replications. 


We report the results on $\E_q(Y)$ in linear models in Figure~\ref{fig:gaussian_mean}, while deferring all other results to Appendix~\ref{sec:addnum}.
Specifically, Figure~\ref{fig:gaussian_mean} shows the efficiency improvement by incorporating ACPs, under varying values of labeled sample size ($n$), unlabeled sample size ($N$), signal strength from the ACP ($\alpha$), and the correlation between ACP and the covariates ($\zeta$). Efficiency gain is quantified by Asymptotic Relative Efficiency (ARE), defined as the ratio of the MSE of the estimate without ACP to that with ACP.

First, we note that ARE greater than 1 indicates incorporating ACPs results in smaller MSE and thus, more efficient estimation. This is indeed the case across all simulation settings, except when the ACP is not predictive of the true outcome ($\alpha = 0$) or when it provides no additional information beyond what is captured in $\X$ ($\zeta = 1$).

Second, ARE generally increases as $\zeta$ decreases, with the maximum ARE achieved at $\zeta=0$. Note that $\zeta=0$ indicates that $\wh Y$ is independent from the predictors, and thus the ACP provides the most additional information. We also observe that ARE increases with increasing $\alpha$, as the ACP becomes more predictive of the true outcome. Given fixed values of $\alpha$ and $\zeta$, when the sample size of the unlabeled dataset increases, the ARE generally increases, although the rate is relatively moderate; when the sample size of the labeled dataset increases, the ARE exhibits a quadratic decreasing trend before gradually stabilizing, aligning with the results in our theoretical investigations.

\begin{figure}[tbp]
\vskip 0.1in
\caption{Difference in distributions of the inpatient visit count for labeled (left) and unlabeled (right) dataset.}
\label{fig:covshift}
\begin{center}
\centerline{\includegraphics[width=\columnwidth, height=3.5cm]{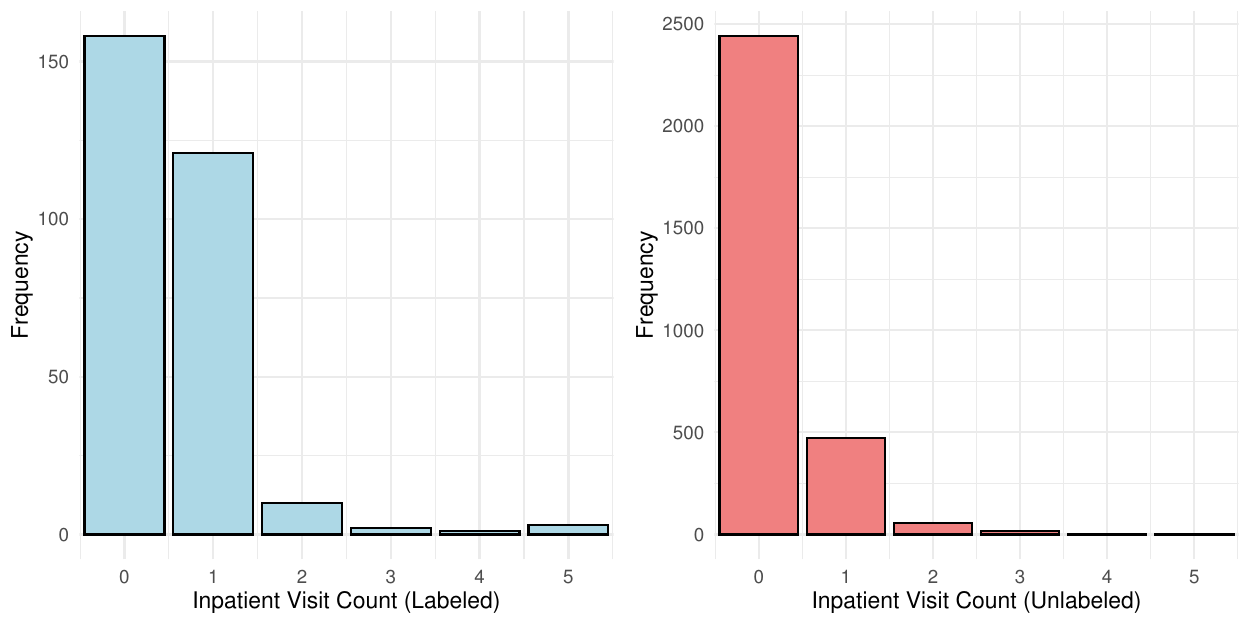}}
\end{center}
\vskip -0.3in
\end{figure}

\subsection{Diabetes Data}

In this section, we implement our proposed method in our motivating study, to identify risk factors for diabetes among children and adolescents.

The study population comprised individuals under 18 years of age who had at least one encounter recorded between January 1, 2012, and December 31, 2020. Using the criteria outlined in \citet{li2025developing}, we identified 3,000 patients who are suspected to have diabetes, including both Type I and Type II diabetes, as our unlabeled dataset. 
For these patients, we followed an ACP using a decision-tree-based algorithm to compute and identify diabetes status. 
Besides, through stratified random sampling, 297 patients were selected, and their documented visit summary and EHRs were sent to and reviewed by medical experts who provided binary (yes/no) assessments of diabetes status. We refer to this dataset as our labeled dataset.
Due to the stratified random sampling, we observed significant covariate shifts between the labeled and unlabeled datasets. Figure~\ref{fig:covshift} provides strong evidence that the distribution of inpatient visit counts differs substantially. Patients in chart review are more likely to have more inpatient visit counts, which may be due to the need to have sufficient patient information for chart review. 
Other available features include socio-demographic variables, e.g., sex and race/ethnicity, and clinical variables, e.g., family history of diabetes and Charlson Comorbidity Index (CCI) scores. 
A cohort summary associated with socio-demographic variables is shown in Table~\ref{table:tableone}, which shows that the distributions of many covariates differ substantially.

\begin{figure}[tbp]
\vskip 0.1in
\caption{The 95\% confidence interval for the coefficient corresponding to each variable. Heavy comorbidity burden is defined as CCI$\geq2$.}
\label{fig:95ci}
\begin{center}
\centerline{\includegraphics[width=0.9\columnwidth, height=5.5cm]{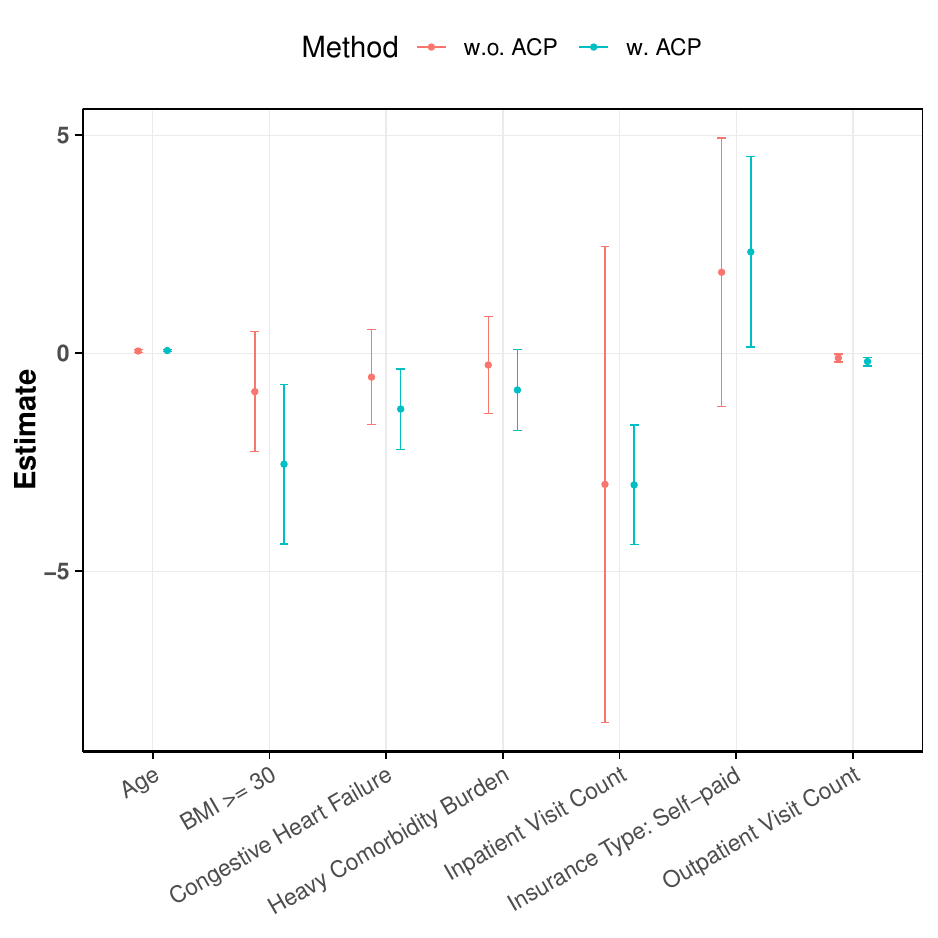}}
\end{center}
\vskip -0.3in
\end{figure}

\begin{table}[ht]
\caption{Comparison of the distributions of each $X$ variable between labeled and unlabeled data: t-test is used to assess whether the means of the two groups differ significantly for continuous variables, and chi-square test is used to evaluate whether the distributions between the two groups are the same for categorical variables. \textbf{A p-value less than 0.05} indicates the statistical significance.}\label{table:tableone}
\centering
\resizebox{0.48\textwidth}{!}{
\begin{tabular}{lllr}
\toprule
\textbf{Variable} & \textbf{Labeled ($n=297$)} & \textbf{Unlabeled ($N=3,000$)} & \textbf{p-value} \\
\midrule
Age & 7.59 ± 6.46 & 7.42 ± 5.35 & $0.653$ \\
Inpatient Visit Count & 0.6 ± 0.89 & 0.25 ± 0.71 & $<0.001$ \\
Outpatient Visit Count & 3.43 ± 6.67 & 2.7 ± 5.18 & $0.070$ \\
Insurance Type: Self-Paid & 24 (8.1\%) & 59 (2\%) & $<0.001$ \\
BMI$\geq$ 30 (Yes) & 59 (19.9\%) & 192 (6.4\%) & $<0.001$ \\
Congestive Heart Failure (Yes) & 28 (9.4\%) & 85 (2.8\%) & $<0.001$ \\
Heavy Comorbidity Burden & 20 (6.7\%) & 85 (2.8\%) & $<0.001$ \\
\bottomrule
    \end{tabular}}
\end{table}

To identify driving factors for diabetes diagnosis, we first conducted marginal screening of all categorical variables using Fisher’s exact test. 
Variables with a p-value $<$ 0.05, adjusted by the Benjamini-Hochberg procedure to control the false discovery rate, were selected. 
We adopted a logistic regression to predict diabetes status with all selected variables, and use the regression coefficients as association measures. 
We then implemented the proposed methods with and without ACPs to estimate these coefficients. 
For nuisance parameter estimation, we employed super learner implemented via the R package \texttt{SuperLearner} with the following libraries: generalized linear models, random forests, kernel support vector machines, and XGBoost. These algorithms were individually tuned using 10-fold cross-validation. To accommodate the flexible estimations for nuisance parameters, we also applied a cross-fitting algorithm.
To obtain confidence intervals associated with the target parameters, a perturbed bootstrap procedure with 500 repetitions was implemented. In each repetition, individual sample weights were generated from an independent Exp$(1)$ distribution. Finally, the confidence interval was constructed using the 0.025 and 0.975 quantiles of the 500 bootstrap repetitions.

Figure~\ref{fig:95ci} summarizes the estimates of the coefficients and their 95\% confidence intervals. Overall, compared with the method without ACPs, the proposed method with ACPs yields shorter intervals, which reflects the efficiency gain by incorporating ACPs in our proposed method. By incorporating ACPs, we identified insurance type (self-paid) as an important factor associated with diabetes. The self-paid patients often have limited access to preventative care and experience higher stress levels, which are known factors associated with Type 2 diabetes among adults \cite{kelly2015stress, stark2012health}.

\subsection{Other Real-World Datasets}
In this section we implement the proposed approach on three other data sets, and also compare with some existing methods: PPI~\citep{angelopoulos2023prediction}, PPI++ \citep{angelopoulos2023ppi++}, and RePPI~\citep{ji2025predictions}.

{\bf{Income data:}} 
Following \citet{angelopoulos2023prediction} and \citet{angelopoulos2023ppi++}, we analyze the relationship between wage (measured as log-income) and age, with sex as a confounding variable, using U.S. Census data under a covariate shift setting. The ACP prediction $\wh{Y}$ is obtained by fitting an XGBoost model to log-income using 14 covariates, including education, marital status, citizenship, race, and others \citep{ji2025predictions}. To induce covariate shift, we partition the labeled and unlabeled datasets based on the selection probability $\exp(\alpha\trans \X)/\{1+\exp(\alpha\trans \X)\}$, where $\alpha = (0, 1, 0)\trans$ and $\X = (1, X_1, X_2)\trans$, with $X_1$ denoting age and $X_2$ denoting sex. This yields a final labeled-to-unlabeled data ratio of approximately $2:8$.

{\bf{Politeness data:}} Using the data from \citet{danescu2013computational} that comprises texts from 5,512 online requests posted on Stack Exchange and Wikipedia, we try to understand the association between politeness score (range from 1 to 25) and a binary indicator for hedging within the request \citep{gligoric2024can}. The ACP $\wh Y$ is generated using OpenAI's GPT-4o mini-model that has the same range as the politeness score. To demonstrate the covariate shift, we split the labeled and unlabeled data in a $1:9$ ratio, following the same procedure as in the income data, where $\X=(1,X_1)\trans$ with $X_1$ representing hedge and $\alpha=(0,1)\trans$.

{\bf{Wine data:}} Using the Wine Enthusiast review dataset, we investigate the association between wine rating (range from 80 to 100) and wine price, adjusted by wine region \citep{ji2025predictions}. Similar to the politeness data, the ACP $\wh Y$ is also generated by employing OpenAI's GPT-4o mini-model that produces predicted ratings with the same scale. To assess covariate shift, we follow the same procedure as in the previous experiments, splitting the labeled and unlabeled data in a $3:7$ ratio. Here, $\X=(1,X_1,X_2,X_3,X_4,X_5)\trans$, where $X_1$ represents price, $X_2$ to $X_5$ represents California, Washington, Oregon and New York, respectively, with $\alpha=(0,1,0,0,0,0)\trans$.

     For each of these three data sets, we repeat the data splitting 50 times, and then compute the length of 95\% confidence intervals of the regression coefficients. 
     We compare the proposed method with PPI, PPI++ and RePPI, with results contained in Table \ref{tab:example}. 
     While there are some cases that the proposed method is tiny slightly less efficient than PPI++ or RePPI, it is generally a lot more efficient than PPI, PPI++, and RePPI, with the largest reduction in confidence interval length reaching approximately 60\%.

\begin{table}[h]
    \caption{Comparison of the length of 95\% confidence intervals between proposed method and three alternatives: PPI, PPI++, RePPI, on three datasets: income, politeness, wine. \textbf{Bolded values (less than 1)} indicate that the proposed method has efficiency gain.}
    \label{tab:example}
    \vskip 0.1in
    \centering
\resizebox{0.48\textwidth}{!}{
    \begin{tabular}{lcccccccc} 
        \toprule
        Dataset                      &  Variable     & PPI   & PPI++ & RePPI & proposed & proposed/PPI & proposed/PPI++ & proposed/RePPI \\
        \midrule
        \multirow{2}{*}{Income}
                       &       Age     & 0.0014 & 0.0012 & 0.0012 & 0.0008 & \textbf{57.14\%} & \textbf{66.67\%} & \textbf{66.67\%} \\
                       &       Sex     & 0.0622 & 0.0541 & 0.0545 & 0.0538 & \textbf{86.51\%} & \textbf{99.54\%} & \textbf{98.73\%} \\
         \midrule
        \multirow{1}{*}{Politeness}
                       &       Hedge   & 1.9696 & 1.7858 & 1.7399 & 1.2818 & \textbf{65.17\%} & \textbf{71.70\%} & \textbf{73.77\%} \\
        \midrule
        \multirow{5}{*}{Wine}
                       &     Price     & 0.4007 & 0.2912 & 0.2791 & 0.1664 & \textbf{41.55\%} & \textbf{57.11\%} & \textbf{59.60\%} \\
                       &     California& 1.6946 & 0.7935 & 0.7693 & 0.7845 & \textbf{46.39\%} & \textbf{98.87\%}
                       & 101.95\% \\
                       &     Washington& 1.7439 & 0.8436 & 0.8105 & 0.8105 & \textbf{46.49\%} & \textbf{96.01\%}
                       & 100.00\% \\
                       &     Oregon    & 1.8305 & 0.9057 & 0.8616 & 0.8663 & \textbf{47.35\%} & \textbf{95.74\%}
                       & 100.55\% \\
                       &     New York  & 1.8989 & 0.8964 & 0.8955 & 0.9174 & \textbf{48.32\%} & 102.45\% & 102.55\% \\
        \bottomrule
    \end{tabular}
    }
\end{table}

\section{Discussion}

In this paper, we explore the benefits of incorporating ACPs in a semi-supervised learning framework under covariate shift, particularly in terms of improving estimation efficiency.
We propose an estimator that is both doubly robust and semiparametrically efficient. 
Additionally, our method allows for a rigorous quantification of the efficiency gain through closed-form expressions.
In general, ACPs are typically generated by various machine learning models, including but not limited to NLP and LLM. These predictions often come with measures of uncertainty quantification.
An intriguing direction for future research is investigating how to effectively integrate these uncertainty quantification measures associated with ACPs into the learning process.

\thispagestyle{empty}
\addtocounter{page}{-1}

\section*{Acknowledgement}

We gratefully appreciate the ICML anonymous reviewers for their valuable feedback.
The research is supported in part by NSF (DMS 1953526, 2122074, 2310942), NIH (R01DC021431) and the American Family Funding Initiative of UW-Madison.

\section*{Impact Statement}

This paper presents work whose goal is to advance the field of Machine Learning. 
Our study does not involve human subjects, complies with all legal and ethical standards, and
we do not anticipate any potential harmful consequences
resulting from our work.

\bibliography{refPPI}
\bibliographystyle{icml2025}

\newpage
\appendix
\onecolumn
\section{Technical Proofs.}\label{sec:proofs}

\begin{proof}[Proof of Proposition~\ref{pro:eifw}]

In the situation with incorporating ACPs, the i.i.d. data are $\{r_i=1, y_i, \x_i, \wh y_i\}\cup \{r_i=0, \x_i, \wh y_i\}$, $i=1,\ldots,M$, and the likelihood function of one generic observation is
\bse
&& \{p(y\mid \x,\wh y)p(\wh y\mid \x)p(\x)\pi\}^r
\{p(\wh y\mid \x)q(\x)(1-\pi)\}^{1-r} \\
&=& p(y\mid \x,\wh y)^r p(\wh y\mid \x)p(\x)^r q(\x)^{1-r}\pi^r (1-\pi)^{1-r}.
\ese

We then consider the Hilbert space $\calH$ of all $d$-dimensional zero-mean measurable functions with finite variance, equipped with the inner product $\langle \h_1, \h_2 \rangle=\E\{\h_1(\cdot)\trans\h_2(\cdot)\}$ where $\h_1(\cdot)$, $\h_2(\cdot)\in\calH$.
We first give an orthogonal decomposition of the semiparametric tangent space $\calT$ \citep{bickel1993efficient, tsiatis2006semiparametric} that is defined as the mean squared closure of the tangent spaces of parametric submodels spanned by the score vectors.
That is,
\bse
\calT = \Lambda_{\pi} \oplus \Lambda_p \oplus \Lambda_q \oplus \Lambda_{\wh y} \oplus \Lambda_y,
\ese
where
\bse
  && \Lambda_{\pi} = \left\{ \left(\frac{r}{\pi}-\frac{1-r}{1-\pi}\right)\a: \forall \a\right\},\\
  && \Lambda_p = \left[ r\b(\x): \E_p\{\b(\X)\}=\0 \right],\\
  && \Lambda_q = \left[ (1-r)\c(\x): \E_q\{\c(\X)\}=\0 \right],\\
  && \Lambda_{\wh y} = \left[ \d(\wh y,\x): \E\{\d(\wh Y,\x)\mid \x\}=\0 \right],\\
  && \Lambda_y = \left[ r \e(y,\x,\wh y): \E\{\e(Y,\x,\wh y)\mid \x,\wh y\}=\0 \right],
\ese
are the tangent spaces with respect to $\pi$, $p(\x)$, $q(\x)$, $p(\wh y\mid \x)$ and $p(y\mid \x,\wh y)$, respectively.
  The notation $\oplus$ represents the direct sum of two spaces that are orthogonal to each other.

Recognizing that the EIF for estimating $\bb$ is a special element in $\calT$, we can assume it has the form
\bse
\underbrace{r \e_1(y,\x,\wh y)}_{\in\Lambda_y} + 
\underbrace{\d_1(\wh y,\x)}_{\in\Lambda_{\wh y}} + 
\underbrace{(1-r) \c_1(\x)}_{\in\Lambda_q}.
\ese
Define the score vectors as
\bse
\s_1(y,\x,\wh y;\ba_1) &=& \frac{\partial\log\ p(y\mid \x,\wh y;\ba_1)}{\partial\ba_1}, \\
\s_2(\wh y,\x;\ba_2) &=& \frac{\partial\log\ p(\wh y\mid \x;\ba_2)}{\partial\ba_2}, \\
\s_3(\x;\ba_3) &=& \frac{\partial\log\ q(\x;\ba_3)}{\partial\ba_3}.
\ese
Then, based on the orthogonality satisfied by the EIF (Theorem 4.2 and Theorem 4.3 in \citet{tsiatis2006semiparametric}), we have
\begin{itemize}
  \item From $\E(\s \s_1\trans\mid R=0)=\E(R\e_1 \s_1\trans)$, one can derive that $\e_1 = \bOmega\left[ \frac1{\pi}w_0(\x)\{\s(y,\x;\bb_0)-\wt \m_0(\x,\wh y)\}\right]$;
  \item From $\E(\s \s_2\trans\mid R=0)=\E(\d_1 \s_2\trans)=\E \left\{\frac{1-\pi}{1-\pi(\x)}\d_1 \s_2\trans\mid R=0\right\}$, one can derive that
      \bse
      \d_1
      &=& \bOmega \left[ \frac{1-\pi_0(\x)}{1-\pi}\{\wt \m_0(\x,\wh y) - \m_0(\x)\} \right];
      \ese
  \item From $\E(\s \s_3\trans\mid R=0)=\E\{(1-R)\c_1 \s_3\trans\}$, one can derive $\c_1 = \bOmega \left\{\frac1{1-\pi}\m_0(\x)\right\}$,
\end{itemize}
and this completes the proof.
\end{proof}

\begin{proof}[Proof of Proposition~\ref{pro:comparison}]
The asymptotic variance without ACP is
\bse
\bOmega\E(\bphi_{wo}^{\otimes2})\bOmega&=&\frac1{\pi}\E_q\left[w_0(\X)\bOmega\{\s(Y,\X;\bb_0)-\m_0(\X)\}^{\otimes2}\bOmega\right]
+
\frac1{1-\pi} \E_q\left\{\bOmega\m_0(\X)^{\otimes2}\bOmega\right\}\\
&=&\frac1{\pi}\E_q\left[w_0(\X)\bOmega\{\s(Y,\X;\bb_0)-\wt\m_0(\X,\wh{Y})\}^{\otimes2}\bOmega\right]
+\frac1{\pi}\E_q\left[w_0(\X)\bOmega\{\wt\m_0(\X,\wh{Y})-\m_0(\X)\}^{\otimes2}\bOmega\right] \\
&&+
\frac1{1-\pi} \E_q\left\{\bOmega\m_0(\X)^{\otimes2}\bOmega\right\}\\
&=&\frac1{\pi}\E_q\left[w_0(\X)\bOmega\{\s(Y,\X;\bb_0)-\wt\m_0(\X,\wh{Y})\}^{\otimes2}\bOmega\right]
+
\frac1{1-\pi} \E_q\left\{\bOmega\m_0(\X)^{\otimes2}\bOmega\right\} \\
&&
+\E\left[\frac1{\pi_0(\X)}\frac{\{1-\pi_0(\X)\}^2}{(1-\pi)^2}\bOmega\{\wt\m_0(\X,\wh{Y})-\m_0(\X)\}^{\otimes2}\bOmega\right]\\
\ese
The third equation uses  $w_0(\x) = \frac{\pi}{1-\pi}\frac{1-\pi_0(\x)}{\pi_0(\x)}$.

The asymptotic variance with ACP is
\bse
\bOmega\E(\bphi_w^{\otimes2})\bOmega
&=&\frac1{\pi}\E_q\left[w_0(\X)\bOmega\{\s(Y,\X;\bb_0)-\wt\m_0(\X,\wh{Y})\}^{\otimes2}\bOmega\right]
+
\frac1{1-\pi} \E_q\left\{\bOmega\m_0(\X)^{\otimes2}\bOmega\right\} \\
&&
+\E\left[\left\{\frac{w_0(\X)}{\pi+(1-\pi)w_0(\X)}\right\}^2\bOmega\{\wt\m_0(\X,\wh{Y})-\m_0(\X)\}^{\otimes2}\bOmega\right]\\
&=&\frac1{\pi}\E_q\left[w_0(\X)\bOmega\{\s(Y,\X;\bb_0)-\wt\m_0(\X,\wh{Y})\}^{\otimes2}\bOmega\right]
+
\frac1{1-\pi} \E_q\left\{\bOmega\m_0(\X)^{\otimes2}\bOmega\right\} \\
&&
+\E\left[\left\{\frac{1-\pi_0(\X)}{(1-\pi)}\right\}^2\bOmega\{\wt\m_0(\X,\wh{Y})-\m_0(\X)\}^{\otimes2}\bOmega\right]\\
\ese

The second equation uses $\frac{1-\pi_0(\x)}{1-\pi} = \frac{w_0(\x)}{\pi+(1-\pi)w_0(\x)}$.

Therefore, we have
\bse
\bOmega\{\E(\bphi_{wo}^{\otimes2})-\E(\bphi_{w}^{\otimes2})\}\bOmega=\frac{1}{(1-\pi)^2}
\bOmega
\E\left[\frac{\{1-\pi_0(\X)\}^3}{\pi_0(\X)}\{\wt\m_0(\X,\wh{Y}) -\m_0(\X)\}^{\otimes2}\right]\bOmega.
\ese
\end{proof}

\comment{\begin{proof}[Proof of Proposition~\ref{tradi}]
When $w^*(\x)$ is misspecified and $\wt{\m}^*(\x,\wh y)=\E^*\{\s(Y,\x)\mid \x,\wh y\}$ and $\m^*(\x)=\E^*\{\s(Y,\x)\mid \x\}$ are correctly specified. We have
\bse
&&\E\left[\frac{R}{\pi}w^*(\X)\{\s(Y,\X;\bb)-\wt{\m}^*(\X,\wh Y)\}
 + \frac{w^*(\X)}{(1-\pi)w^*(\X)+\pi}\{\wt{\m}^*(\X,\wh Y) -\m^*(\X)\}
 + \frac{1-R}{1-\pi}\m^*(\X)\right]\\
 &=&\E\left[\frac{R}{\pi}w^*(\X)\{\s(Y,\X;\bb)-\wt{\m}_0(\X,\wh Y)\}
 + \frac{w^*(\X)}{(1-\pi)w^*(\X)+\pi}\{\wt{\m}_0(\X,\wh Y) -\m_0(\X)\}
 + \frac{1-R}{1-\pi}\m_0(\X)\right]=\0
\ese

When  $w^*(\x)$ is correctly specified, and $\wt{\m}^*(\x,\wh y)=\E^*\{\s(Y,\x)\mid \x,\wh y\}$ and $\m^*(\x)=\E^*\{\s(Y,\x)\mid \x\}$ are misspecified. We have
\bse
&&\E\left[\frac{R}{\pi}w^*(\X)\{\s(Y,\X;\bb)-\wt{\m}^*(\X,\wh Y)\}
 + \frac{w^*(\X)}{(1-\pi)w^*(\X)+\pi}\{\wt{\m}^*(\X,\wh Y) -\m^*(\X)\}
 + \frac{1-R}{1-\pi}\m^*(\X)\right]\\
 &=&\E\left[\frac{R}{\pi}w_0(\X)\{\s(Y,\X;\bb)-\wt{\m}^*(\X,\wh Y)\}\right]
 + \E\left[\frac{w_0(\X)}{(1-\pi)w_0(\X)+\pi}\{\wt{\m}^*(\X,\wh Y) -\m^*(\X)\}\right]
 + \E\left[\frac{1-R}{1-\pi}\m^*(\X) \right]\\
 &=&\E\left[\frac{R}{\pi}w_0(\X)\{\s(Y,\X;\bb)-\wt{\m}^*(\X,\wh Y)\}\right]
 + \E\left[\frac{1-R}{1-\pi}\{\wt{\m}^*(\X,\wh Y) -\m^*(\X)\}\right]
 + \E\left[\frac{1-R}{1-\pi}\m^*(\X)\right]\\
 &=&\E\left[\frac{R}{\pi}w_0(\X)\{\s(Y,\X;\bb)-\wt{\m}^*(\X,\wh Y)\}\right]
 + \E\left[\frac{1-R}{1-\pi}\wt{\m}^*(\X,\wh Y)\right] =\0
\end{proof}

\begin{proof}[The proof of Lemma \ref{DR1}]
\bse
&&\E\{\bphi_w(\rho^*,\bb)-\bphi_w(\rho_0,\bb)\}\\
&=&\bOmega E\left(\left\{\frac{w^*(\X)}{(1-\pi)w^*(\X)+\pi}-\frac{w_0(\X)}{(1-\pi) w_0(\X)+\pi}\right\}[\{\wt\m^*(\X,\wh{Y})-\wt\m_0(\X,\wh{Y})\}-\{\m^*(\X)-\m_0(\X)\}]\right)\\
&&+\bOmega \E\left[\left\{\frac{w^*(\X)}{(1-\pi)w^*(\X)+\pi}-\frac{w_0(\X)}{(1-\pi) w_0(\X)+\pi}\right\}\{\wt\m_0(\X,\wh{Y})-\m_0(\X)\}\right]\\
&&+\bOmega \E\left(\frac{w_0(\X)}{(1-\pi) w_0(\X)+\pi}[\{\wt\m^*(\X,\wh{Y})-\m_0(\X,\wh{Y})\}-\{\m^*(\X)-\m_0(\X)\}]\right)\\
&=&\bOmega \E\left(\left\{\frac{w^*(\X)}{(1-\pi)w^*(\X)+\pi}-\frac{ w_0(\X)}{(1-\pi) w_0(\X)+\pi}\right\}[\{\wt\m^*(\X,\wh{Y})-\m_0(\X,\wh{Y})\}-\{\m^*(\X)-\m_0(\X)\}]\right)\\
&&+\bOmega \E\left(\frac{ w_0(\X)}{(1-\pi) w_0(\X)+\pi}[\{\wt\m^*(\X,\wh{Y})-\wt\m_0(\X,\wh{Y})\}-\{\m^*(\X)-\m_0(\X)\}]\right)\\
\ese
Then we can obtain
\bse
&&\|\E\{\bphi_w(\rho^*,\bb)-\bphi_w(\rho_0,\bb)\}\|_2\\
&\leq& c_1 (\|w^*(\x)-w_0(\x)\|_2\|\wt\m^*(\x,\wh{y})-\wt\m_0(\x,\wh{y})\|_2+\|w^*(\x)-w_0(\x)\|_2\|\m(\x)-\m_0(\x)\|_2)
\ese
\end{proof}}

\begin{proof}[Proof of Theorem~\ref{consist}]
When $w^*(\x)$ is misspecified, and $\wt{\m}^*(\x,\wh y)$ and $\m^*(\x)$ are correctly specified. We have
\bse
&&\E\left[\frac{R}{\pi}w^*(\X)\{\s(Y,\X;\bb_0)-\wt{\m}^*(\X,\wh Y)\}
 + \frac{w^*(\X)}{(1-\pi)w^*(\X)+\pi}\{\wt{\m}^*(\X,\wh Y) -\m^*(\X)\}
 + \frac{1-R}{1-\pi}\m^*(\X)\right]\\
 &=&\E\left[\frac{R}{\pi}w^*(\X)\{\s(Y,\X;\bb_0)-\wt{\m}_0(\X,\wh Y)\}
 + \frac{w^*(\X)}{(1-\pi)w^*(\X)+\pi}\{\wt{\m}_0(\X,\wh Y) -\m_0(\X)\}
 + \frac{1-R}{1-\pi}\m_0(\X)\right]=\0
\ese

When  $w^*(\x)$ is correctly specified, and $\wt{\m}^*(\x,\wh y)$ and $\m^*(\x)$ are misspecified. We have
\bse
&&\E\left[\frac{R}{\pi}w^*(\X)\{\s(Y,\X;\bb_0)-\wt{\m}^*(\X,\wh Y)\}
 + \frac{w^*(\X)}{(1-\pi)w^*(\X)+\pi}\{\wt{\m}^*(\X,\wh Y) -\m^*(\X)\}
 + \frac{1-R}{1-\pi}\m^*(\X)\right]\\
 &=&\E\left[\frac{R}{\pi}w_0(\X)\{\s(Y,\X;\bb_0)-\wt{\m}^*(\X,\wh Y)\}\right]
 + \E\left[\frac{w_0(\X)}{(1-\pi)w_0(\X)+\pi}\{\wt{\m}^*(\X,\wh Y) -\m^*(\X)\}\right]
 + \E\left[\frac{1-R}{1-\pi}\m^*(\X) \right]\\
 &=&\E\left[\frac{R}{\pi}w_0(\X)\{\s(Y,\X;\bb_0)-\wt{\m}^*(\X,\wh Y)\}\right]
 + \E\left[\frac{1-R}{1-\pi}\{\wt{\m}^*(\X,\wh Y) -\m^*(\X)\}\right]
 + \E\left[\frac{1-R}{1-\pi}\m^*(\X)\right]\\
 &=&\E\left[\frac{R}{\pi}w_0(\X)\{\s(Y,\X;\bb_0)-\wt{\m}^*(\X,\wh Y)\}\right]
 + \E\left[\frac{1-R}{1-\pi}\wt{\m}^*(\X,\wh Y)\right] =\0.
 \ese
 The proof is completed.
\end{proof}

\begin{proof}[Proof of Theorem \ref{asy}]
The efficient influence function is
\bse
\bphi_w(\rho_0,\bb_0)
&=&\frac{r}{\pi}w_0(\x)\{\s(y,\x;\bb_0)-\wt\m_0(\x,\wh y)\}
 + \frac{w_0(\x)}{(1-\pi) w_0(\x)+\pi}\{\wt\m_0(\x,\wh y) -\m_0(\x)\}
 + \frac{1-r}{1-\pi}\m_0(\x).\\
\ese
Then we have
\bse
\bphi_w(\wh\rho_k,\bb_0) &=& \frac{r}{\pi}\wh{w}_k(\x)\{\s(y,\x;\bb_0)-\wh{\wt{\m}}_k(\x,\wh y)\}
 + \frac{\wh{w}_k(\x)}{(1-\pi)\wh{w}_k(\x)+\pi}\{\wh{\wt{\m}}_k(\x,\wh y) -\wh\m_k(\x)\}
 + \frac{1-r}{1-\pi}\wh\m_k(\x).\\
\ese
Therefore,
\bse
\frac{1}{K}\sum_{k=1}^K\wh{\E}_k\bphi_w(\wh\rho_k,\bb_0) - \E\bphi_w(\rho_0,\bb_0) =\frac{1}{K}\sum_{k=1}^K\{\wh{\E}_k\bphi_1(\wh\rho_k,\bb_0) - \E\bphi_1(\rho_0,\bb_0)\}.
\ese
We only consider the following term,
\bse
&& \wh{\E}_k\bphi_w(\wh\rho_k,\bb_0) - \E\bphi_w(\rho_0,\bb_0) \\
&=&
        \underbrace{(\wh{\E}_k-\E)\{\bphi_w(\wh\rho_k,\bb_0) - \bphi_w(\rho_0,\bb_0)\}}_{\text{sample splitting}} +
        \underbrace{\E\{\bphi_w(\wh\rho_k,\bb_0) - \bphi_w(\rho_0,\bb_0)\}}_{\text{bias}} +
         \underbrace{(\wh{\E}_k-\E)\bphi_w(\rho_0,\bb_0)}_{\text{CLT}}.
\ese

(i) $(\wh{\E}_k-\E)\{\bphi_w(\wh\rho,\bb_0) - \bphi_w(\rho_0,\bb_0)\} = o_p(M^{-1/2})$.
\begin{proof}
Let $\bphi_w(\wh\rho_k,\bb_0)$ be a function estimated from the sample $D_k^c$, and let $\wh{\E}_k$ denote the empirical measure over $D_k$. First note that, conditional on $D_k^c$, the term in question has mean zero since
\bse
\E[\wh{\E}_k\{\bphi_w(\wh\rho_k,\bb_0) - \bphi_w(\rho_0,\bb_0)\}|D_k^c]=\E\{\bphi_w(\wh\rho_k,\bb_0) - \bphi_w(\rho_0,\bb_0)|D_k^c\}.
\ese
The conditional variance is
\bse
&&{\rm{var}}\left[(\wh{\E}_k-\E)\{\bphi_w(\wh\rho_k,\bb_0) - \bphi_w(\rho_0,\bb_0)\}|D_k^c\right]=\var\left[\wh{\E}_k\{\bphi_w(\wh\rho_k,\bb_0) - \bphi_w(\rho_0,\bb_0)\}|D_k^c\right]\\
&=&\frac{2}{M}\var\{\bphi_w(\wh\rho_k,\bb_0) - \bphi_w(\rho_0,\bb_0)|D_k^c\}\leq 2\|\bphi_w(\wh\rho_k,\bb_0) - \bphi_w(\rho_0,\bb_0)\|_2^2/M
\ese
Therefore using Chebyshev's inequality we have
\bse
&& \pr\left\{\frac{\|(\wh{\E}_k-\E)\{\bphi_w(\wh\rho_k,\bb_0) - \bphi_w(\rho_0,\bb_0)\}\|_2}{\sqrt{2}\|\bphi_w(\wh\rho_k,\bb_0) - \bphi_w(\rho_0,\bb_0)\|_2/\sqrt{M}}\geq t\right\} \\
&=& \E\left[\pr\left\{\frac{\|(\wh{\E}_k-\E)\{\bphi_w(\wh\rho_k,\bb_0) - \bphi_w(\rho_0,\bb_0)\}\|_2}{\sqrt{2}\|\bphi_w(\wh\rho_k,\bb_0) - \bphi_w(\rho_0,\bb_0)\|_2/\sqrt{M}}\geq t|D_k^c\right\}\right]\leq \frac{1}{t^2}.
\ese
Thus for any $\varepsilon>0$ we can pick $t=1/\sqrt{\varepsilon}$ so that the probability above is
no more than $\varepsilon$, which yields the result.
\end{proof}

(ii) $\E\{\bphi_w(\wh\rho_k,\bb_0) - \bphi_w(\rho_0,\bb_0)|D_k^c\} = O_p(\|\wh w_k(\x)-w_0(\x)\|_2\|\wh{\wt\m}_k(\x,\wh{y})-\wt\m_0(\x,\wh{y})\|_2)+O_p(\|\wh w_k(\x)-w_0(\x)\|_2\|\wh\m_k(\x)-\m_0(\x)\|_2)=o_p(M^{-1/2})$ by rate conditions.
\begin{proof}
\bse
&&\E\{\bphi_w(\wh\rho_k,\bb_0) - \bphi_w(\rho_0,\bb_0)|D_k^c\}\\
&=&\E\left(\frac{R}{\pi}\left[\wh w_k(\X)\{\s(Y,\X;\bb_0)-\wh{\wt\m}_k(\X,\wh{Y})\}- w_0(\X)\{\s(Y,\X;\bb_0)-\wt\m_0(\X,\wh{Y})\}\right]|D_k^c\right)\\
&&+\E\left[\frac{\wh w_k(\X)}{(1-\pi)\wh w_k(\X)+\pi}\{\wh{\wt\m}_k(\X,\wh{Y})-\wh\m_k(\X)\}-\frac{w_0(\X)}{(1-\pi)w_0(\X)+\pi}\{\wt\m_0(\X,\wh{Y})-\m_0(\X)\}|D_k^c\right]\\
&&+\E\left\{\frac{1-R}{1-\pi}\wh\m_k(\X)-\frac{1-R}{1-\pi}\m_0(\X)|D_k^c\right\}\\
&=&:\text{(I)}+\text{(II)}+\text{(III)}
\ese
For the first term $\text{(I)}$, we have
\bse
\text{(I)}&=&\E\left[\frac{R}{\pi}\{\wh w_k(\X)- w_0(\X)\}\{\s(Y,\X;\bb_0)-\wt\m_0(\X,\wh{Y})\}|D_k^c\right]
+\E\left[\frac{R}{\pi} w_0(\X)\{\wt\m_0(\X,\wh{Y})-\wh{\wt\m}_k(\X,\wh{Y})\}|D_k^c\right]\\
&&+\E\left[\frac{R}{\pi}\{\wh w_k(\X)- w_0(\X)\}\{\wt\m_0(\X,\wh{Y})-\wh{\wt\m}_k(\X,\wh{Y})\}|D_k^c\right]\\
&=&\E\left[\frac{R}{\pi} w_0(\X)\{\wt\m_0(\X,\wh{Y})-\wh{\wt\m}_k(\X,\wh{Y})\}|D_k^c\right]
+\E\left[\frac{R}{\pi}\{\wh w_k(\X)- w_0(\X)\}\{\wt\m_0(\X,\wh{Y})-\wh{\wt\m}_k(\X,\wh{Y})\}|D_k^c\right].
\ese
For the second term $\text{(II)}$, we have
\bse
\text{(II)}&=&\E\left(\left\{\frac{\wh w_k(\X)}{(1-\pi)\wh w_k(\X)+\pi}-\frac{w_0(\X)}{(1-\pi) w_0(\X)+\pi}\right\}[\{\wh{\wt\m}_k(\X,\wh{Y})-\wt\m_0(\X,\wh{Y})\}-\{\wh\m_k(\X)-\m_0(\X)\}]|D_k^c\right)\\
&&+\E\left[\left\{\frac{\wh w_k(\X)}{(1-\pi)\wh w_k(\X)+\pi}-\frac{w_0(\X)}{(1-\pi) w_0(\X)+\pi}\right\}\{\wt\m_0(\X,\wh{Y})-\m_0(\X)\}|D_k^c\right]\\
&&+\E\left(\frac{w_0(\X)}{(1-\pi) w_0(\X)+\pi}[\{\wh{\wt\m}_k(\X,\wh{Y})-\wt\m_0(\X,\wh{Y})\}-\{\wh\m_k(\X)-\m_0(\X)\}]|D_k^c\right)\\
&=&\E\left(\left\{\frac{\wh w_k(\X)}{(1-\pi)\wh w_k(\X)+\pi}-\frac{ w_0(\X)}{(1-\pi) w_0(\X)+\pi}\right\}[\{\wh{\wt\m}_k(\X,\wh{Y})-\wt\m_0(\X,\wh{Y})\}-\{\wh\m_k(\X)-\m_0(\X)\}]|D_k^c\right)\\
&&+\E\left(\frac{ w_0(\X)}{(1-\pi) w_0(\X)+\pi}[\{\wh{\wt\m}_k(\X,\wh{Y})-\wt\m_0(\X,\wh{Y})\}-\{\wh\m_k(\X)-\m_0(\X)\}]|D_k^c\right).
\ese
For the third term $\text{(III)}$, we have
\bse
\text{(III)}=\E\left[\frac{1-R}{1-\pi}\{\wh\m_k(\X)-\m_0(\X)\}|D_k^c\right]
\ese
Combine \text{(I)}, \text{(II)} and \text{(III)}, we obtain
\bse
\text{(I)}+\text{(II)}+\text{(III)}
&=&O_p(\|\wh w_k(\x)-w_0(\x)\|_2\|\wh{\wt\m}_k(\x,\wh{y})-\wt\m_0(\x,\wh{y})\|_2)+O_p(\|\wh w_k(\x)-w_0(\x)\|_2\|\wh\m_k(\x)-\m_0(\x)\|_2)\\
&=&o_p(M^{-1/2}).
\ese
The second equation by the rate conditions.
\end{proof}
Therefore, we have
\bse
&&\wh\bb(\wh\rho)-\bb_0(\rho_0)=\wh\bOmega\frac{1}{K}\sum_{k=1}^K\wh{\E}_k\bphi_w(\wh\rho_k,\bb_0) - \bOmega\E\bphi_w(\rho_0,\bb_0)=\wh\bOmega\frac{1}{K}\sum_{k=1}^K\wh{\E}_k\bphi_w(\wh\rho_k,\bb_0) - \wh\bOmega\E\bphi_w(\rho_0,\bb_0)\\
&=&\wh\bOmega\frac{1}{K}\sum_{k=1}^K\underbrace{(\wh{\E}_k-\E)\{\bphi_w(\wh\rho_k,\bb_0) - \bphi_w(\rho_0,\bb_0)\}}_{\text{sample splitting}}
+\wh\bOmega\frac{1}{K}\sum_{k=1}^K
\underbrace{\E\{\bphi_w(\wh\rho_k,\bb) - \bphi_w(\rho_0,\bb_0)\}}_{\text{bias}} +\wh\bOmega\frac{1}{K}\sum_{k=1}^K
\underbrace{(\wh{\E}_k-\E)\bphi_w(\rho_0,\bb_0)}_{\text{CLT}}\\
&=&\bOmega\frac{1}{K}\sum_{k=1}^K
\underbrace{(\wh{\E}_k-\E)\bphi_w(\rho_0,\bb_0)}_{\text{CLT}}+o_p(M^{-1/2})
\ese
Based on central limit theorem, We have
\bse
\sqrt{M}\{\wh\bb(\wh\rho)-\bb_0(\rho_0)\}
&=&\sqrt{M}\bOmega\frac{1}{K}\sum_{k=1}^K
(\wh{\E}_k-\E)\bphi_w(\rho_0,\bb_0)+o_p(1)\\
&\overset{d}{\rightarrow}&N(\0,\bOmega\V_w\bOmega).
\ese
The proof is completed.
\end{proof}

\comment{
Therefore, we have
\bse
\wh{\E}_k\bphi_w(\wh\rho_k,\bb_0) - \E\bphi_w(\rho^*,\bb_0)&=&O_p(\|\wh w_k(\x)-w^*(\x)\|_2\|\wh{\wt\m}_k(\x,\wh{y})-\wt\m^*(\x,\wh{y})\|_2)\\
&&+O_p(\|\wh w_k(\x)-w^*(\x)\|_2\|\wh\m_k(\x)-\m^*(\x)\|_2)+O_p(M^{-1/2}).
\ese
Further, when $w^*(\x)=w_0(\x)$ or $(\wt\m^*(\x),\m^*(\x))=(\wt\m_0(\x),\m_0(\x))$, we have
\bse
\wh\bb(\wh\rho)-\bb_0(\rho_0)&=&\wh\bOmega\frac{1}{K}\sum_{k=1}^K\wh{\E}_k\bphi_w(\wh\rho_k,\bb_0) - \bOmega\E\bphi_w(\rho^*,\bb_0)\\
&=&O_p(\|\wh w_k(\x)-w^*(\x)\|_2\|\wh{\wt\m}_k(\x,\wh{y})-\wt\m^*(\x,\wh{y})\|_2)+O_p(\|\wh w_k(\x)-w^*(\x)\|_2\|\wh\m(\x)-\m^*\x)\|_2)+O_p(M^{1/2})
\ese

\begin{proof}[Proof of Theorem \ref{asy}]
\bse
&&\wh\bb(\wh\rho)-\bb_0(\rho_0)=\wh\bOmega\frac{1}{K}\sum_{k=1}^K\wh{\E}_k\bphi_w(\wh\rho_k,\bb_0) - \bOmega\E\bphi_w(\rho_0,\bb_0)=\wh\bOmega\frac{1}{K}\sum_{k=1}^K\wh{\E}_k\bphi_w(\wh\rho_k,\bb_0) - \wh\bOmega\E\bphi_w(\rho_0,\bb_0)\\
&=&\wh\bOmega\frac{1}{K}\sum_{k=1}^K\underbrace{(\wh{\E}_k-\E)\{\bphi_w(\wh\rho_k,\bb_0) - \bphi_w(\rho_0,\bb_0)\}}_{\text{sample splitting}}
+\wh\bOmega\frac{1}{K}\sum_{k=1}^K
\underbrace{\E\{\bphi_w(\wh\rho_k,\bb) - \bphi_w(\rho_0,\bb_0)\}}_{\text{bias}} +\wh\bOmega\frac{1}{K}\sum_{k=1}^K
\underbrace{(\wh{\E}_k-\E)\bphi_w(\rho_0,\bb_0)}_{\text{CLT}}\\
&=&\bOmega\frac{1}{K}\sum_{k=1}^K
\underbrace{(\wh{\E}_k-\E)\bphi_w(\rho_0,\bb_0)}_{\text{CLT}}+o_p(M^{-1/2})
+O_p(\|\wh w_k(\x)-w^*(\x)\|_2\|\wh{\wt\m}_k(\x,\wh{y})-\wt\m^*(\x,\wh{y})\|_2)\\
&&+O_p(\|\wh w_k(\x)-w^*(\x)\|_2\|\wh\m_k(\x)-\m^*\x)\|_2)
\ese
Under the Assumption that $a_{1M}a_{2M}=o_p(M^{-1/2})$ and $a_{1M}a_{3M}=o_p(M^{-1/2})$.
We have
\bse
\sqrt{M}\{\wh\bb(\wh\rho)-\bb_0(\rho_0)\}
&=&\sqrt{M}\bOmega\frac{1}{K}\sum_{k=1}^K
(\wh{\E}_k-\E)\bphi_w(\rho_0,\bb_0)+o_p(1)\\
&\overset{d}{\rightarrow}&N(\0,\bOmega\V_w\bOmega).
\ese
The proof is completed.
\end{proof}}

\begin{proof}[Proof of Theorem \ref{CI}]
In this section, we divide it into two parts. First, we prove the consistency of the covariance matrix, and then prove the confidence interval.

First, we need to show the $\wh\bOmega\wh{\V}_w\wh\bOmega$ is consistent.
Here, we only show $\wh{\V}_w=\frac{1}{K}\sum_{k=1}^K\wh\E_k\{\phi_w^{\otimes 2}(\wh\rho_k;\wh\bb)\}$ is consistent. 
Note that
\bse
&&\|\wh{\V}_w-\V_w\|_2=\|\wh{\V}_w-\E\{\phi_w^{\otimes 2}(\rho_0;\bb_0)\}\|_2\\
&\leq& \frac{1}{K}\sum_{k=1}^K\|\wh{\E}_k\{\phi_w^{\otimes 2}(\wh\rho_k;\wh\bb)\}-\E\{\phi_w^{\otimes 2}(\rho_0;\bb_0)\}\|_2.
\ese
We only need to prove that $\|\wh{\E}_k\{\phi_w^{\otimes 2}(\wh\rho_k;\wh\bb)\}-\E\{\phi_w^{\otimes 2}(\rho_0;\bb_0)\}\|_2=o_p(1)$. Consider the following decomposition:
\bse
&&\|\wh{\E}_k\{\phi_1^{\otimes 2}(\wh\rho_k;\wh\bb)\}-\E\{\phi_w^{\otimes 2}(\rho_0;\bb_0)\}\|_2\\
&\leq&\|\wh{\E}_k\{\phi_w^{\otimes 2}(\wh\rho_k;\wh\bb)\}-\wh{E}_k\{\phi_w^{\otimes 2}(\rho_0;\bb_0)\}\|_2
+\|\wh{\E}_k\{\phi_w^{\otimes 2}(\rho_0;\bb_0)\}-\E\{\phi_w^{\otimes 2}(\rho_0;\bb_0)\}\|_2\\
&=&R_1+R_2.
\ese
Thus we only need to prove that both $R_1$ and $R_2$ are $o_p(1)$.

By the law of large numbers, we can easily obtain $R_2=o_p(1)$. Next we analyze the term $R_1$.
\bse
&&\|\wh{\E}_k\{\phi_w^{\otimes 2}(\wh\rho_k;\wh\bb)\}-\wh{\E}_k\{\phi_w^{\otimes 2}(\rho_0;\bb_0)\}\|_2\\
&\leq&\|\wh{\E}_k\{\phi_w(\wh\rho_k;\wh\bb)-\phi_w(\rho_0;\bb_0)\}^{\otimes 2}\|_2
+2\|\wh{\E}_k\{\phi_w(\wh\rho_k;\wh\bb)-\phi_w(\rho_0;\bb_0)\}\{\phi_w(\rho_0;\bb_0)\}\trans\|_2\\
&\leq &R_3^{1/2}\times \{R_3^{1/2}+2\|\wh{\E}_k\phi_w^{\otimes 2}(\rho_0;\bb_0)\|_2^{1/2}\}
\ese
where $R_3=\|\wh{\E}_k\{\phi_w(\wh\rho_k;\wh\bb)-\phi_w(\rho_0;\bb_0)\}^{\otimes 2}\|_2$.

Since $\E\phi_w^{\otimes 2}(\rho_0;\bb_0)=O(1)$, Markov inequality implies that $\wh{\E}_k\phi_w^{\otimes 2}(\rho_0;\bb_0)=O_p(1)$. Moreover,
\bse
R_3\leq 2C_1\|\wh\bb-\bb_0\|_2^2+2\|\wh{\E}_k\{\phi_w(\wh\rho_k;\bb_0)-\phi_w(\rho_0;\bb_0)\}^{\otimes 2}\|_2
\ese
Since $\|\wh\bb-\bb_0\|_2=o_p(1)$ by Theorem \ref{consist}, thus we only to prove $\|\wh{E}_k\{\phi_w(\wh\rho_k;\bb_0)-\phi_w(\rho_0;\bb_0)\}^{\otimes2}\|_2=o_p(1)$ as well. We can further decompose this term:
\bse
&&\wh{\E}_k\{\phi_1(\wh\rho_k;\bb_0)-\phi_1(\rho_0;\bb_0)\}^{\otimes 2}\\
&=&\wh{\E}_k\{\phi_1(\wh\rho_k;\bb_0)-\phi_1(\rho_0;\bb_0)\}^{\otimes 2}-\E\{\phi_1(\wh\rho_k;\bb_0)-\phi_1(\rho_0;\bb_0)\}^2 \\
&&+\E\{\phi_1(\wh\rho_k;\bb_0)-\phi_1(\rho_0;\bb_0)\}^{\otimes 2}=(1)+(2).
\ese
Note that $\E\{(1)\}=\0$, so by Markov inequality, we have $(1)=o_p(1)$. Moreover, it is easy to verify that $(2)=o_p(1)$ as $O_p(\|\wh w_k(\x)-w_0(\x)\|_2\|\wh{\wt\m}_k(\x,\wh{y})-\wt\m_0(\x,\wh{y})\|_2)+O_p(\|\wh w_k(\x)-w_0(\x)\|_2\|\wh\m_k(\x)-\m_0(\x)\|_2=o_p(1)$.

Putting all above together, we have
\bse
\|R_1\|_2=o_p(1).
\ese
Thus, we have $\wh{\V}_1=\V_1+o_p(1)$. Further, we have $\wh{\V}_w=\V_w+o_p(1)$ by $\wh\bOmega=\bOmega+o_p(1)$.

Second, we will show that $\v\trans\bb_0\in {\rm{CI}}_{\alpha}$ with probability $1-\alpha$ in the limit; that is,
\bse
\lim_{M\rightarrow \infty}
\pr\left(\v\trans\bb_0\in {\rm{CI}}_{\alpha}\right)=1-\alpha.
\ese
It can be easy to obtain, since
we have
\bse
&&\lim_{M\rightarrow \infty}\pr\left(\v\trans\bb_0\in {\rm{CI}}_{\alpha}\right)\\
&=&\lim_{M\rightarrow \infty}\pr\left\{\v\trans\wh\bb-\Phi^{-1}(1-\alpha/2)\sqrt{\v\trans\wh\bOmega\wh{\V}_{w}\wh\bOmega\v/M}\leq \v\trans\bb_0\leq \v\trans\wh\bb+\Phi^{-1}(1-\alpha/2)\sqrt{\v\trans\wh\bOmega\wh{\V}_{w}\wh\bOmega\v/M}\right\}\\
&=&\lim_{M\rightarrow \infty}\pr\left\{-\Phi^{-1}(1-\alpha/2)\leq \frac{\sqrt{M}(\v\trans\bb_0-\v\trans\wh\bb)}{\sqrt{\v\trans\wh\bOmega\wh{\V}_{w}\wh\bOmega\v}}\leq \Phi^{-1}(1-\alpha/2)\right\}\\
&=&\lim_{M\rightarrow \infty}\pr\left\{\frac{\sqrt{M}(\v\trans\bb_0-\v\trans\wh\bb)}{\sqrt{\v\trans\wh\bOmega\wh{\V}_{w}\wh\bOmega\v}}\leq \Phi^{-1}(1-\alpha/2)\right\}-\lim_{M\rightarrow \infty}\pr\left\{ \frac{\sqrt{M}(\v\trans\bb_0-\v\trans\wh\bb)}{\sqrt{\v\trans\wh\bOmega\wh{\V}_{w}\wh\bOmega\v}}<-\Phi^{-1}(1-\alpha/2)\right\}\\
&=&\Phi\{\Phi^{-1}(1-\alpha/2)\}-\Phi\{-\Phi^{-1}(1-\alpha/2)\}\\
&=&1-\alpha/2-\alpha/2=1-\alpha
\ese
The fourth equation holds because $\sqrt{M}(\v\trans\bb_0-\v\trans\wh\bb)/\sqrt{\v\trans\wh\bOmega\wh{\V}_{w}\wh\bOmega\v}\overset{d}{\rightarrow} N(0,1)$. The fifth equation holds because the standard normal distribution is symmetric. The proof is completed.
\end{proof}

\section{Parallel Results for Parameter $\bt$ in the Combined Population}\label{sec:resultsfortheta}

To avoid repetition, we only present the results for the parameter $\bb$, some characteristic in the unlabeled data population $\calU$, in the main paper.
But, all the results can be generalized to the characteristic of the combind data population $\calL\cup\calU$ if it is of interest.

Similarly, we define the $d$-dimensional parameter $\bt$ as
\bse
\bt = \argmin\ \E\{\ell(y,\x;\bt)\},
\ese
and it is equivalent to write $\bt$ as the solution of the estimating equation
\bse
\E\{\u(Y,\X;\bt)\}=\0.
\ese
To proceed, we assume the $d\times d$ matrix $\E\{\partial \u(Y,\X;\bt)/\partial\bt\trans\}$ evaluated at the true value $\bt_0$ is invertible and denote the inverse as $\bGamma$.
We also denote $\E\{\u(Y,\x;\bt)\mid \x\}$ as $\h(\x)$ and denote $\E\{\u(Y,\x;\bt)\mid \x, \wh y\}$ as $\wt \h(\x, \wh y)$.

For estimating $\bt$, in the situation with ACPs, the EIF equals $\bGamma\bvarphi_{w}$ with
\bse
\bvarphi_{w} = \frac{r}{\pi_0(\x)}\{\u(y,\x;\bt_0)-\wt \h_0(\x, \wh y)\} + \wt \h_0(\x, \wh y),
\ese
and the efficiency bound is $\bGamma\V_{w}\bGamma$ with
\bse
\V_w = \E\left[\frac{R}{\pi_0(\X)^2} \{\u(Y,\X;\bt_0)-\wt \h_0(\X, \wh Y)\}^{\otimes 2}\right] + \E\left[\{\wt \h_0(\X, \wh Y) - \h_0(\X)\}^{\otimes 2}\right] + \E\left\{\h_0(\X)^{\otimes 2}\right\}.
\ese
In the situation without ACPs, the EIF is $\bGamma\bvarphi_{wo}$ with
\bse
\bvarphi_{wo} = \frac{r}{\pi_0(\x)}\{\u(y,\x;\bt_0)-\h_0(\x)\} + \h_0(\x),
\ese
and the efficiency bound is $\bGamma\V_{wo}\bGamma$ with
\bse
\V_{wo} = \E\left[\frac{R}{\pi_0(\X)^2} \{\u(Y,\X;\bt_0)-\wt \h_0(\X, \wh Y)\}^{\otimes 2}\right] + \E\left[\frac{R}{\pi_0(\X)^2} \{\wt \h_0(\X, \wh Y) - \h_0(\X)\}^{\otimes 2}\right] + \E\left\{\h_0(\X)^{\otimes 2}\right\}.
\ese

Therefore, one can compute that the efficiency gain of using ACP $\wh Y$ is
\bse
\bGamma(\V_{wo}-\V_w)\bGamma &=& \bGamma \E\left[\left\{\frac{R}{\pi_0(\X)^2}-1\right\} \{\wt \h_0(\X, \wh Y) - \h_0(\X)\}^{\otimes 2}\right]\bGamma \\
&=& \bGamma\E\left[\frac{1-\pi_0(\X)}{\pi_0(\X)} \{\wt \h_0(\X, \wh Y) - \h_0(\X)\}^{\otimes 2}\right] \bGamma,
\ese
which is positive definite, as long as $\pi_0(\x)$ is bounded away from zero and one and that the ACP $\wh Y$ does depend on some other variable $\Z$, beyond the available feature $\X$ in both labeled and unlabeled data.
Clearly, this whole rationale is the same as that for the parameter $\bb$, as well as the following theoretical results including double robustness and semiparametric efficiency.

\section{Additional Numerical Results}\label{sec:addnum}

In this section, we provide more numerical results for various parameters of interest across different models. For each parameter, we demonstrate the efficiency improvement by incorporating ACPs under varying conditions, including varying labeled sample size ($n$), varying unlabeled sample size ($N$), varying signal strength from ACP ($\alpha$), and varying correlation between ACP and the covariates ($\zeta$).


Figures \ref{fig:gaussian_beta1}-\ref{fig:gaussian_beta2} present the results for various parameters under the linear model, while Figures \ref{fig:binomial_mean}-\ref{fig:binomial_beta2} illustrate the results for the logistic model. Similarly, Tables \ref{table:sim.n.gaussian}-\ref{table:sim.rho.gaussian} and Tables \ref{table:sim.n.logistic}-\ref{table:sim.rho.logistic} summarize the results for various parameters under the linear and logistic models, respectively.

In addition, we also present additional numerical results and comparisons with benchmark methods--PPI, PPI++, and RePPI--across various parameters of interest under different modeling scenarios. For each parameter, we evaluate the efficiency gains achieved by incorporating ACP under a range of conditions, including varying the labeled sample size ($n$), the unlabeled sample size ($N$), the signal strength of the ACP ($\alpha$), and the correlation between the ACP and covariates ($\zeta$). Table \ref{tab:simulation} reports the comparison results for estimating the outcome mean under linear models, while Table \ref{tab:simulation1} presents the corresponding results for coefficient estimation under linear models.


\begin{figure}[tbp]
\vskip 0.2in
\begin{center}
\centerline{\includegraphics[width=\columnwidth]{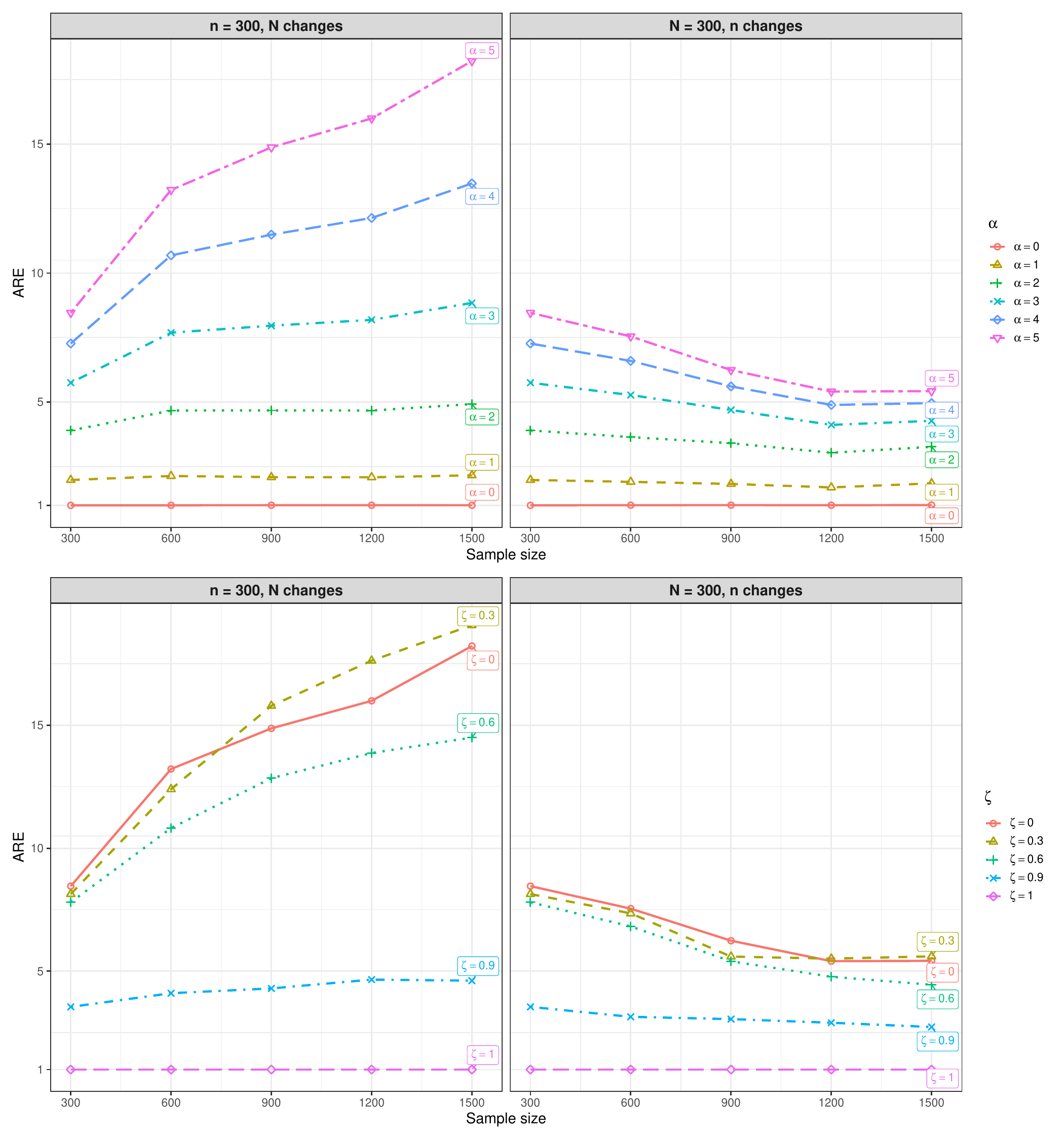}}
\caption{Variation of ARE for the estimation of $\xi_1$ under linear model setting with different sample sizes, signal strength, and correlation coefficient.}\label{fig:gaussian_beta1}
\end{center}
\vskip -0.2in
\end{figure}

\begin{figure}[tbp]
\vskip 0.2in
\begin{center}
\centerline{\includegraphics[width=\columnwidth]{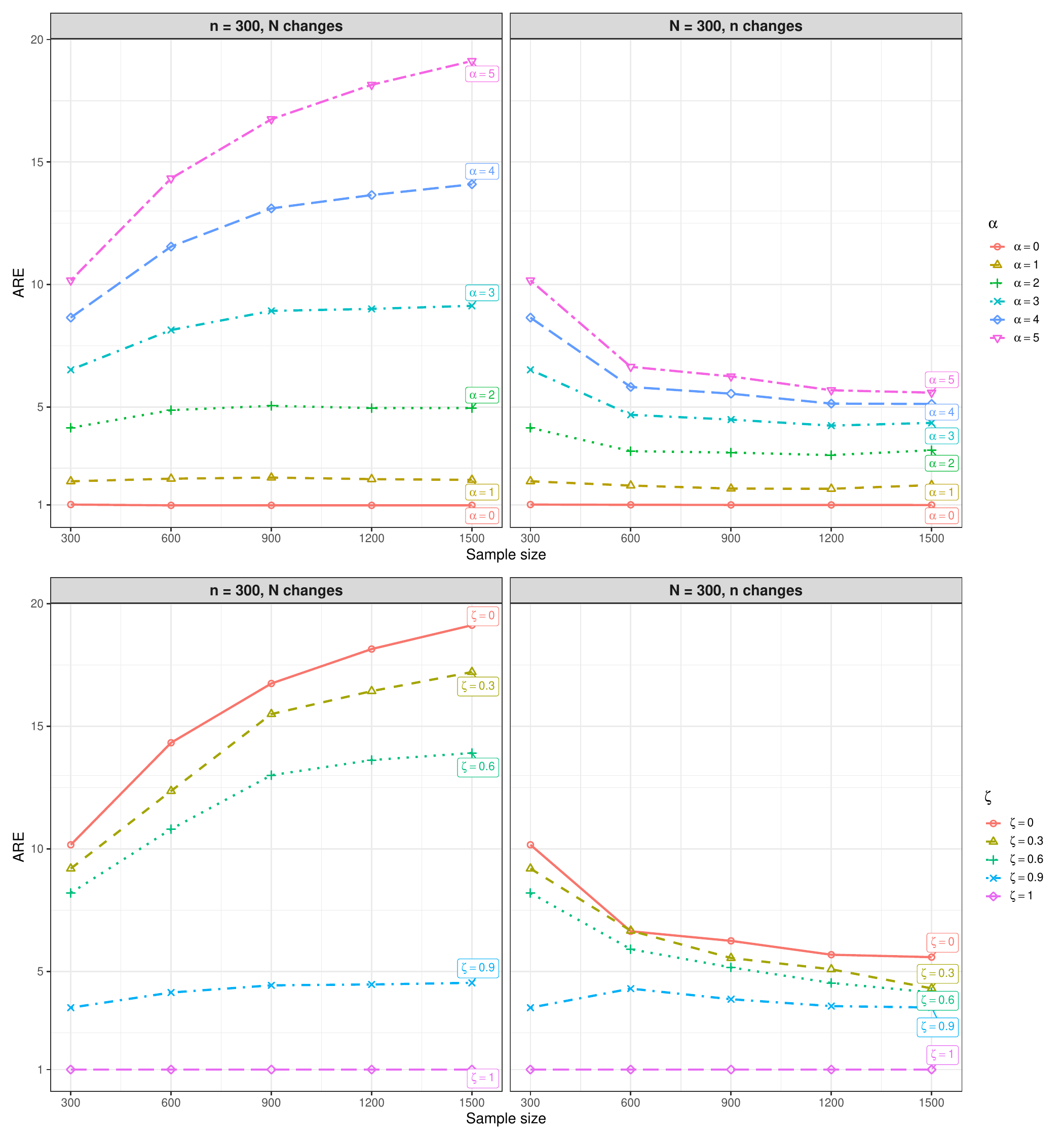}}
\caption{Variation of ARE for the estimation of $\xi_2$ under linear model setting with different sample sizes, signal strength, and correlation coefficient.}\label{fig:gaussian_beta2}
\end{center}
\vskip -0.2in
\end{figure}

\comment{
\begin{figure}[tbp]
\vskip 0.2in
\begin{center}
\centerline{\includegraphics[width=\columnwidth]{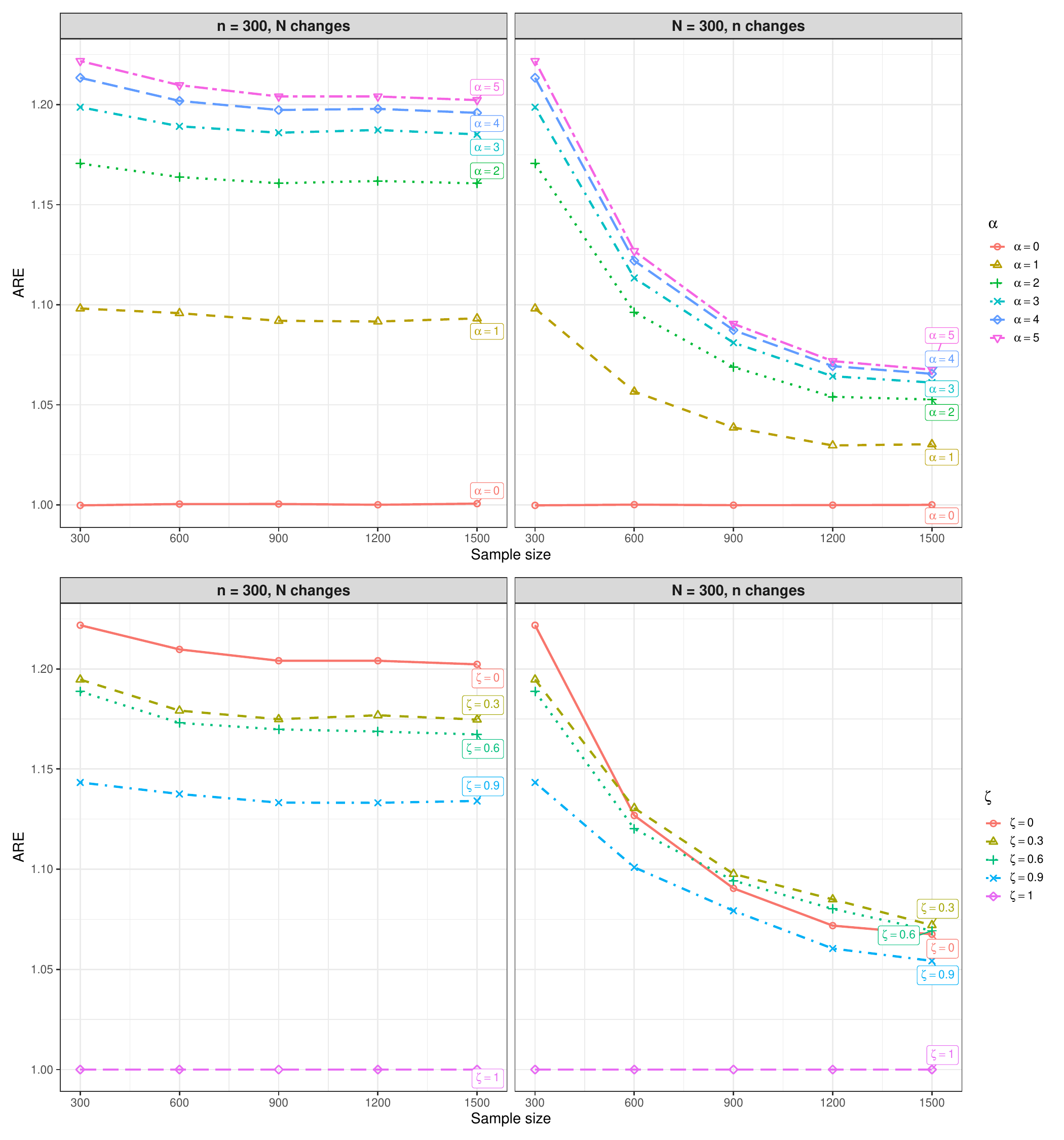}}
\caption{Variation of ARE for the estimation of $\Y$ under linear model setting with different sample sizes, signal strength, and correlation coefficient.}\label{fig:gaussian_pred}
\end{center}
\vskip -0.2in
\end{figure}
}

\begin{figure}[tbp]
\vskip 0.2in
\begin{center}
\centerline{\includegraphics[width=\columnwidth]{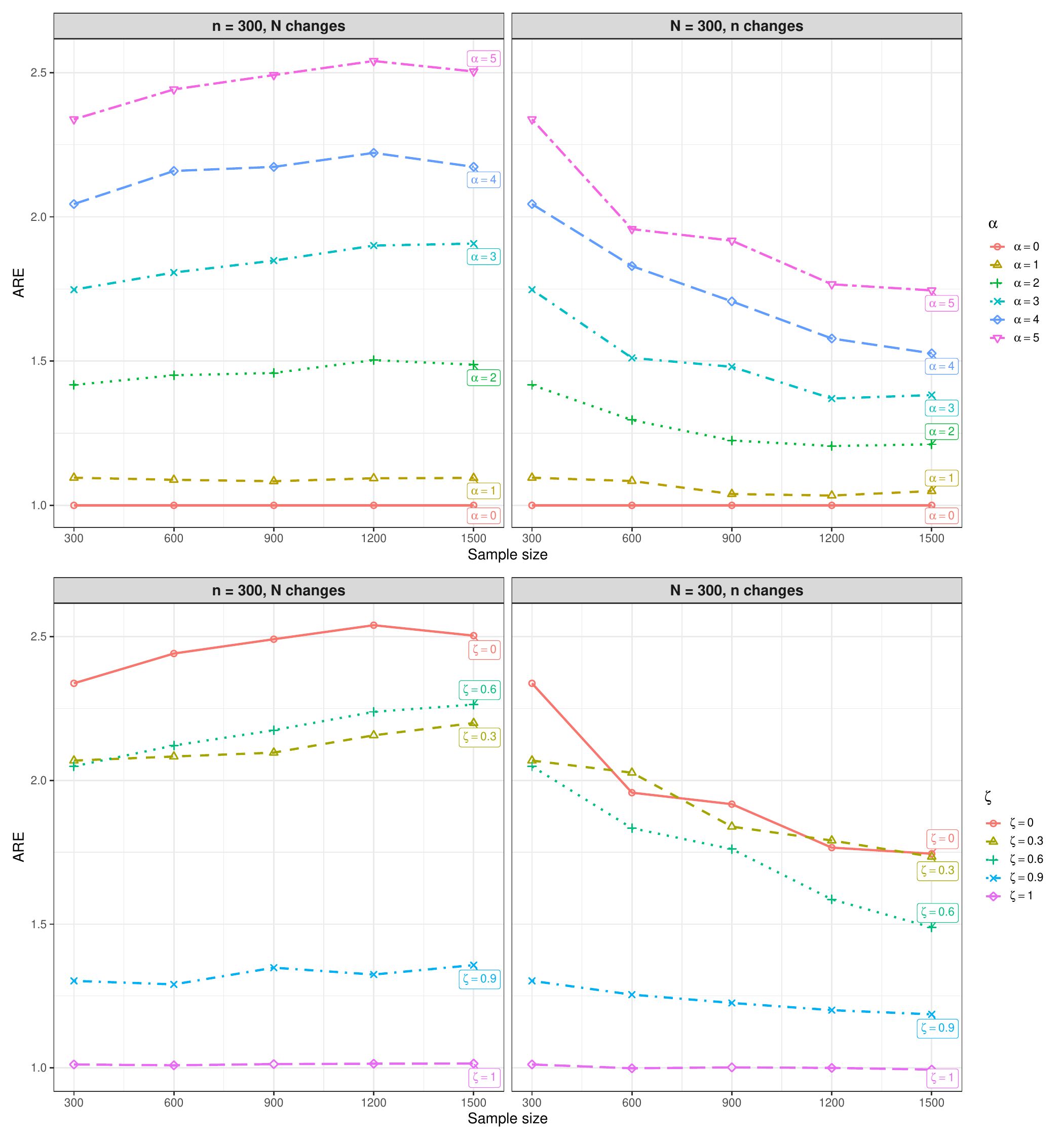}}
\caption{Variation of ARE for the estimation of $\E_q(Y)$ under logistic model setting with different sample sizes, signal strength, and correlation coefficient.}\label{fig:binomial_mean}
\end{center}
\vskip -0.2in
\end{figure}

\begin{figure}[tbp]
\vskip 0.2in
\begin{center}
\centerline{\includegraphics[width=\columnwidth]{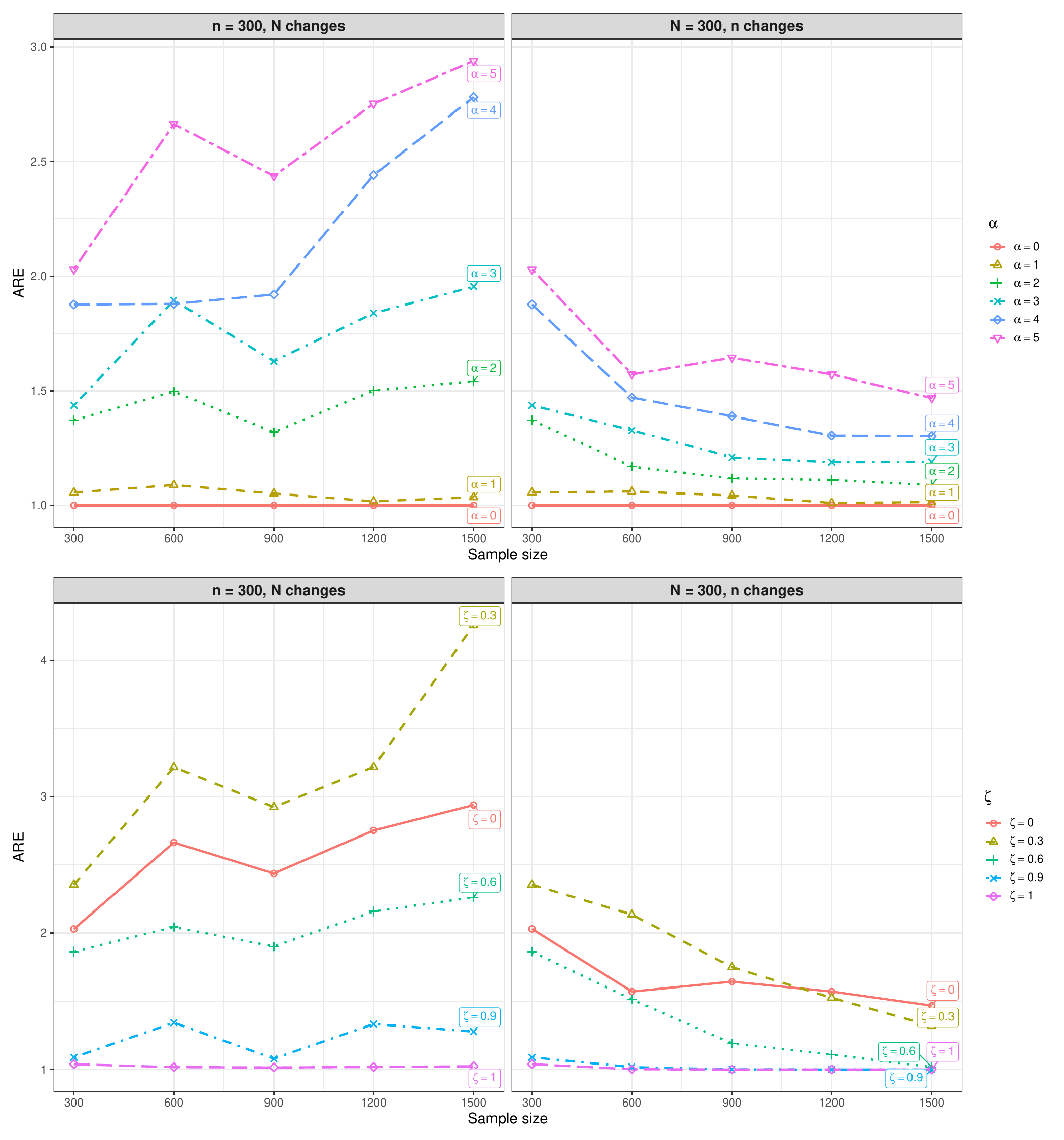}}
\caption{Variation of ARE for the estimation of $\xi_1$ under logistic model setting with different sample sizes, signal strength, and correlation coefficient.}\label{fig:binomial_beta1}
\label{icml-historical}
\end{center}
\vskip -0.2in
\end{figure}

\begin{figure}[tbp]
\vskip 0.2in
\begin{center}
\centerline{\includegraphics[width=\columnwidth]{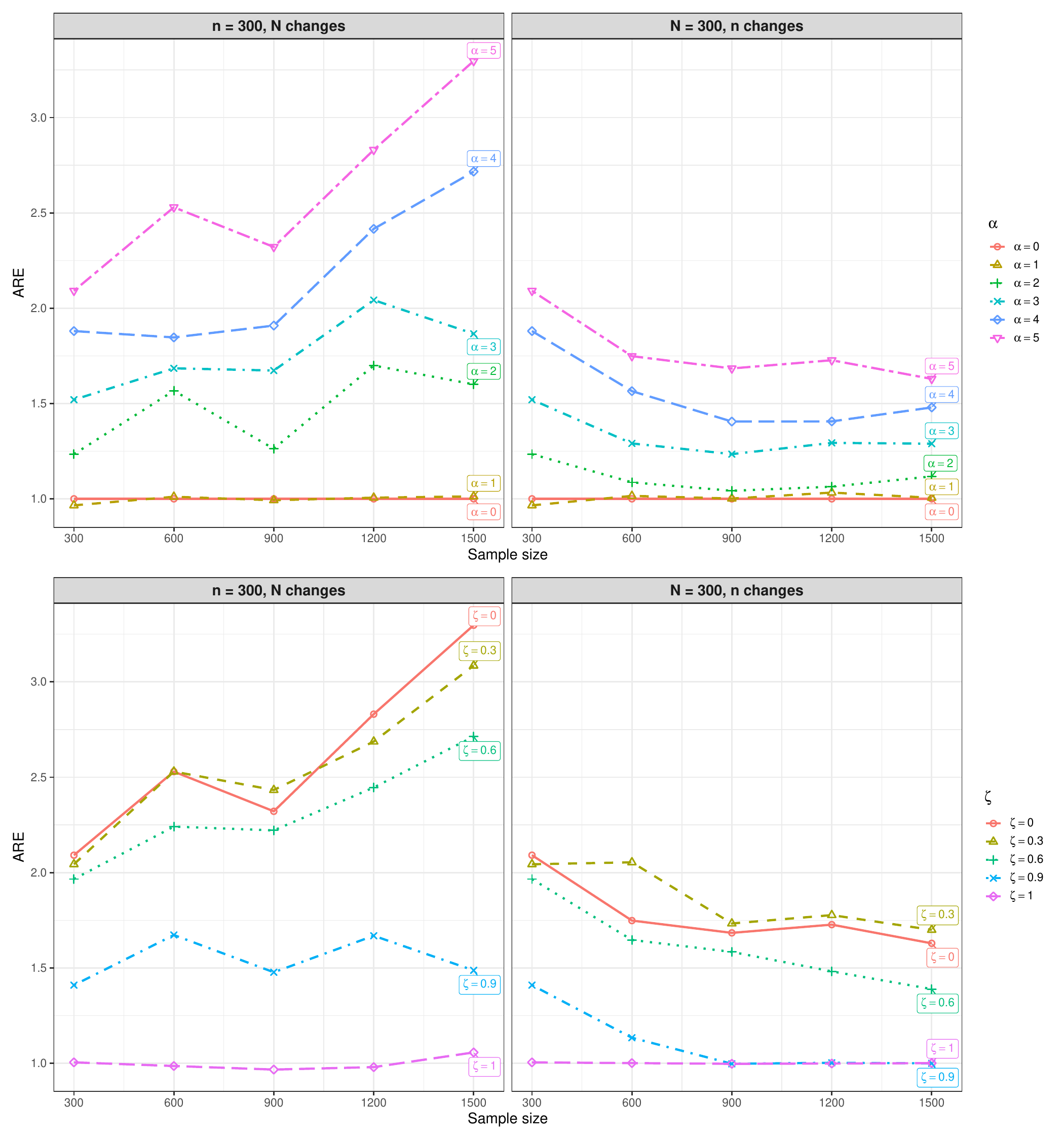}}
\caption{Variation of ARE for the estimation of $\xi_2$ under logistic model setting with different sample sizes, signal strength, and correlation coefficient.}\label{fig:binomial_beta2}
\label{icml-historical}
\end{center}
\vskip -0.2in
\end{figure}

\comment{
\begin{figure}[tbp]
\vskip 0.2in
\begin{center}
\centerline{\includegraphics[width=\columnwidth]{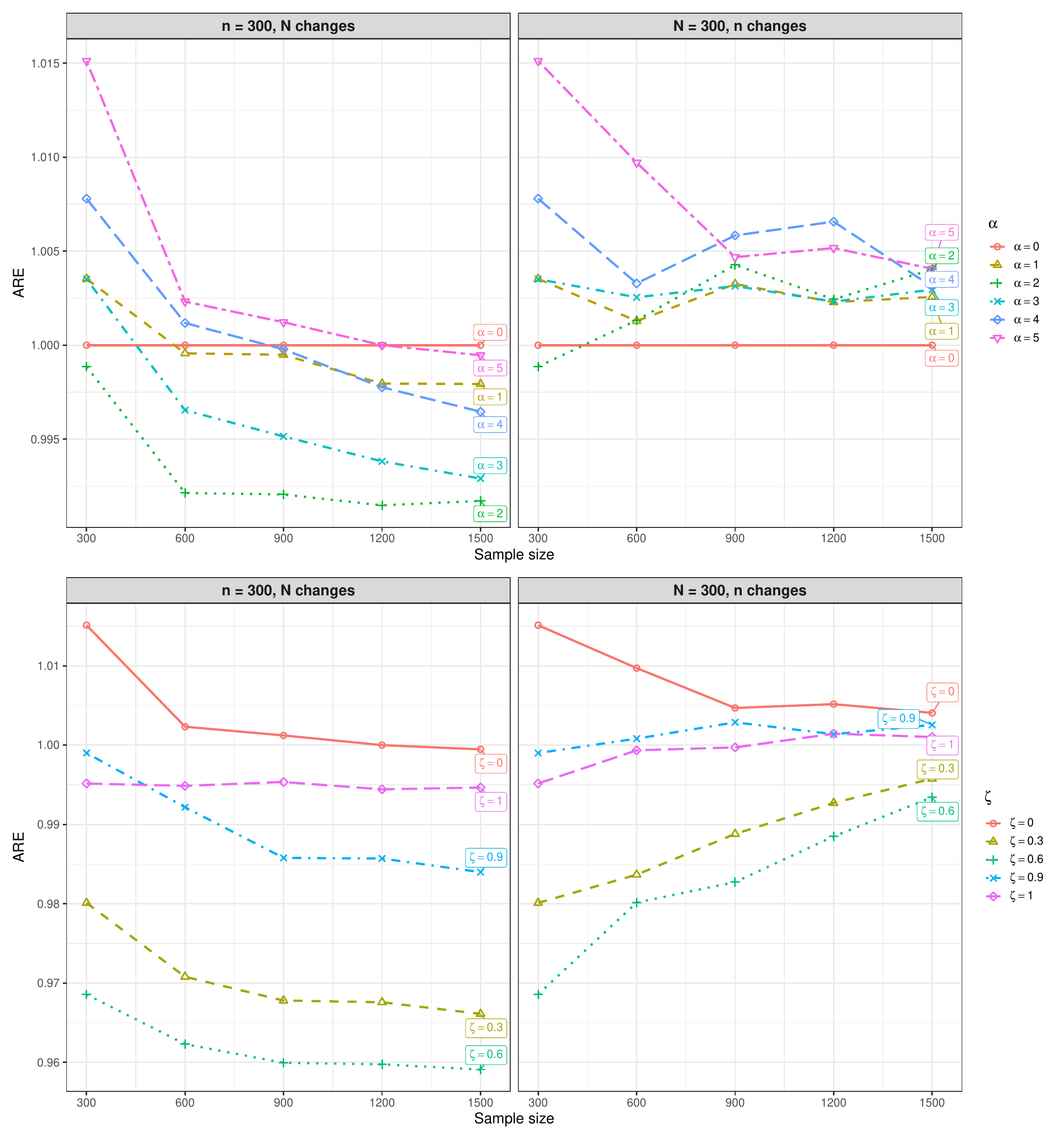}}
\caption{Variation of ARE for the estimation of $\Y$ under logistic model setting with different sample sizes, signal strength, and correlation coefficient.}\label{fig:binomial_pred}
\end{center}
\vskip -0.2in
\end{figure}
}


\begin{table}[h!]
\centering
\begin{tabular}{||c c c c c c c c||}
 \hline
 $n$ & Method & $\text{MSE}(\overline Y)$  & $\text{MSE}(\wh \xi_{1})$ & $\text{MSE}(\wh \xi_{2})$ & $\text{MSE}(\wh \xi_{3})$ & $\text{MSE}(\wh \xi_{4})$ & $\text{MSE}(\wh \xi_{5})$ \\ [0.5ex]
 \hline\hline
 300 & w.o. ACP & 0.84 & 0.82 & 0.86 & 1.00 & 0.93 & 0.93 \\ [1ex]
 300 & w. ACP & 0.08 & 0.10 & 0.08 & 0.10 & 0.09 & 0.10 \\ [1ex]

 600 & w.o. ACP & 0.83 & 0.80 & 0.80 & 0.97 & 0.92 & 0.90 \\ [1ex]
 600 & w. ACP & 0.06 & 0.06 & 0.06 & 0.07 & 0.06 & 0.06 \\ [1ex]

 900 & w.o. ACP & 0.82 & 0.80 & 0.79 & 0.95 & 0.93 & 0.90 \\ [1ex]
 900 & w. ACP & 0.05 & 0.05 & 0.05 & 0.06 & 0.05 & 0.06 \\ [1ex]

 1200 & w.o. ACP & 0.81 & 0.79 & 0.79 & 0.94 & 0.94 & 0.90 \\ [1ex]
 1200 & w. ACP & 0.05 & 0.05 & 0.04 & 0.05 & 0.05 & 0.05 \\ [1ex]

 1500 & w.o. ACP & 0.81 & 0.79 & 0.78 & 0.93 & 0.92 & 0.91 \\ [1ex]
 1500 & w. ACP & 0.05 & 0.04 & 0.04 & 0.05 & 0.04 & 0.04  \\ [1ex]
 \hline
\end{tabular}
\caption{Simulation results under different $n$ when $\alpha = 5$, $\zeta = 0$, $N = 300$ under the linear model setting.}
\label{table:sim.n.gaussian}
\end{table}

\begin{table}[h!]
\centering
\begin{tabular}{||c c c c c c c c||}
 \hline
 $N$ & Method & $\text{MSE}(\overline Y)$  & $\text{MSE}(\wh \xi_{1})$ & $\text{MSE}(\wh \xi_{2})$ & $\text{MSE}(\wh \xi_{3})$ & $\text{MSE}(\wh \xi_{4})$ & $\text{MSE}(\wh \xi_{5})$ \\ [0.5ex]
 \hline\hline
 300 & w.o. ACP & 0.84 & 0.82 & 0.86 & 1.00 & 0.93 & 0.93 \\ [1ex]
 300 & w. ACP & 0.08 & 0.10 & 0.08 & 0.10 & 0.09 & 0.10 \\ [1ex]

 600 & w.o. ACP & 0.42 & 0.55 & 0.45 & 0.57 & 0.52 & 0.54  \\ [1ex]
 600 & w. ACP & 0.07 & 0.07 & 0.07 & 0.07 & 0.07 & 0.08  \\ [1ex]

 900 & w.o. ACP & 0.29 & 0.38 & 0.36 & 0.43 & 0.39 & 0.33 \\ [1ex]
 900 & w. ACP & 0.05 & 0.06 & 0.06 & 0.06 & 0.06 & 0.06  \\ [1ex]

 1200 & w.o. ACP & 0.24 & 0.30 & 0.29 & 0.32 & 0.31 & 0.25 \\ [1ex]
 1200 & w. ACP & 0.05 & 0.06 & 0.05 & 0.05 & 0.06 & 0.06  \\ [1ex]

 1500 & w.o. ACP & 0.21 & 0.26 & 0.27 & 0.29 & 0.27 & 0.25 \\ [1ex]
 1500 & w. ACP & 0.04 & 0.05 & 0.05 & 0.05 & 0.05 & 0.05 \\ [1ex]
 \hline
\end{tabular}
\caption{Simulation results under different $N$ when $\alpha = 5$, $\zeta = 0$, $n = 300$ under the linear model setting.}
\label{table:sim.N.gaussian}
\end{table}

\begin{table}[h!]
\centering
\begin{tabular}{||c c c c c c c c||}
 \hline
 $\alpha$ & Method & $\text{MSE}(\overline Y)$  & $\text{MSE}(\wh \xi_{1})$ & $\text{MSE}(\wh \xi_{2})$ & $\text{MSE}(\wh \xi_{3})$ & $\text{MSE}(\wh \xi_{4})$ & $\text{MSE}(\wh \xi_{5})$ \\ [0.5ex]
 \hline\hline
 0 & w.o. ACP & 0.03 & 0.03 & 0.03 & 0.04 & 0.03 & 0.03 \\ [1ex]
 0 & w. ACP & 0.03 & 0.03 & 0.03 & 0.04 & 0.03 & 0.03 \\ [1ex]

 1 & w.o. ACP & 0.06 & 0.07 & 0.06 & 0.07 & 0.07 & 0.07 \\ [1ex]
 1 & w. ACP & 0.03 & 0.03 & 0.03 & 0.04 & 0.04 & 0.03 \\ [1ex]

 2 & w.o. ACP & 0.16 & 0.16 & 0.15 & 0.18 & 0.17 & 0.17 \\ [1ex]
 2 & w. ACP & 0.04 & 0.04 & 0.04 & 0.05 & 0.04 & 0.04 \\ [1ex]

 3 & w.o. ACP & 0.32 & 0.32 & 0.32 & 0.37 & 0.35 & 0.34 \\ [1ex]
 3 & w. ACP & 0.05 & 0.06 & 0.05 & 0.06 & 0.05 & 0.06 \\ [1ex]

 4 & w.o. ACP & 0.54 & 0.54 & 0.56 & 0.65 & 0.60 & 0.59 \\ [1ex]
 4 & w. ACP & 0.06 & 0.07 & 0.06 & 0.08 & 0.07 & 0.08 \\ [1ex]

 5 & w.o. ACP & 0.84 & 0.82 & 0.86 & 1.00 & 0.93 & 0.93 \\ [1ex]
 5 & w. ACP & 0.08 & 0.10 & 0.08 & 0.10 & 0.09 & 0.10 \\ [1ex]
 \hline
\end{tabular}
\caption{Simulation results under different $\alpha$ when $\zeta = 0$ and $n = N = 300$ under the linear model setting.}
\label{table:sim.alpha.gaussian}
\end{table}

\begin{table}[h!]
\centering
\begin{tabular}{||c c c c c c c c||}
 \hline
 $\zeta$ & Method & $\text{MSE}(\overline Y)$ & 
 $\text{MSE}(\wh \xi_{1})$ & $\text{MSE}(\wh \xi_{2})$ & $\text{MSE}(\wh \xi_{3})$ & $\text{MSE}(\wh \xi_{4})$ & $\text{MSE}(\wh \xi_{5})$ \\ [0.5ex]
 \hline\hline
 0 & w.o. ACP & 0.84 & 0.82 & 0.86 & 1.00 & 0.93 & 0.93 \\ [1ex]
 0 & w. ACP & 0.08 & 0.10 & 0.08 & 0.10 & 0.09 & 0.10 \\ [1ex]

 0.3 & w.o. ACP & 0.72 & 0.70 & 0.79 & 0.64 & 0.70 & 0.63 \\ [1ex]
 0.3 & w. ACP & 0.09 & 0.09 & 0.09 & 0.08 & 0.09 & 0.09 \\ [1ex]

 0.6 & w.o. ACP & 0.51 & 0.52 & 0.61 & 0.40 & 0.50 & 0.49 \\ [1ex]
 0.6 & w. ACP & 0.10 & 0.07 & 0.07 & 0.06 & 0.07 & 0.07 \\ [1ex]

 0.9 & w.o. ACP & 0.25 & 0.15 & 0.17 & 0.16 & 0.16 & 0.14 \\ [1ex]
 0.9 & w. ACP & 0.11 & 0.04 & 0.05 & 0.05 & 0.04 & 0.04 \\ [1ex]

 1 & w.o. ACP & 0.12 & 0.03 & 0.04 & 0.02 & 0.03 & 0.03 \\ [1ex]
 1 & w. ACP & 0.12 & 0.03 & 0.04 & 0.02 & 0.03 & 0.03 \\ [1ex]
 \hline
\end{tabular}
\caption{Simulation results under different $\zeta$ when $\alpha = 5$ and $n = N = 300$ under the linear model setting.}
\label{table:sim.rho.gaussian}
\end{table}

\begin{table}[h!]
\centering
\begin{tabular}{||c c c c c c c c||}
 \hline
 $n$ & Method & $\text{MSE}(\overline Y)$ & $\text{MSE}(\wh \xi_{1})$ & $\text{MSE}(\wh \xi_{2})$ & $\text{MSE}(\wh \xi_{3})$ & $\text{MSE}(\wh \xi_{4})$ & $\text{MSE}(\wh \xi_{5})$ \\ [0.5ex]
 \hline\hline
 300 & w.o. ACP & 0.01 & 0.24 & 0.24 & 0.17 & 0.23 & 0.21 \\ [1ex]
 300 & w. ACP & 0.00 & 0.12 & 0.11 & 0.09 & 0.10 & 0.10 \\ [1ex]

 600 & w.o. ACP & 0.01 & 0.35 & 0.34 & 0.25 & 0.25 & 0.30 \\ [1ex]
 600 & w. ACP & 0.00 & 0.13 & 0.14 & 0.10 & 0.10 & 0.12 \\ [1ex]

 900 & w.o. ACP & 0.01 & 0.32 & 0.31 & 0.25 & 0.25 & 0.30 \\ [1ex]
 900 & w. ACP & 0.00 & 0.13 & 0.14 & 0.11 & 0.10 & 0.12 \\ [1ex]

 1200 & w.o. ACP & 0.01 & 0.39 & 0.44 & 0.28 & 0.28 & 0.37 \\ [1ex]
 1200 & w. ACP & 0.00 & 0.14 & 0.15 & 0.11 & 0.11 & 0.13 \\ [1ex]

 1500 & w.o. ACP & 0.01 & 0.41 & 0.48 & 0.30 & 0.28 & 0.35 \\ [1ex]
 1500 & w. ACP & 0.00 & 0.14 & 0.14 & 0.11 & 0.11 & 0.13 \\ [1ex]
 \hline
\end{tabular}
\caption{Simulation results under different $n$ when $\alpha = 5$, $\zeta = 0$, $N = 300$ under the logistic model setting.}
\label{table:sim.n.logistic}
\end{table}

\begin{table}[h!]
\centering
\begin{tabular}{||c c c c c c c c||}
 \hline
 $N$ & Method & $\text{MSE}(\overline Y)$ & $\text{MSE}(\wh \xi_{1})$ & $\text{MSE}(\wh \xi_{2})$ & $\text{MSE}(\wh \xi_{3})$ & $\text{MSE}(\wh \xi_{4})$ & $\text{MSE}(\wh \xi_{5})$  \\ [0.5ex]
 \hline\hline
 300 & w.o. ACP & 0.01 & 0.24 & 0.24 & 0.17 & 0.23 & 0.21 \\ [1ex]
 300 & w. ACP & 0.00 & 0.12 & 0.11 & 0.09 & 0.10 & 0.10 \\ [1ex]

 600 & w.o. ACP & 0.00 & 0.10 & 0.11 & 0.08 & 0.10 & 0.09 \\ [1ex]
 600 & w. ACP & 0.00 & 0.06 & 0.06 & 0.05 & 0.06 & 0.06 \\ [1ex]

 900 & w.o. ACP & 0.00 & 0.07 & 0.07 & 0.06 & 0.07 & 0.06 \\ [1ex]
 900 & w. ACP & 0.00 & 0.04 & 0.04 & 0.03 & 0.04 & 0.04 \\ [1ex]

 1200 & w.o. ACP & 0.00 & 0.05 & 0.05 & 0.05 & 0.05 & 0.04 \\ [1ex]
 1200 & w. ACP & 0.00 & 0.03 & 0.03 & 0.03 & 0.03 & 0.03 \\ [1ex]

 1500 & w.o. ACP & 0.00 & 0.04 & 0.04 & 0.04 & 0.04 & 0.03 \\ [1ex]
 1500 & w. ACP & 0.00 & 0.02 & 0.02 & 0.02 & 0.02 & 0.02 \\ [1ex]
 \hline
\end{tabular}
\caption{Simulation results under different $N$ when $\alpha = 5$, $\zeta = 0$, $n = 300$ under the logistic model setting.}
\label{table:sim.N.logistic}
\end{table}

\begin{table}[h!]
\centering
\begin{tabular}{||c c c c c c c c||}
 \hline
 $\alpha$ & Method & $\text{MSE}(\overline Y)$ & 
 $\text{MSE}(\wh \xi_{1})$ & $\text{MSE}(\wh \xi_{2})$ & $\text{MSE}(\wh \xi_{3})$ & $\text{MSE}(\wh \xi_{4})$ & $\text{MSE}(\wh \xi_{5})$  \\ [0.5ex]
 \hline\hline
 0 & w.o. ACP & 0.00 & 0.12 & 0.10 & 0.12 & 0.09 & 0.11 \\ [1ex]
 0 & w. ACP & 0.00 & 0.12 & 0.10 & 0.12 & 0.09 & 0.11 \\ [1ex]

 1 & w.o. ACP &  0.00 & 0.12 & 0.11 & 0.14 & 0.12 & 0.10 \\ [1ex]
 1 & w. ACP &  0.00 & 0.12 & 0.11 & 0.13 & 0.11 & 0.10 \\ [1ex]

 2 & w.o. ACP &  0.00 & 0.18 & 0.17 & 0.16 & 0.17 & 0.17 \\ [1ex]
 2 & w. ACP & 0.00 & 0.13 & 0.14 & 0.13 & 0.12 & 0.12 \\ [1ex]

 3 & w.o. ACP & 0.00 & 0.21 & 0.19 & 0.17 & 0.20 & 0.20 \\ [1ex]
 3 & w. ACP & 0.00 & 0.14 & 0.13 & 0.12 & 0.12 & 0.11 \\ [1ex]

 4 & w.o. ACP & 0.01 & 0.23 & 0.22 & 0.16 & 0.22 & 0.20 \\ [1ex]
 4 & w. ACP & 0.00 & 0.12 & 0.12 & 0.10 & 0.11 & 0.11 \\ [1ex]

 5 & w.o. ACP & 0.01 & 0.24 & 0.24 & 0.17 & 0.23 & 0.21 \\ [1ex]
 5 & w. ACP & 0.00 & 0.12 & 0.11 & 0.09 & 0.10 & 0.10 \\ [1ex]
 \hline
\end{tabular}
\caption{Simulation results under different $\alpha$ when $\zeta = 0$ and $n = N = 300$ under the logistic model setting.}
\label{table:sim.alpha.logistic}
\end{table}

\begin{table}[h!]
\centering
\begin{tabular}{||c c c c c c c c||}
 \hline
 $\zeta$ & Method & $\text{MSE}(\overline Y)$ & 
 $\text{MSE}(\wh \xi_{1})$ & $\text{MSE}(\wh \xi_{2})$ & $\text{MSE}(\wh \xi_{3})$ & $\text{MSE}(\wh \xi_{4})$ & $\text{MSE}(\wh \xi_{5})$ \\ [0.5ex]
 \hline\hline
 0 & w.o. ACP & 0.01 & 0.24 & 0.24 & 0.17 & 0.23 & 0.21 \\ [1ex]
 0 & w. ACP & 0.00 & 0.12 & 0.11 & 0.09 & 0.10 & 0.10 \\ [1ex]

 0.3 & w.o. ACP & 0.00 & 0.25 & 0.23 & 0.19 & 0.19 & 0.22 \\ [1ex]
 0.3 & w. ACP & 0.00 & 0.11 & 0.11 & 0.09 & 0.09 & 0.10 \\ [1ex]

 0.6 & w.o. ACP & 0.00 & 0.25 & 0.21 & 0.19 & 0.22 & 0.20 \\ [1ex]
 0.6 & w. ACP & 0.00 & 0.13 & 0.10 & 0.10 & 0.10 & 0.09 \\ [1ex]

 0.9 & w.o. ACP & 0.00 & 0.70 & 0.15 & 0.13 & 0.14 & 0.12 \\ [1ex]
 0.9 & w. ACP & 0.00 & 0.65 & 0.11 & 0.12 & 0.13 & 0.11 \\ [1ex]

 1 & w.o. ACP & 0.00 & 7.38 & 0.15 & 0.17 & 0.17 & 0.16 \\ [1ex]
 1 & w. ACP & 0.00 & 7.11 & 0.15 & 0.16 & 0.17 & 0.16 \\ [1ex]
 \hline
\end{tabular}
\caption{Simulation results under different $\zeta$ when $\alpha = 5$ and $n = N = 300$ under the logistic model setting.}
\label{table:sim.rho.logistic}
\end{table}

\comment{
\begin{table}[h!]
\centering
\begin{tabular}{||c c c c c c||}
 \hline
 Variables &  & Non-Diabetes Vote & Diabetes Vote  & Total & P-value \\ [0.5ex]
 \hline\hline
 Age &  &  &  &  & 0.0024 \\
   & $<$13 & 173 & 39 & 212 &  \\ [1ex]
   & $\geq$13 & 55 & 30 & 85 &  \\ [1ex]
 Gender &  &  &  &  & 0.2161 \\
   & Male & 117 & 29 & 146 &  \\ [1ex]
   & Female & 111 & 40 & 151 &  \\ [1ex]
 Ethnicity &  &  &  &  & 0.8924 \\
   & Hispanic & 23 & 9 & 32 &  \\ [1ex]
   & Non-Hispanic Black & 90 & 25 & 115 &  \\ [1ex]
   & Non-Hispanic White & 91 & 28 & 119 &  \\ [1ex]
   & Other & 17 & 6 & 23 &  \\ [1ex]
   & Unknown & 7 & 1 & 8 &  \\ [1ex]
Smoking Status &  &  &  &  & 0.0466 \\
   & Current & 3 & 2 & 5 &  \\ [1ex]
   & Never & 163 & 57 & 220 &  \\ [1ex]
   & Unknown & 62 & 10 & 72 &  \\ [1ex]
Payer &  &  &  &  & $<$0.0001 \\
   & Medicaid or Charity & 171 & 38 & 209 &  \\ [1ex]
   & Medicare & 5 & 1 & 6 &  \\ [1ex]
   & No Insurance & 7 & 2 & 9 &  \\ [1ex]
   & Private & 38 & 11 & 49 &  \\ [1ex]
   & Self-Paid & 7 & 17 & 24 &  \\ [1ex]
Inpatient Visit Count &  &  &  &  & $<$0.0001 \\
   & 0 & 94 & 64 & 158 &  \\ [1ex]
   & 1 & 116 & 5 & 121 &  \\ [1ex]
   & $\geq$2 & 18 & 0 & 18 &  \\ [1ex]
Outpatient Visit Count &  &  &  &  & 0.3873 \\
   & 0 & 104 & 23 & 127 &  \\ [1ex]
   & 1 & 28 & 16 & 44 &  \\ [1ex]
   & $\geq$2 & 96 & 30 & 126 &  \\ [1ex]
 \hline
\end{tabular}
\caption{Demographic table for the real data analysis.}
\label{table:tableone}
\end{table}
}

\begin{table}[t]
    \caption{Comparison of mean squared error (MSE) for estimating \textbf{outcome mean} between proposed method and three alternatives: PPI, PPI++, RePPI, under the linear model setting. \textbf{Bolded values (smallest across all four methods)} indicate that the proposed method performs the best.}
    \label{tab:simulation}
    \vskip 0.1in
    \centering
    \begin{tabular}{lcccccc}
        \toprule
        Setting &Method & $N=300$ & $600$ & $900$ & $1200$ & $1500$ \\
        \midrule
        \multirow{4}{*}{\shortstack{$\alpha = 5$\\ $\zeta = 0$\\ $n = 300$}}&PPI & 2.8976 & 2.7768 & 2.6824 & 2.8998 & 2.9021 \\
        &PPI++ & 2.8689 & 2.7499 & 2.6573 & 2.8496 & 2.8497 \\
        &RePPI & 2.8639 & 2.7452 & 2.6528 & 3.8688 & 3.9909 \\
        &proposed & \textbf{0.0803} & \textbf{0.0756} & \textbf{0.0712} & \textbf{0.0481} & \textbf{0.0451} \\
        \midrule
        &Method & $n=300$ & $600$ & $900$ & $1200$ & $1500$ \\
        \midrule
        \multirow{4}{*}{\shortstack{$\alpha = 5$\\ $\zeta = 0$\\ $N = 300$}}&PPI & 2.8976 & 2.8811 & 2.8876 & 2.8897 & 2.9017 \\
        &PPI++ & 2.8689 & 2.8711 & 2.8829 & 2.8899 & 2.9048 \\
        &RePPI & 2.8639 & 0.9078 & 0.4444 & 0.2822 & 0.1898 \\
        &proposed & \textbf{0.0803} & \textbf{0.0651} & \textbf{0.0545} & \textbf{0.0484} & \textbf{0.0441} \\
        \midrule
        &Method & $\alpha=0$ & $1$ & $2$ & $3$ & $4$ \\
        \midrule
        \multirow{4}{*}{\shortstack{$\zeta = 0$\\ $n =300$\\ $ N = 300$}}&PPI & 2.8273 & 2.8364 & 2.8475 & 2.8588 & 2.8757  \\
        &PPI++ & 2.8465 & 2.8470 & 2.8492 & 2.8524 & 2.8593 \\
        &RePPI & 0.0583 & 0.1344 & 0.4725 & 1.0425 & 1.8403  \\
        &proposed & \textbf{0.0327} & \textbf{0.0338} & \textbf{0.0393} & \textbf{0.0492} & \textbf{0.0630}  \\
        \midrule
        &Method & $\zeta=0$ & $0.3$ & $0.6$ & $0.9$ & $1$ \\
        \midrule
        \multirow{4}{*}{\shortstack{$\alpha = 5$\\ $n =300$\\ $ N = 300$}}&PPI & 2.8976 & 5.7063 & 9.3864 & 13.8583 & 15.6780 \\
        &PPI++ & 2.8689 & 4.4729 & 6.2252 & 8.2971 & 9.0483 \\
        &RePPI & 2.8639 & 2.2030 & 1.6693 & 0.5678 & 0.1476 \\
        &proposed & \textbf{0.0803} & \textbf{0.0881} & \textbf{0.0990} & \textbf{0.1107} & \textbf{0.1231} \\
        \bottomrule
    \end{tabular}
\end{table}

\begin{table}[t]
    \caption{Comparison of mean squared error (MSE) for estimating \textbf{regression coefficients} between proposed method and three alternatives: PPI, PPI++, RePPI, under the linear model setting. \textbf{Bolded values (smallest across all four methods)} indicate that the proposed method performs the best.}
    \label{tab:simulation1}
    \vskip 0.1in
    \centering
    \begin{tabular}{lcccccc}
        \toprule
  $N$ & Method & $X_1$ & $$$X_2$ & $X_3$ & $X_4$ & $X_5$ \\
        \midrule
\multirow{4}{*}{300} & PPI & 0.1290 & 0.1136 & 0.1233 & 0.1158 & 0.1140  \\
 & PPI++ & 0.1086 & 0.1086 & 0.1178 & 0.0987 & 0.1033  \\
 & RePPI & 0.2463 & 0.2741 & 0.2713 & 0.2676 & 0.2642  \\
 & propoesd & \textbf{0.0975} & \textbf{0.0844} & \textbf{0.0977} & \textbf{0.0889} & \textbf{0.0989}  \\
\midrule
\multirow{4}{*}{600} & PPI & 0.1208 & 0.1072 & 0.1156 & 0.1102 & 0.1091  \\
 & PPI++ & 0.1035 & 0.1029 & 0.1113 & 0.0962 & 0.0994  \\
 & RePPI & 0.2381 & 0.2613 & 0.2587 & 0.2542 & 0.2506  \\
 & proposed & \textbf{0.0913} & \textbf{0.0792} & \textbf{0.0934} & \textbf{0.0851} & \textbf{0.0948}  \\
\midrule
\multirow{4}{*}{900} & PPI & 0.1137 & 0.1015 & 0.1092 & 0.1045 & 0.1033  \\
 & PPI++ & 0.0992 & 0.0985 & 0.1067 & 0.0923 & 0.0951  \\
 & RePPI & 0.2297 & 0.2498 & 0.2471 & 0.2436 & 0.2401  \\
 & proposed & \textbf{0.0867} & \textbf{0.0751} & \textbf{0.0889} & \textbf{0.0813} & \textbf{0.0912}  \\
\midrule
\multirow{4}{*}{1200} & PPI & 0.1219 & 0.1044 & 0.1166 & 0.1090 & 0.1084  \\
 & PPI++ & 0.1219 & 0.1044 & 0.1166 & 0.1090 & 0.1084  \\
 & RePPI & 0.3001 & 0.3396 & 0.3337 & 0.3255 & 0.3133  \\
 & proposed & \textbf{0.0493} & \textbf{0.0434} & \textbf{0.0548} & \textbf{0.0481} & \textbf{0.0483}  \\
\midrule
\multirow{4}{*}{1500} & PPI & 0.1204 & 0.1038 & 0.1158 & 0.1080 & 0.1069  \\
 & PPI++ & 0.1204 & 0.1038 & 0.1158 & 0.1080 & 0.1069  \\
 & RePPI & 0.3088 & 0.3465 & 0.3444 & 0.3370 & 0.3170  \\
 & proposed & \textbf{0.0431} & \textbf{0.0408} & \textbf{0.0486} & \textbf{0.0426} & \textbf{0.0417}  \\
\bottomrule
    \end{tabular}
\end{table}

\end{document}